\newtheorem{theorem}{Theorem}[section]
\newtheorem{lemma}[theorem]{Lemma}
\newtheorem{proposition}[theorem]{Proposition}
\newtheorem{corollary}[theorem]{Corollary}
\theoremstyle{definition}
\newtheorem{definition}[theorem]{Definition}
\newtheorem{example}[theorem]{Example}
\theoremstyle{remark}
\newtheorem{remark}[theorem]{Remark}
\numberwithin{equation}{section}
\def\bR{\begin{color}{red}} 
\def\bB{\begin{color}{blue}}
\def\bM{\begin{color}{magenta}}
\def\bC{\begin{color}{cyan}}
\def\bW{\begin{color}{white}}
\def\bBl{\begin{color}{black}}
\def\bG{\begin{color}{green}}
\def\bY{\begin{color}{yellow}}
\def\e{\end{color}}
\newcommand{\ket}[1]{|#1\rangle}
\newcommand{\one}{{\rm I}}
\newcommand{\two}{{\rm I}\hspace{-0.4mm}{\rm I}}
\newcommand{\four}{{\rm I}\hspace{-0.4mm}{\rm V}}
\begin{document}

\title[Spekkens's toy theory as a category]{Spekkens's toy theory as a category of processes}


\author{Bob Coecke}
\address{Oxford University, Department of Computer Science}
\email{coecke@cs.ox.ac.uk}
\thanks{This work was supported by EPSRC Advanced Research Fellowship EP/D072786/1 and by EU-FP6-FET STREP QICS}

\author{Bill Edwards}
\address{Oxford University, Department of Computer Science} 
\curraddr{Perimeter Institute for Theoretical Physics}
\email{wae28@hotmail.com}
\thanks{This work was supported by EPSRC PhD Plus funding at Oxford University Computing Laboratory and an FQXi Large Grant.}

\subjclass[2000]{81P10, 	18B10, 18D35}

\begin{abstract}
We provide two mathematical descriptions of Spekkens's toy qubit theory, an inductively one in terms of a small set of generators, as well as an explicit closed form description. It is a subcategory \textbf{MSpek} of the category of finite sets, relations and the cartesian product. States of maximal knowledge form a subcategory \textbf{Spek}. This establishes the consistency of the toy theory, which has previously only been constructed for at most four systems.  Our model also shows that the theory is closed under both parallel and sequential composition of operations (= symmetric monoidal structure), that it obeys map-state duality (= compact closure), and that states and effects are in bijective correspondence (= dagger structure). From the perspective of categorical quantum mechanics, this provides an interesting alternative model which enables us to describe many quantum phenomena in a discrete manner, and to which mathematical concepts such as basis structures, and complementarity thereof, still apply.
Hence, the framework of categorical quantum mechanics has delivered on its promise to encompass theories other than quantum theory.
\end{abstract}

\maketitle

\section{Introduction}

In 2007 Rob Spekkens proposed a toy theory \cite{Spekkens} with the aim of showing that many of the characteristic features of quantum mechanics could result from a restriction on our knowledge of the state of an essentially classical system. The theory describes a simple type of system which mimics many of the features of a quantum qubit.  The success of the toy theory in replicating characteristic quantum behaviour is, in one sense, quite puzzling, since the mathematical structures employed by the two theories are quite different. Quantum mechanics represents states of systems by vectors in a Hilbert space, while processes undergone by systems are represented by linear maps. In contrast, the toy theory represents states by subsets and processes by relations. This `incomparability' means that the mathematical origins of the similarities (and differences) between the two theories are not easy to pinpoint. 

In this paper we consider the toy theory from a new perspective - by looking at the `algebra' of how the processes of the theory combine. The mathematical structure formed by the composition of processes in any physical theory is a \emph{symmetric monoidal category} \cite{MacLane}, with the objects of the category representing systems, and the morphisms representing processes undergone by these systems - we term this the \emph{process category} of the theory. Work initiated by Abramsky and Coecke \cite{CatSem}, and continued by many other authors \cite{SellDCC, RedGreen, wosums} has investigated the structure of the process category of quantum mechanics. Mathematical structures of this category have been identified which correspond directly to key physical features of the theory. In principle, any theory whose processes form a category with these mathematical features should exhibit these quantum-like physical features. It will be gratifying then to note that the process category of the toy theory shares many of these features, thus `accounting for' the similarities which this theory has with quantum mechanics. 

The main aim of this paper is to characterise the process category of the toy theory. This turns out to be less straightforward than one might have hoped! In the case of quantum mechanics it is straightforward to see that the states of a system correspond exactly with the vectors of the Hilbert space describing that system, and that processes correspond exactly with linear maps; thus we can immediately conclude that the process category of quantum mechanics is \textbf{Hilb}, whose objects are Hilbert spaces and whose morphisms are linear maps. Such a quick and easy statement of the process category for the toy theory is not possible. This is due largely to the way that the valid states and processes of the theory are defined: this is via an inductive procedure where the valid states and transformations for a collection of $n-1$ systems must be known before those for a collection of $n$ systems can be deduced. 

There is also a certain degree of ambiguity in the way in which the theory was originally stated, and, since the toy theory was never fully constructed (until now), its consistency was not even guaranteed.   Another feature which was not addressed in the theory is \em compositional closure\em, that is: 
\em If we compose two valid processes, either in parallel or sequentially (when types match), do we again obtain a valid process? \em 
Obviously, this is a natural operational requirement, and as indicated above, this notion of compositional closure is the operational cornerstone to modelling physical processes in a  symmetric monoidal category. In this paper, it is exactly the compositional closure assumption which allows us to formulate the toy theory in terms of no more than a few generators, with clear operational meanings.

The structure of the paper is as follows. We begin with a very brief summary of the toy bit theory, and a discussion of the ambiguities in its definition. We then define a categories \textbf{Spek} and  \textbf{MSpek} which we claim is the process category for the toy theory. This is a sub-category of \textbf{Rel}, whose objects are sets and whose morphisms are relations. The next section is devoted to showing the general form of the relations which constitute the morphisms of \textbf{MSpek}. We then go on to argue that these relations are exactly those which describe the processes of the toy theory, thus demonstrating that \textbf{MSpek} is the process category of the toy theory. Finally we note some of the key categorical features of \textbf{MSpek} and link these to the characteristically quantum behaviour exhibited by the toy bit theory.

The category \textbf{Spek} was first proposed by us in \cite{CE}, and we provide some explicit pointers to useful information therein throughout this text.

\section{The toy theory}

For full details of Spekkens's toy bit theory, the reader is referred to the original paper \cite{Spekkens}. Here we provide a very brief summary of the key points. There is just one type of elementary system in the theory, which can exist in one of four states. We will denote the set of these four states, the \emph{ontic state space} by $\four = \{1,2,3,4\}$. Alternatively we can depict it graphically as:
\begin{equation}
\def\JPicScale{0.4}
\input{onticstatespace.pst}
\end{equation}
The ontic states are to be distinguished from the \emph{epistemic} states, which describe the extent of our knowledge about which ontic state the system occupies. For example we might know that the system is either in ontic state 1 or 2. We would depict such an epistemic state in the following fashion: 
\begin{equation}
\def\JPicScale{0.4}
\input{epistemicexample.pst}
\end{equation}
The epistemic state is intended to be the analogue of the quantum state. Note that an epistemic state is simply a subset of the ontic state space. Given a composite system of $n$ elementary systems, its ontic state space is simply the Cartesian product of the ontic state spaces of the composite systems, $\four^n$. Thus such a system has $4^n$ ontic states.  

The key premise of the theory is that our knowledge of the ontic state is restricted in a specific way so that only certain epistemic states are allowed. We refer the reader to the original paper for the full statement of this epistemic restriction, which Spekkens refers to as the \emph{knowledge balance principle}. It suffices here to say that, given an $n$-component system, an epistemic state is a $2^n, 2^{n+1},\dots$ or $2^{2n}$ element subset of the ontic state space, the $2^n$ case being the situation of maximal knowledge and the $2^{2n}$ case being the situation of total ignorance.  

For a single system this means that there are six epistemic states of maximal knowledge: 
\begin{equation}
\def\JPicScale{0.4}
\input{1epistemic.pst}
\end{equation}
and one of non-maximal knowledge:
\begin{equation}
\def\JPicScale{0.4}
\input{1epistemicmixed.pst}
\end{equation}

A consequence of the knowledge balance principle is that any measurement on a system inevitably results in a probabilistic disturbance of that system's ontic state. A measurement in the toy theory essentially corresponds to asking which of a collection of subsets of the ontic state space contains the ontic state. An example would be a measurement on a single system corresponding to the question "is the ontic state in the subset $\{1,2\}$ or the subset $\{3,4\}$?". If the initial epistemic state of the system was the subset $\{1,3\}$ then answering this question would allow us to pin down the ontic state precisely, to 1 if we get the first answer or 3 if we get the second, in violation of the knowledge balance principle. Thus we hypothesise that if, for example, we get the first answer to our question, then the ontic state undergoes a random disturbance, either remaining in state 1 or moving into state 2, with equal probability. The epistemic state following the measurement then would be $\{1,2\}$. The effect of such a probabilistic disturbance on the \emph{epistemic} state is best modelled by viewing it as a \emph{relation} on the ontic state space, defined by $1\sim\{1,2\}$, $2\sim\{1,2\}$, $3\sim\emptyset$, $4\sim\emptyset$, where the notation denotes that the element 1 in the domain relates to both the elements 1 and 2 in the codomain, the element 2 in the domain, while the elements 3 and 4 in the domain relate to no elements in the codomain. Below we will omit specification of those elements in the domain that relate to no elements in the codomain. Thus in general the transformations of the theory are described by \emph{relations}. 

In his original paper Spekkens derives the states and transformations allowed in the theory, apparently appealing only to the following three principles:
\begin{enumerate}
\item The epistemic state of any system must satisfy the knowledge balance principle globally i.e. it should be a $2^n, 2^{n+1},\dots$ or $2^{2n}$ element subset of the ontic state space. 
\item When considering a composite system, the `marginal' epistemic state of any subsystem should also satisfy the knowledge balance principle. By the marginal epistemic state we mean the following. Suppose the whole system has $n$ components, and we are interested in an $m$-component subsystem. The epistemic state of the whole system will be some set of $n$-tuples: to get the marginal state on the $m$-component system we simply delete the $n-m$ entries from each tuple which correspond to the subsystems which are not of interest to us. 
\item Applying a valid transformation to a valid state should result in a valid state. 
\end{enumerate}

The second principle is used, for example, to rule this out as an epistemic state for a two-component system: 
\begin{equation}
\def\JPicScale{0.4}
\input{ruledoutby2.pst}
\end{equation}
The third principle is used extensively. It is invoked to show that the transformations on a single element system constitute permutations on the set of ontic states, since any other function on $\four$ would lead to some valid epistemic state being transformed into an invalid state. 

Another illuminating example of the third principle in action is the elimination of this state as a valid epistemic state: 
\begin{equation}\label{ruledoutby3}
\def\JPicScale{0.4}
\input{ruledoutby3.pst}
\end{equation}
This state would be allowed by the first and second principles, but upon making a certain measurement on one of the systems, the resulting measurement disturbance would transform it into this state: 
\begin{equation}
\def\JPicScale{0.4}
\input{ruledoutby1.pst}
\end{equation}
which clearly fails to satisfy the first principle. 

The mode of definition of the theory, via the three principles stated earlier, raises some interesting issues. Firstly, this approach seems to be necessarily iterative. Compare it, for example, to how we would describe the form of valid states in quantum mechanics - in one line we can say that they are the normalised vectors of the system's state space. Secondly, do these rules actually uniquely define the toy theory? There does seem to be a problem with the third rule. When we use it to rule out the state in diagram \ref{ruledoutby3}, our argument is that we already know what the valid measurement disturbance transformations on a single elementary system are, and when we apply one of them to this state we obtain a state which clearly violates the knowledge balance principle. However, it is not clear that we could not have made the alternative choice - that this state should be valid, and therefore that the transformation which we had previously thought was valid could no longer be considered as such. It seems that considerations other than the three rules above come into deciding which should be the valid states of the theory, but it is nowhere clearly stated exactly what they are. 

The second point touched slightly on our final issue: is the theory as Spekkens presents it consistent? He derives valid states and transformations for systems with up to three elementary components. However, can we be sure that these states and transformations, when combined in more complex situations involving four or more elementary systems, won't yield a state which clearly violates the knowledge balance principle? Currently there seems to be no such proof of consistency. In fact, in the process of re-expressing the theory in categorical terms, we will develop such a proof. 

\section{Process categories of quantum-like theories}\label{secCategories}

We briefly review the key structural features of the process categories of quantum-like theories.  A physicist-friendly tutorial of the category theoretic preliminaries is in \cite{CatsII}.  A survey on the recent applications of these to quantum theory and quantum information is in \cite{ContPhys}.

\begin{definition}
\label{SMCdef}
A \emph{symmetric monoidal category} $(\mathcal{C}, I, -\otimes -)$ is a category equipped with the following extra structure: a bifunctor $-\otimes - : \mathcal{C} \times \mathcal{C} \rightarrow \mathcal{C}$; a \emph{unit object} $I$; and four natural isomorphisms, left and right unit: $\lambda_{A}: A \cong I\otimes A$, $\rho_{A}: A \cong A\otimes I$, associative: $\alpha_{A,B,C}: (A\otimes B)\otimes C \cong A\otimes (B\otimes C)$ and commutative: $\sigma_{A,B}: A\otimes B \cong B\otimes A$. 
Furthermore these objects and natural isomorphisms obey a series of \emph{coherence conditions} \cite{MacLane}.
\end{definition}
As argued in \cite{CatsII}, the process category of any physical theory is a SMC. The bifunctor $-\otimes -$ is interpreted as adjoining two systems to make a larger compound system. \textbf{Hilb} becomes a SMC with the tensor product as the bifunctor, and $\mathbb{C}$ as the unit object. \textbf{Rel} is a SMC with the Cartesian product as the bifunctor and the single element set as the unit object. 

There is a very useful graphical language for describing SMCs, due to Joyal and Street \cite{JSSMC}, which we will make extensive use of. It traces back to Penrose's earlier diagrammatic notation for abstract tensors  \cite{Penrose}.  

In this language we represent a morphism $f:A\rightarrow B$ by a box:
\begin{equation}
\def\JPicScale{0.4}
\input{genericmorphism.pst}
\end{equation}
$g\circ f$, the composition of morphisms $f:A \rightarrow B$ and $g:B\rightarrow C$ is depicted as:
\begin{equation}
\def\JPicScale{0.4}
\input{composingmorphisms.pst}
\end{equation}
The identity morphism $1_{A}$ is actually just written as a straight line --- this makes sense if you imagine composing it with another morphism.
Turning to the symmetric monoidal structure, a morphism $f: A\otimes B \rightarrow C\otimes D$ is depicted: 
\begin{equation}
\def\JPicScale{0.4}
\input{genericSMCmorphism.pst}
\end{equation}
and if $f:A\rightarrow B$ and $g:C \rightarrow D$ then $f \otimes g$ is depicted as:
\begin{equation}
\def\JPicScale{0.4}
\input{ftensorg.pst}
\end{equation}
The identity object $I$ is not actually depicted in the graphical language. Morphisms $\psi: I\rightarrow A$ and $\pi: A\rightarrow I$ are written as:
 \begin{equation}
 \def\JPicScale{0.4}
 \input{prepanddest.pst}
 \end{equation}
The associativity and left and right unit natural isomorphisms are also implicit in the language. The symmetry natural isomorphism is depicted as: 
 \begin{equation}
 \def\JPicScale{0.4}
 \input{symmetryiso.pst}
 \end{equation}
In fact the graphical language is more than just a useful tool; it enables one to derive all equational statements that follow from the axioms of a SMC: 
\begin{theorem}[\cite{JSSMC,SelingerSurvey}]
Two morphisms in a symmetric monoidal category can be shown to be equal using the axioms of a  SMC iff the diagrams corresponding to these morphisms in the graphical language are isomorphic, where by diagram isomorphism we mean that the boxes and wires of the first are in bijective correspondence with the second, preserving the connections between boxes and wires. 
\end{theorem}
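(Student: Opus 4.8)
The plan is to prove the biconditional by setting up a precise topological notion of diagram together with a two‑way translation between diagrams and morphism terms. The first observation is that the statement really concerns \emph{terms}: ``$f$ and $g$ can be shown equal using the axioms of an SMC'' refers to a derivation in the equational theory of SMCs between two given expressions, and ``the diagrams corresponding to $f$ and $g$'' are the diagrams of those expressions. Both notions are therefore properties of the expressions alone, i.e.\ of the corresponding arrows of the free symmetric monoidal category $\mathbf{F}$ on the objects and boxes that occur, so it suffices to prove the biconditional in $\mathbf{F}$.

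Next I would make the graphical side precise. A \emph{diagram} is a finite graph smoothly embedded in a rectangle $\mathbb{R}\times[0,1]$ whose edges carry object labels, whose interior vertices carry box labels with matching input/output types, whose boundary vertices lie on $\mathbb{R}\times\{0\}$ and $\mathbb{R}\times\{1\}$ and record the domain and codomain, and whose projection onto the vertical coordinate is strictly monotone along each edge (a ``progressive'' diagram, the restriction appropriate to the symmetric, rather than the compact‑closed, case). A diagram isomorphism preserves not only the box and wire labels and their incidences but also the left‑to‑right order of the legs at each box and along the boundary — equivalently, for progressive diagrams, it is realised by a progressive ambient isotopy; this order datum is what the statement intends by ``preserving the connections'' and is needed, e.g., so that a single crossing is not deemed isomorphic to a pair of parallel wires. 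The \textbf{soundness} half then proceeds by induction on the construction of a term from generators, identities, $\sigma$, $\circ$ and $\otimes$: one assigns to each term a diagram, well defined up to isomorphism, and verifies that each SMC axiom — the pentagon and triangle for the unit and associativity constraints, bifunctoriality of $\otimes$ including the interchange law, naturality of $\sigma$, the hexagon, and $\sigma\circ\sigma=\mathrm{id}$ — is sent to a pair of isomorphic diagrams. Since diagram isomorphism is a congruence for $\circ$ and $\otimes$, this yields: provably equal $\Rightarrow$ isomorphic diagrams.

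For the converse I would normalise. A general‑position (Morse‑theoretic) argument shows that any diagram can be isotoped so that the heights of all boxes and all wire‑crossings are pairwise distinct; slicing between consecutive heights then presents the diagram as a vertical composite of elementary layers, each a tensor of identity wires with a single box or a single crossing, and reading the layers off bottom‑to‑top produces a term whose diagram is the one we started with. The crux is that this readout is independent, \emph{modulo the SMC axioms}, of every choice made: any two admissible slicings of the same diagram are connected by finitely many local moves — exchanging the heights of two boxes lying in disjoint columns (an instance of the interchange law), sliding a box through a crossing (naturality of $\sigma$), cancelling two adjacent opposite crossings ($\sigma\circ\sigma=\mathrm{id}$), and the elementary relations among crossings (consequences of naturality, $\sigma\circ\sigma=\mathrm{id}$ and the hexagon) — and each such move replaces the term by a provably equal one. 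Finally, a diagram isomorphism in the sense above is realised by an ambient isotopy carrying one diagram to the other, so the two terms they present are provably equal.

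I expect the third paragraph to be the real work. The soundness direction is routine axiom‑chasing, but completeness needs two things done carefully: the general‑position normalisation of embedded progressive graphs, and — the genuine obstacle — a proof that the listed local moves are \emph{complete}, i.e.\ that every isotopy between progressive diagrams factors through them, with each move matched to a specific instance of an SMC axiom and no hidden appeal to a relation the axioms do not supply. The delicate bookkeeping concerns the symmetry crossings, where one must check that naturality of $\sigma$ together with $\sigma\circ\sigma=\mathrm{id}$ (and the hexagon for crossings of iterated tensors), but nothing stronger such as a nontrivial braiding, is precisely what the planar isotopies force. A complete treatment in this style is \cite{JSSMC}; a streamlined exposition is \cite{SelingerSurvey}.
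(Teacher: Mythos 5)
The paper does not prove this theorem: it is imported as a cited result of Joyal--Street and Selinger (\cite{JSSMC,SelingerSurvey}), so there is no in-paper argument to compare against, and the only fair benchmark is the proof in those sources. Your outline reproduces that standard proof: soundness by induction on terms, using that diagram isomorphism is a congruence for $\circ$ and $\otimes$ and checking each SMC axiom yields isomorphic diagrams; completeness by a general-position (Morse) normalisation slicing a progressive diagram into elementary layers, followed by the verification that any two slicings are connected by local moves each of which is an instance of an SMC axiom. Two remarks. First, you correctly identify where the substance lies, but your plan defers exactly that substance: the completeness of the listed local moves --- that every progressive isotopy between diagrams factors through interchange, naturality of $\sigma$, $\sigma\circ\sigma=\mathrm{id}$ and the hexagon --- \emph{is} the theorem, and ending with ``a complete treatment is in \cite{JSSMC}'' leaves the proposal at the same level as the paper's own citation rather than supplying a proof. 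Second, a point in your favour: you make explicit that ``diagram isomorphism'' must preserve the linear order of legs at each box and along the boundary (equivalently, be realised by a progressive ambient isotopy). The paper's informal gloss ``preserving the connections between boxes and wires'' undersells this; a bare graph isomorphism would identify a crossing with a pair of parallel wires and the theorem would be false as stated, so your sharpening is not pedantry but a necessary correction to the statement being proved.
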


The categories described by the graphical language are in fact \em strict SMCs, that is, those for which $\lambda_{A}$, $\rho_{A}$ and $\alpha_{A,B,C}$ are equalities.  Mac Lane's \em Strictification Theorem \cite[p.257]{MacLane}, which establishes an equivalence of any SMC with a strict one, enables one to apply the diagrammatic notation to any SMC. 

We will often refer to morphisms of type $\psi: I\rightarrow A$ as \emph{states}, since in a process category they represent the preparation of a system $A$ in a given state. In \textbf{Hilb} such morphisms are in bijection with the vectors of the Hilbert space $A$; in \textbf{Rel} they correspond to subsets of the set $A$. 

The process categories of \emph{quantum-like} theories possess a range of additional structures. 
\begin{definition}
A \emph{dagger category} is a category equipped with a contravariant involutive identity-on-objects functor $(-)^\dag$. A \emph{dagger symmetric monoidal category} ($\dag$-SMC) is a symmetric monoidal category with a dagger functor such that: $(A\otimes B)^\dag = A^\dag \otimes B^\dag$, 
$\lambda_{A}^{-1} = \lambda_{A}^{\dag}$, $\rho_{A}^{-1} = \rho_{A}^{\dag}$, 
$\sigma_{A,B}^{-1} = \sigma_{A,B}^{\dag}$, and $\alpha_{A,B,C}^{-1} = \alpha_{A,B,C}^{\dag}$. 
\end{definition}
\textbf{Hilb} is a $\dag$-SMC with the \emph{adjoint} playing the role of the dagger functor. \textbf{Rel} is a $\dag$-SMC with \emph{relational converse} as the dagger functor. 

\begin{definition}[\cite{KLComClo}]\label{CSdef}
In a SMC $\mathcal{C}$ a \emph{compact structure} on an object $A$ is a tuple 
\[
\{A,A^\ast, \eta_{A}: I\rightarrow A^{\ast}\otimes A, \epsilon_{A}: A\otimes A^{\ast} \rightarrow I\}\,, 
\]
where $A^\ast$ is a dual object to $A$ which may or may not be equal to $A$, and $\eta_A$ and $\epsilon_A$ satisfy the conditions: 
  \begin{equation} \label{CCcomdiag}
  \begin{diagram}
  A&\rTo^{\rho_{A}}&A\otimes I&\rTo^{1_{A}\otimes \eta_{A}}&A\otimes(A^{\ast}\otimes A)\\
  \dTo^{1_{A}}&&&&\dTo_{\alpha_{A,A^{\ast},A}}\\
  A&\lTo^{\lambda_{A}^{-1}}&I\otimes A&\lTo^{\epsilon_{A}\otimes 1_{A}}&(A\otimes A^{\ast})\otimes A\\
  \end{diagram}
  \end{equation}
  and the dual diagram for $A^{\ast}$. 
A \emph{compact closed} category $\mathcal{C}$ is a SMC in which all $A\in \textrm{Ob}(\mathcal{C})$ have compact structures. 
\end{definition}   

\textbf{Hilb} is a compact closed category, where for each object $A$ the \emph{Bell-state} ket $\sum_i \ket{i}\otimes\ket{i}$ (with $\{\ket{i}\}$ a basis for the Hilbert space $A$) familiar from quantum mechanics provides the morphism $\eta_A$, and the corresponding bra acts as $\epsilon_A$. Perhaps unsurprisingly, given this example, Abramsky and Coecke in \cite{CatSem} showed that this structure underlies the capacity of any quantum-like theory to exhibit information processing protocols such as \emph{teleportation} and \emph{entanglement swapping}. Furthermore Abramsky has shown \cite{SamsonNocloning} that any theory whose process category is compact closed will obey a generalised version of the no-cloning theorem. 

For the compact structures of interest in this paper $A^*$ is always equal to $A$ -- see proposition \ref{inducedCSprop} below.\footnote{An analysis of the coherence conditions for these \em self-dual \em compact structures is in \cite{SelingerSelfDual}.} Hence, from here on we won't distinguish these two objects. 

This simplifies the graphical language, which can be extended by introducing special elements to represent the morphisms of the compact structure $\eta_A$ and $\epsilon_A$: 
\begin{equation}\label{unitcounitdiag}
\def\JPicScale{0.4}
\input{unitcounit.pst}
\end{equation}
Equation \ref{CCcomdiag} and its dual are then depicted as: 
\begin{equation}\label{yanking}
\def\JPicScale{0.4}
\input{yanking.pst}
\end{equation}
This extension of the graphical language now renders it completely equivalent to the axioms of a compact closed category; for a detailed discussion we refer to \cite{SelingerSurvey}.
\begin{theorem}
Two morphisms in a compact closed category can be shown to be equal using the axioms of compact closure iff the diagrams corresponding to these morphisms in the graphical language are isotopic. 
\end{theorem}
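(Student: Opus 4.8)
The plan is to prove the two halves of the biconditional separately: \emph{soundness}, that isotopic diagrams denote morphisms provably equal from the compact closed axioms, and \emph{completeness}, that provably equal morphisms are denoted by isotopic diagrams.

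For soundness I would first pin down what counts as a diagram: a finite graph whose nodes are the generating boxes together with the cup/cap elements of Equation~\ref{unitcounitdiag}, whose edges are wires labelled by objects, drawn in the strip $\mathbb{R}\times[0,1]$ with boundary points on the two edges of the strip, and carrying the combinatorial data (a choice of ``height function'' exhibiting the diagram as a vertical composite of elementary slices) needed to read it as a morphism. One then lists a finite repertoire of \emph{elementary moves} relating any two isotopic diagrams: planar isotopies that slide a box along a wire or interchange the heights of two boxes on disjoint wires; isotopies that slide a box past a cup or a cap; creation and annihilation of a zig-zag in a wire against a cup and a cap; and the crossing moves for wires. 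The substance of soundness is then the verification that each move is an instance of, or a consequence of, the axioms: bifunctoriality of $-\otimes-$ and the interchange law give box-sliding and height-interchange; naturality and coherence of $\sigma$ handle the crossings; the triangle identities of Definition~\ref{CSdef}, Equation~\ref{CCcomdiag} and its dual --- pictured as the yanking equations, Equation~\ref{yanking} --- handle the zig-zags; and the unit and associativity coherence absorb the invisible structural isomorphisms. Since the symmetric monoidal fragment of this list is already covered by the earlier theorem for SMCs, only the cup/cap and yanking moves require fresh attention here.

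For completeness the strategy is to realize the graphical calculus as the \emph{free} compact closed category. Fix the signature $\Sigma$ comprising the object generators occurring in the two morphisms under comparison, the basic boxes, and their typings. I would construct a category $\mathbf{D}(\Sigma)$ whose objects are finite lists of generating objects and their formal duals and whose morphisms are isotopy classes of $\Sigma$-labelled diagrams, with composition given by stacking and $\otimes$ by side-by-side placement, and equip it with the evident compact closed structure, taking the cups and caps of Equation~\ref{unitcounitdiag} as $\eta$ and $\epsilon$. The key point is the universal property: every interpretation of $\Sigma$ in a compact closed category $\mathcal{C}$ extends uniquely to a strict compact closed functor $\mathbf{D}(\Sigma)\to\mathcal{C}$, which by construction sends a diagram to the morphism it denotes. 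Applying this to the generic interpretation of $\Sigma$ into $\mathbf{D}(\Sigma)$ itself (each generator to itself), two diagrams that denote equal morphisms under \emph{every} interpretation --- in particular this one --- must already be equal in $\mathbf{D}(\Sigma)$, i.e. isotopic; combined with soundness this gives the theorem.

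The main obstacle is hidden inside $\mathbf{D}(\Sigma)$: one must show that composition, $\otimes$, and the structural isomorphisms are well defined on isotopy classes and that the compact closed axioms hold there, and all of this reduces to a single topological lemma --- that two $\Sigma$-diagrams are ambient isotopic relative to the boundary if and only if they are connected by a finite sequence of the elementary moves above. This ``isotopy equals finite sequence of local moves'' statement is the diagrammatic analogue of the Reidemeister theorem for tangles, and it must be proved with care about the height-function data that makes a non-progressive diagram interpretable in the first place. Once that lemma is available, verifying the axioms and the universal property in $\mathbf{D}(\Sigma)$ is bookkeeping. In the present paper I would simply invoke \cite{JSSMC, SelingerSurvey} for these facts, since a fully self-contained argument would be disproportionately long relative to the role this theorem plays in what follows.
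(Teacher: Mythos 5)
The paper offers no proof of this theorem at all: it is quoted as a known coherence result, with the reader referred to \cite{JSSMC} and \cite{SelingerSurvey} for the details --- which is also where your sketch ultimately lands, since you defer the key topological lemma to the same sources. Your outline (soundness by checking a generating set of local moves against bifunctoriality, naturality of $\sigma$, and the yanking equations (\ref{yanking}); completeness by exhibiting diagrams-up-to-isotopy as the free compact closed category on the signature and invoking its universal property) is a faithful summary of the standard argument in those references, and you correctly isolate the genuinely hard step, namely that ambient isotopy rel boundary is generated by finitely many local moves compatible with the height-function data. Two caveats if you were to flesh this out. First, for \emph{symmetric} compact closed categories --- the case relevant to this paper --- the right notion is not planar isotopy but isotopy in four dimensions, equivalently isomorphism of the underlying labelled graphs; crossings can always be undone, so your repertoire of moves must include $\sigma_{A,B}^{-1}=\sigma_{B,A}$ and naturality of $\sigma$ in a form strong enough to realise arbitrary permutations of strands, not just the planar slidings you list. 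Second, the free compact closed category contains diagrams with closed loops (a cup composed directly with a cap), which denote scalars and are invisible in your description of $\mathbf{D}(\Sigma)$ as a graph of boxes with boundary wires; they must be retained as extra data (loop multiplicities per object) for the universal property to hold, and this is precisely the subtlety that complicates Kelly and Laplaza's original treatment \cite{KLComClo}. Neither point breaks the architecture of your argument, and citing the literature for the topological lemma is exactly what the paper itself does.
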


Compact closure is of particular importance to us since in any compact closed category there will be \emph{map-state duality}: a bijection between the hom-sets $\mathcal{C}(I,A\otimes B)$ and $\mathcal{C}(A, B)$ (in fact both these will further be in bijection with $\mathcal{C}(A\otimes B, I)$):
\begin{equation}\label{abstate2map}
\def\JPicScale{0.8}
\input{abstate2map.pst}
\end{equation}
and 
\begin{equation}\label{abmap2state}
\def\JPicScale{0.8}
\input{abmap2state.pst}
\end{equation}
That this is a bijection follows from equation \ref{yanking}. 
If we have a morphism with larger composite domain and codomain the number of hom-sets in bijection increases dramatically. For example the morphisms of the hom-sets $\mathcal{C}(A_1\otimes\dots\otimes A_m\otimes X, B_1\otimes\dots\otimes B_n)$ and $\mathcal{C}(A_1\otimes\dots\otimes A_m, B_1\otimes\dots\otimes B_n\otimes X)$ are in bijection: explicitly the conversion between morphisms from the two sets can be depicted as: 
\begin{equation}
\def\JPicScale{0.6}
\input{abbendinglines.pst}
\end{equation}
Clearly manoeuvres like this can convert any `input' line into an `output' line, by using the unit and co-unit morphisms to `bend lines around'. 
\begin{definition}
\label{DECdef}
A \emph{diagram equivalence class}\footnote{The terminology is inspired by the fact that the diagrams of all members of a class are essentially the same, only differing in the orientations of their input and output arrows.} (DEC) is a set of morphisms in a compact closed category which can be inter-converted by composition with the units and co-units of the factors of their domains and codomains. 
\end{definition}
The final structure of interest is the \emph{basis structure}, which can be seen as the `dagger-variant' of Carboni and Walters's Frobenius algebras \cite{CarboniWalters}. This structure is discussed at length in \cite{RedGreen}.

\begin{definition}\label{BSdef}
In a $\dag$-SMC a \emph{basis structure} $\Delta$ on an object $A$ is a commutative isometric dagger Frobenius comonoid $(A,\delta:A\rightarrow A\otimes A, \epsilon:A\rightarrow I)$. For more details on this definition see section 4 of \cite{RedGreen} where basis structures are referred to as `observable structures'. We represent the morphisms $\delta$ and $\epsilon$ graphically as: 
\begin{equation}
\def\JPicScale{0.6}
\input{deltaandepsilon.pst}
\end{equation}
\end{definition}

Basis structures are so named because in \textbf{Hilb} they are in bijection with orthonormal bases, via the correspondence: 
\begin{equation}
\delta:\mathcal{H}\rightarrow\mathcal{H}\otimes\mathcal{H}:: |i\rangle \mapsto |i\rangle\otimes |i\rangle \quad \quad \epsilon:\mathcal{H}\rightarrow \mathbb{C}::|i\rangle \mapsto 1
\end{equation}
where $\{|i\rangle\}_{i=1,\dots,n}$ is an orthonormal basis for $\mathcal{H}$. 
This is proved in \cite{CocPavVic}. 

\begin{proposition}[{\rm See e.g.~\cite{RedGreen}}]\label{sepBSprop}
Two basis structures $(A,\delta_A,\epsilon_A)$ and $(B, \delta_B, \epsilon_B)$ induce a third basis structure $(A\otimes B, \delta_{A\otimes B}, \epsilon_{A\otimes B})$, with: 
\begin{equation}
\delta_{A\otimes B}=(1_A\otimes\sigma_{A,B}\otimes 1_B)\circ (\delta_A\otimes \delta_{B}) \qquad\epsilon_{A\otimes {B}}=\epsilon_A\otimes\epsilon_{B}\,,
\end{equation}
\end{proposition}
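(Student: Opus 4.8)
The plan is to check directly that $(A\otimes B,\ \delta_{A\otimes B},\ \epsilon_{A\otimes B})$ satisfies every defining equation of a basis structure, working throughout in the graphical calculus for $\dag$-SMCs (sound and complete by the coherence theorems quoted above). Concretely, I would verify in turn: the counit law $(\epsilon_{A\otimes B}\otimes 1)\circ\delta_{A\otimes B}=1=(1\otimes\epsilon_{A\otimes B})\circ\delta_{A\otimes B}$ (up to unit isomorphisms), coassociativity of $\delta_{A\otimes B}$, cocommutativity $\sigma\circ\delta_{A\otimes B}=\delta_{A\otimes B}$, the dagger Frobenius identity relating $\delta_{A\otimes B}$ and $\delta_{A\otimes B}^\dag$, and the isometry (``specialness'') condition $\delta_{A\otimes B}^\dag\circ\delta_{A\otimes B}=1_{A\otimes B}$.

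First I would record, using the $\dag$-SMC axioms $(f\otimes g)^\dag=f^\dag\otimes g^\dag$, functoriality of $\otimes$, and $\sigma_{A,B}^\dag=\sigma_{A,B}^{-1}=\sigma_{B,A}$, that
\[
\delta_{A\otimes B}^\dag=(\delta_A^\dag\otimes\delta_B^\dag)\circ(1_A\otimes\sigma_{B,A}\otimes 1_B),\qquad \epsilon_{A\otimes B}^\dag=\epsilon_A^\dag\otimes\epsilon_B^\dag.
\]
Thus, pictorially, $\delta_{A\otimes B}$ and $\delta_{A\otimes B}^\dag$ are just the $A$- and $B$-nodes drawn side by side, with their strands interleaved by a single symmetry, and the only interaction between the $A$-strands and the $B$-strands is through these symmetry wires. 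The strategy for each axiom is therefore the same: draw the relevant diagram for $A\otimes B$, then use naturality of $\sigma$ together with the hexagon coherence conditions to slide all symmetry wires past the branching/merging nodes to one side, or to cancel adjacent inverse swaps, until the diagram factors as (the $A$-side of the axiom) $\otimes$ (the $B$-side of the axiom) pre- and post-composed with symmetries; the axioms for $(A,\delta_A,\epsilon_A)$ and $(B,\delta_B,\epsilon_B)$ then close the argument, and the outer symmetries reassemble into the required identity or into the $A\otimes B$ symmetry. The counit and cocommutativity laws come out immediately from this; coassociativity and isometry each require one round of sliding swaps through a node; the Frobenius identity is the most laborious, since there the $A$-part and the $B$-part are linked by two symmetries that must be reorganised simultaneously.

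I expect the main obstacle to be exactly this symmetry bookkeeping: one must be sure that the swap introduced in $\delta_{A\otimes B}$ and the swap introduced in $\delta_{A\otimes B}^\dag$ recombine correctly upon composition, leaving no residual braiding. Passing to a strict SMC via Mac Lane's strictification (as invoked in the excerpt) and appealing to the naturality hexagons makes this routine but fiddly; the cleanest write-up is to carry it out entirely in pictures and simply observe that the two interleaving symmetries are mutually inverse wherever they meet. A shortcut that avoids most of the casework is to use the ``spider'' presentation of a commutative isometric dagger Frobenius comonoid (a coherent family of morphisms $C^{\otimes m}\to C^{\otimes n}$ that fuse on composition): one checks that the interleaved spiders on $A\otimes B$ again fuse in this way, which reduces to the fusion laws for the $A$- and $B$-spiders plus the same symmetry manipulation. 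Either route establishes that the induced triple is a basis structure, agreeing in \textbf{Hilb} with the product basis $\{|i\rangle\otimes|j\rangle\}$ and in \textbf{Rel} with the product of the two given bases.
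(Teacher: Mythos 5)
The paper does not prove this proposition at all --- it simply cites \cite{RedGreen} --- so there is nothing to compare against except the standard literature argument, which is exactly the componentwise graphical verification you outline. Your proposal is correct: the dagger computation $\delta_{A\otimes B}^\dag=(\delta_A^\dag\otimes\delta_B^\dag)\circ(1_A\otimes\sigma_{B,A}\otimes 1_B)$ is right, the isometry condition indeed reduces to the cancellation $\sigma_{B,A}\circ\sigma_{A,B}=1$, and the remaining axioms follow by naturality of the symmetry as you describe, so your write-up would serve as a complete proof.
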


\begin{proposition}[{\rm See e.g.~\cite{wosums}}]\label{inducedCSprop}
Any basis structure induces a self-dual dagger compact structure, with $A=A^\ast$ and $\eta_A = \delta\circ\epsilon^\dag$. 
\end{proposition}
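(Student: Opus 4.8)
The plan is to first complete the data of the claimed compact structure by setting $\epsilon_A := \eta_A^\dag = \epsilon\circ\delta^\dag : A\otimes A\to I$, taking $A^\ast = A$. With this choice the dagger-compatibility built into the notion of a $\dag$-compact structure is automatic ($\epsilon_A$ \emph{is} $\eta_A^\dag$), so the real content is to verify the two triangle (``yanking'') equations of Definition \ref{CSdef}, namely equation \ref{CCcomdiag} and its dual. Throughout I would work in a strict $\dag$-SMC, which is harmless by the Strictification Theorem already invoked above, so that the unitors $\lambda,\rho$ and the associator $\alpha$ disappear and the graphical calculus of compact closed categories applies verbatim.

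For the first yanking equation, I would expand $\eta_A$ and $\epsilon_A$ in terms of $\delta,\epsilon$ and their daggers and compute
\[
(\epsilon_A\otimes 1_A)\circ(1_A\otimes\eta_A)=(\epsilon\otimes 1_A)\circ(\delta^\dag\otimes 1_A)\circ(1_A\otimes\delta)\circ(1_A\otimes\epsilon^\dag)\,.
\]
The heart of the argument is the Frobenius identity for the Frobenius comonoid $(A,\delta,\epsilon)$, which gives $(\delta^\dag\otimes 1_A)\circ(1_A\otimes\delta)=\delta\circ\delta^\dag$ (graphically, the ``$Z$''-shaped diagram straightens into an ``hourglass''). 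Substituting this collapses the expression to $(\epsilon\otimes 1_A)\circ\delta\circ\bigl(\delta^\dag\circ(1_A\otimes\epsilon^\dag)\bigr)$. Now $\delta^\dag\circ(1_A\otimes\epsilon^\dag)=1_A$ is just the dagger of the right counit law $(1_A\otimes\epsilon)\circ\delta=1_A$, and finally $(\epsilon\otimes 1_A)\circ\delta=1_A$ is the left counit law, leaving $1_A$ as required.

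The second, dual, yanking equation follows from the mirror-image computation, using the other form of the Frobenius identity, $(1_A\otimes\delta^\dag)\circ(\delta\otimes 1_A)=\delta\circ\delta^\dag$, together with the two counit laws again. Commutativity of $\delta$ then guarantees that the resulting $\eta_A,\epsilon_A$ are symmetric, i.e.\ $\sigma_{A,A}\circ\eta_A=\eta_A$, which is exactly what is needed for the pair to constitute a genuine \emph{dagger} compact structure (compatible with $\sigma$) rather than merely a one-sided duality. It is worth noting that isometry of $\delta$ plays no role in any of this: only commutativity, counitality, and the Frobenius law are used.

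I do not anticipate a serious obstacle here; the only thing requiring care is the bookkeeping of which form of the Frobenius identity and which counit law is applied at each step, and the suppressed placement of the coherence isomorphisms. Both are handled cleanly by passing to the graphical calculus, where the whole proof reduces to the observation that an $\eta_A/\epsilon_A$ pair drawn, respectively, as a dot with two outgoing legs and two incoming legs running into a dot can be slid straight by one Frobenius move followed by two counit moves.
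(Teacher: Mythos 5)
Your proof is correct: the paper itself gives no argument for this proposition (it simply defers to the cited reference \cite{wosums}), and your derivation via the Frobenius law $(\delta^\dag\otimes 1_A)\circ(1_A\otimes\delta)=\delta\circ\delta^\dag$ plus the two counit laws, together with the remark that cocommutativity yields $\sigma_{A,A}\circ\eta_A=\eta_A$ and hence genuine dagger compactness, is exactly the standard proof found there. Your observation that isometry of $\delta$ is not needed is also accurate.
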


Basis structures are identified in \cite{RedGreen} as a key structure underlying a variety of quantum-like features, for example the existence of incompatible observables, and information processing tasks such as the quantum fourier transform. 

\section{The categories Spek and MSpek}


In the section we give the definitions of two key categories. The first definition is a convenient stepping stone to the second: 

\begin{definition}
 \label{defSpek}
The category \textbf{Spek} is a subcategory of \textbf{Rel}, defined inductively, as follows: 
\begin{itemize}
  \item The objects of \textbf{Spek} are the single-element set $\one=\{\ast\}$, the four element set $\four := \{1,2,3,4\}$, and its $n$-fold Cartesian products $\four^n$. 
  \item The morphisms of \textbf{Spek} are all those relations generated by composition, Cartesian product and relational converse from the following generating relations:
  \begin{enumerate}
    \item All permutations $\{\sigma_i : \four\rightarrow\four\}$ of the four element set, represented diagrammatically by: 
\begin{equation}
\def\JPicScale{0.6}
\input{spekperm.pst}
\end{equation}
    There are 24 such permutations and they form a group, $S_{4}$. 
    \item A relation $\delta_{\textbf{Spek}}: \four \rightarrow \four\times \four$ defined by:
    \[\begin{array}{l}
    1\sim \{(1,1), (2,2)\} \\ 2\sim \{(1,2), (2,1)\} \\ 3\sim \{(3,3), (4,4)\} \\ 4\sim \{(3,4), (4,3)\}\,;
    \end{array}\]
    represented diagrammatically by: 
\begin{equation}
\def\JPicScale{0.6}
\input{spekdelta.pst}
\end{equation}
    \item a relation $\epsilon_{\textbf{Spek}}: \four \rightarrow \one$ defined by $\{1,3\}\sim * $
   and represented diagrammatically by: 
\begin{equation}
\def\JPicScale{0.6}
\input{spekepsilon.pst}
\end{equation}
\item the relevant unit, associativity and symmetry natural isomorphisms.
\end{enumerate}
\end{itemize}
\end{definition}

\begin{proposition}{\rm\cite{CE}}\label{Spek:basisstruc}
$(\four,\delta_{\textbf{Spek}},  \epsilon_{\textbf{Spek}})$ is a basis structure in \textbf{FRel} and hence also in \textbf{Spek}. 
\end{proposition}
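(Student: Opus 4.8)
The plan is to verify directly that the triple $(\four,\delta_{\textbf{Spek}},\epsilon_{\textbf{Spek}})$ satisfies each of the axioms in Definition~\ref{BSdef}: that it is a comonoid (coassociativity and counit laws), that it is commutative, that it is a dagger Frobenius comonoid, and that it is isometric. All of these are equational statements about morphisms in \textbf{FRel}, so each reduces to a finite comparison of relations, which can be checked by tracking where each of the four elements $1,2,3,4$ goes. Once it holds in \textbf{FRel}, it holds in \textbf{Spek} since $\delta_{\textbf{Spek}}$ and $\epsilon_{\textbf{Spek}}$ are morphisms of \textbf{Spek} by definition and \textbf{Spek} is a subcategory closed under the relevant operations.

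First I would record the relational converses of the two generators. Since $\delta_{\textbf{Spek}}$ relates $1\sim\{(1,1),(2,2)\}$, etc., its converse $\delta_{\textbf{Spek}}^\dag:\four\times\four\to\four$ sends $(1,1),(2,2)\mapsto 1$, $(1,2),(2,1)\mapsto 2$, $(3,3),(4,4)\mapsto 3$, $(3,4),(4,3)\mapsto 4$, and every other pair to nothing; this is essentially a ``pairwise XOR within each block $\{1,2\}$ and $\{3,4\}$'' operation. Similarly $\epsilon_{\textbf{Spek}}^\dag:\one\to\four$ is $*\sim\{1,3\}$. Then I would check the counit law $(\epsilon_{\textbf{Spek}}\otimes 1_{\four})\circ\delta_{\textbf{Spek}}=1_{\four}=(1_{\four}\otimes\epsilon_{\textbf{Spek}})\circ\delta_{\textbf{Spek}}$: tracing, $1\mapsto\{(1,1),(2,2)\}\mapsto\{1\}$ (keeping only pairs whose first entry lies in $\{1,3\}$, namely $(1,1)$, then projecting) — good — and similarly for $2,3,4$; coassociativity $(1_{\four}\otimes\delta_{\textbf{Spek}})\circ\delta_{\textbf{Spek}}=(\delta_{\textbf{Spek}}\otimes 1_{\four})\circ\delta_{\textbf{Spek}}$ is the observation that both sides send $1$ to $\{(1,1,1),(1,2,2),(2,1,2),(2,2,1)\}$ and likewise for the other generators; commutativity $\sigma_{\four,\four}\circ\delta_{\textbf{Spek}}=\delta_{\textbf{Spek}}$ is immediate from symmetry of each image set. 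For the Frobenius law one checks $(1_{\four}\otimes\delta_{\textbf{Spek}})\circ(\delta_{\textbf{Spek}}^\dag\otimes 1_{\four}) = \delta_{\textbf{Spek}}\circ\delta_{\textbf{Spek}}^\dag = (\delta_{\textbf{Spek}}^\dag\otimes 1_{\four})\circ(1_{\four}\otimes\delta_{\textbf{Spek}})$, again a finite trace; and isometry is $\delta_{\textbf{Spek}}^\dag\circ\delta_{\textbf{Spek}}=1_{\four}$, which holds because distinct elements have disjoint image sets under $\delta_{\textbf{Spek}}$, so composing with the converse recovers the identity. Alternatively, and more slickly, one may observe that this data is exactly the image of the standard basis structure on $\mathbb{Z}_2\times\mathbb{Z}_2$ in \textbf{FRel} (the comultiplication being the converse of the group multiplication, $\epsilon_{\textbf{Spek}}^\dag$ picking out the subgroup $\{(0,0),(0,1)\}$ after the relabelling $1\!=\!(0,0),2\!=\!(0,1),3\!=\!(1,0),4\!=\!(1,1)$), and group-algebra/coset data of this kind is known to furnish basis structures in \textbf{FRel} — so one may simply cite that.

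I do not expect a genuine obstacle here: the content is a routine finite verification, and this is why the statement is attributed to~\cite{CE} and presented as a proposition to be used rather than dwelt on. The only mild subtlety is bookkeeping with the associator and symmetry isomorphisms when writing the coassociativity and Frobenius equations in fully bracketed form, but by Mac Lane's strictification (invoked earlier in the excerpt) one may work in the strictified setting and ignore them. The cleanest exposition is therefore to pick the group-theoretic description, note it is of the standard form known to give a basis structure in \textbf{FRel}, and leave the elementwise check to the reader or to the reference.
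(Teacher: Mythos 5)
The paper offers no proof of this proposition at all --- it is imported from \cite{CE} --- so your direct element-by-element verification of the comonoid, commutativity, Frobenius and isometry equations is the right thing to do, and the computations you sketch (counit, coassociativity on the image of $1$, isometry via disjointness of the images) are all correct. One small slip: your first Frobenius expression $(1_{\four}\otimes\delta_{\textbf{Spek}})\circ(\delta_{\textbf{Spek}}^\dag\otimes 1_{\four})$ does not typecheck against $\delta_{\textbf{Spek}}\circ\delta_{\textbf{Spek}}^\dag$ (it is a map $\four^3\to\four^3$, not $\four^2\to\four^2$); you mean $(1_{\four}\otimes\delta_{\textbf{Spek}}^\dag)\circ(\delta_{\textbf{Spek}}\otimes 1_{\four})$, which does hold by the same finite trace.

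The genuine problem is with the ``slicker'' route that you recommend as the cleanest exposition. The structure is \emph{not} the converse of the group multiplication on $\mathbb{Z}_2\times\mathbb{Z}_2$ with $\epsilon_{\textbf{Spek}}^\dag$ a two-element subgroup. Under your relabelling the converse of the $\mathbb{Z}_2\times\mathbb{Z}_2$ multiplication sends $1$ to all four pairs $\{(1,1),(2,2),(3,3),(4,4)\}$, whereas $\delta_{\textbf{Spek}}(1)=\{(1,1),(2,2)\}$; and even setting that aside, group convolution with a counit that picks out a nontrivial subgroup $H$ fails the counit law, since $(\epsilon\otimes 1)\circ\delta(g)$ then yields the whole coset $gH$ rather than $\{g\}$. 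The correct structural description --- consistent with the paper's own decomposition $\four=\{1,2\}\sqcup\{3,4\}$ and its reduction to \textbf{HalfSpek} --- is the \emph{groupoid} $\mathbb{Z}_2\sqcup\mathbb{Z}_2$: each block carries the $\mathbb{Z}_2$ multiplication (as in $\delta_{\textbf{Half}}$), $\delta_{\textbf{Spek}}^\dag$ is the resulting partial multiplication (empty across blocks), and $\epsilon_{\textbf{Spek}}^\dag$ picks out the set of units $\{1,3\}$. This is exactly the non-well-pointed phenomenon the paper highlights in its remark on Pavlovic's classification. Since you propose to rest the write-up on the group-theoretic identification and delegate the elementwise check, as written the recommended proof would not go through; fall back on your direct verification, or cite the groupoid (not group-plus-subgroup) form of the classification.
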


This category turns out to be the process category for the fragment of the toy theory containing only epistemic states of maximal knowledge. In particular, as discussed in detail in \cite{CE},  the interaction of the basis structure $(\four,\delta_{\textbf{Spek}},  \epsilon_{\textbf{Spek}})$ and the permutations of $S_4$ results in three basis structures, analogous to the $Z$-, the $X$- and the $Y$-bases of a qubit.

Furthermore, as also discussed in \cite{CE}\S4, \textbf{Spek} also contains operations which in quantum theory correspond with projection operators, while these are not included in the toy theory.  Indeed,  compact closure  of \textbf{Spek} implies \em map-state duality\em: operations and bipartite states are in bijective correspondence.\footnote{We see this as an improvement, and in \cite{SpekkensBis} also Spekkens expressed  the desire for theories to have this property, as well as having a dagger-like structure, in the sense that states and effects should be in bijective correspondence.}

The process category for the full toy  theory has a similar definition:  

\begin{definition}
The category \textbf{MSpek} is a sub-category of \textbf{Rel}, with the same objects as \textbf{Spek}. Its morphisms are all those relations generated by composition, Cartesian product and relational converse from the generators of \textbf{Spek} plus an additional generator:
\[ \bot_\textbf{MSpek}: \one\rightarrow\four:: \{\ast\} \sim \{1,2,3,4\} \]
\end{definition}

By construction, both \textbf{Spek} and \textbf{MSpek} inherit dagger symmetric monoidal structure from \textbf{Rel}, with Cartesian product being the monoidal product, and relational converse acting as the dagger functor. 

\begin{proposition}
\textbf{Spek} and \textbf{MSpek} are both compact closed. 
\end{proposition}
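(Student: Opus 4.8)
The plan is to obtain all the required compact structures from the single basis structure of Proposition~\ref{Spek:basisstruc} via Proposition~\ref{inducedCSprop}, and to propagate it to every object using Proposition~\ref{sepBSprop}. Since $(\four,\delta_{\textbf{Spek}},\epsilon_{\textbf{Spek}})$ is a basis structure, Proposition~\ref{inducedCSprop} furnishes a self-dual compact structure on $\four$ with $\four^\ast=\four$, $\eta_\four=\delta_{\textbf{Spek}}\circ\epsilon_{\textbf{Spek}}^\dag$ and $\epsilon_\four=\eta_\four^\dag$; concretely $\eta_\four$ is the ``diagonal'' relation $\ast\sim\{(1,1),(2,2),(3,3),(4,4)\}$. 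The unit object $\one$ carries the canonical compact structure $\eta_\one=\lambda_\one^{-1}:\one\to\one\otimes\one$, $\epsilon_\one=\lambda_\one$, whose triangle identities are just coherence. For the remaining objects $\four^n$ one argues by induction on $n$: given basis structures on $\four^{n-1}$ and $\four$, Proposition~\ref{sepBSprop} produces a basis structure on $\four^n=\four^{n-1}\otimes\four$, and Proposition~\ref{inducedCSprop} then yields a compact structure on $\four^n$.

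The only point requiring care — and the only real content of the proof — is to check that for every object $A$ the morphisms $\eta_A,\epsilon_A$ so obtained actually lie in the subcategory \textbf{Spek} (and \textbf{MSpek}), not merely in \textbf{Rel}. This is immediate from the choice of generators: $\delta_{\textbf{Spek}}$ and $\epsilon_{\textbf{Spek}}$ are generators, relational converse is an admitted operation so $\epsilon_{\textbf{Spek}}^\dag$ is available, and composition gives $\eta_\four$; the comultiplication $\delta_{\four^n}$ built in Proposition~\ref{sepBSprop} uses only $\delta_{\textbf{Spek}}$, the symmetry isomorphisms (also generators) and the Cartesian product, so by induction every $\eta_{\four^n}$ and $\epsilon_{\four^n}$ is generated. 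Since \textbf{MSpek} contains all the generators of \textbf{Spek}, the same morphisms witness its compact structures.

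Finally, the defining equations of a compact structure (equation~\ref{CCcomdiag} and its dual, i.e.\ the ``yanking'' identities~\ref{yanking}) are equalities between composites of $\eta_A$, $\epsilon_A$ and identities; once these morphisms are known to live in the subcategory, such an equality holds there as soon as it holds in the ambient category, and it does hold in \textbf{Rel}, which is compact closed with each set self-dual via the diagonal relation. Hence every object of \textbf{Spek}, and of \textbf{MSpek}, has a compact structure, so both categories are compact closed. I do not expect a genuine obstacle: the one place demanding attention is the bookkeeping in the induction on $\four^n$, ensuring the interleaving symmetry maps supplied by Proposition~\ref{sepBSprop} are indeed accounted for among the generators.
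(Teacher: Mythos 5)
Your proposal is correct and follows essentially the same route as the paper: the paper's proof is exactly the one-line combination of Propositions~\ref{Spek:basisstruc}, \ref{sepBSprop} and \ref{inducedCSprop}, with the details you spell out (membership of $\eta_{\four^n}$, $\epsilon_{\four^n}$ in the subcategory, the unit object, and the explicit diagonal form of $\eta_\four$) left implicit there.
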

\begin{proof}
By propositions \ref{Spek:basisstruc} and \ref{sepBSprop} it follows that each object in these categories has a basis structure, and hence by proposition \ref{inducedCSprop} they both are compact closed.
\end{proof}

The primary task of the remainder of this paper is to demonstrate that \textbf{MSpek} is indeed the process category of Spekkens's toy theory. 

\section{The general form of the morphisms of Spek and MSpek}\label{aboutdiagsandstates}

We first make some preliminary observations.

\begin{definition}
A \emph{\textbf{Spek} diagram} is any valid diagram in the graphical language introduced in section \ref{secCategories} which can be formed by linking together the diagrams of the \textbf{Spek} generators, as described in definition \ref{defSpek}. 
\end{definition}

There is clearly a bijection between the possible compositions of \textbf{Spek} generators, and \textbf{Spek} diagrams. A \textbf{Spek} diagram with $m$ inputs and $n$ outputs represents a morphism of type $\four^m\rightarrow\four^n$: a relation between sets $\four^m$ and $\four^n$. The number of relations between two finite sets $A$ and $B$ is clearly finite itself: it is the power set of $A\times B$. Thus the hom-set \textbf{FRel}(A,B) is finite. Since $\textbf{Spek}(\four^m, \four^n) \subseteq \textbf{FRel}(\four^m, \four^n)$ we can be sure that the hom-sets of \textbf{Spek} are finite. On the other hand, there is clearly an infinite number of \textbf{Spek} diagrams which have $m$ inputs and $n$ inputs - we can add more and more internal loops to the diagrams. Thus many diagrams represent the same morphism. However the morphisms of \textbf{Spek} are, by definition, all those relations resulting from arbitrary compositions of the generating relations, i.e. any relation that corresponds to one of the infinity of \textbf{Spek} diagrams. Hence any proof about the form of the morphisms in \textbf{Spek} is going to have to be a result about the relations corresponding to each possible \textbf{Spek} diagram, even though in general many diagrams correspond to a single morphism. 

If we know the relation corresponding to one diagram in one of \textbf{Spek}'s diagram equivalence classes (recall definition \ref{DECdef}), then it is straightforward to determine the relations corresponding to all of the other diagrams. 

\begin{lemma}\label{bendlinelemma}
Given a \textbf{Spek} diagram and corresponding relation:
\begin{equation}
\def\JPicScale{0.6}
\input{relation.pst}
\end{equation}
then the relation corresponding to the following diagram: 
\begin{equation}\label{bendinglines}
\def\JPicScale{0.6}
\input{bendinglines.pst}
\end{equation}
is given by, for all $x_i\in X_i$: 
\begin{equation}
(x_1,\dots,x_{m-1})\sim\left\{(y_1,\dots,y_n,x_m)
\Biggm| 
\begin{array}{c}
y_i\in Y_i\\ 
x_m\in X_m\\ 
(y_1,\dots,y_n)\in R(x_1,\dots,x_m)
\end{array}
\right\}
\end{equation}
where $R(x_1,\dots,x_m)$ is the subset of $Y_1\times\dots\times Y_n$ which is related by $R$ to $(x_1,\dots,x_m)$. 
\end{lemma}

Every diagram equivalence class in \textbf{Spek} has at least one diagram of type $\one\rightarrow\four^n$, representing a state, where we make every external line an output. Relations of this type can be viewed as subsets of the set $\four^n$ and it will be convenient for us to concentrate on characterising these morphisms. Via lemma \ref{bendlinelemma} any results on the general form of states will translate into results on the general form of all morphisms. In what follows we will therefore make no distinction between the inputs and outputs of a \textbf{Spek}-diagram: a diagram with $m$ inputs and $n$ outputs will simply be referred to as a $(m+n)$-legged diagram. 

Our proof will involve building up \textbf{Spek} diagrams by connecting together the generating morphisms. Here we show what various diagram manipulations mean in concrete terms for the corresponding relations. Henceforth, remembering that we only have to consider states, we will assume that the relation corresponding to any $n$-legged diagram is of type $\one\rightarrow\four^n$. 

First we introduce some terminology. The \emph{composite} of an $m$-tuple $(x_1,\dots ,x_m)$ and an $n$-tuple $(y_1,\dots ,y_n)$ is the $(m+n)$-tuple $(x_1,\dots,x_n,y_1,\dots,y_n)$ from $\two^{m+n}$.  By the $i^\textit{th}$\emph{-remnant} of an $n$-tuple we mean the $(n-1)$-tuple obtained by deleting its $i^\textrm{th}$ component. By the $i,j^\textit{th}$\emph{-remnant} of an $n$-tuple (where $i>j$) we mean the $(n-2)$-tuple obtained by deleting the $j^\textrm{th}$ component of its $i^\textit{th}$-remnant (or equivalently, deleting the $(i-1)^\textrm{th}$ component of its $j^\textit{th}$-remnant).

\begin{example}\em\label{relationtree}
Consider linking two diagrams, the first representing the relation $R:\one\rightarrow X_1\times\dots\times X_m$ the second representing the relation $S: \one\rightarrow Y_1\times\dots\times Y_n$ via a permutation $P$, to form a new diagram as shown: 
\begin{equation}
\def\JPicScale{0.8}
\input{relationtree.pst} 
\end{equation}
The relation corresponding to this diagram is given by
\begin{equation}
\ast\sim\left\{(x_1,\dots,x_{i-1},x_{i+1},\dots,x_m,y_1,\dots,y_{j-1},y_{j+1},\dots,y_n)
\Biggm| 
\begin{array}{c}
\!\!(x_1,\dots,x_m)\in R(\ast)\!\!\\
\!\!(y_1,\dots,y_n)\in S(\ast)\!\!\\
\!\! x_i = P(y_j)\!\!
\end{array}
\right\}
\end{equation}
Or, in less formal language, for every pair of a tuple from $R$ and a tuple from $S$ obeying the condition $x_i = P(y_j)$, we form composite of the $i^\textrm{th}$ remnant of the tuple from $R$, and the $j^\textrm{th}$ remnant of the tuple from $S$. 
\end{example}

\begin{example}\em\label{relationloop}
Given a diagram representing the relation $R:\one\rightarrow X_1\times\dots\times X_m$, consider forming a new diagram by linking the $i^\textrm{th}$ and $j^\textrm{th}$ legs of the original diagram via a permutation $P$. 
\begin{equation}
\def\JPicScale{0.8}
\input{relationloop.pst}
\end{equation}
The relation corresponding to this diagram is given by: 
\begin{equation}
\ast\sim
\left\{(x_1,\dots,x_{i-1},x_{i+1},\dots,x_{j-1},x_{j+1},\dots,x_m)
\Bigm| 
\begin{array}{c}
(x_1,\dots,x_m)\in R(\ast)\\ 
x_i = P(x_j)
\end{array}
\right\}
\end{equation}
Or, in less formal language, we take the $i,j^\textit{th}$-remnant of every tuple for which $x_i = P(x_j)$. 
\end{example}

\begin{example}\em\label{relationcapping}
Consider linking two diagrams, the first representing the relation $R:\one\rightarrow X_1\times\dots\times X_n$ the second representing the relation $S: \one\rightarrow X_i$ via a permutation $P$, to form a new diagram as shown: 
\begin{equation}
\def\JPicScale{0.8}
\input{relationcapping.pst}
\end{equation}
The relation corresponding to this diagram is given by: 
\begin{equation}
\ast\sim 
\left\{
(x_1,\dots,x_{i-1},x_{i+1},\dots,x_n)
\Bigm| 
\begin{array}{c}
(x_1,\dots,x_n)\in R(\ast)\\ 
x_i\in P(S(\ast))
\end{array}
\right\}
\end{equation}
Or, in less formal language, we take the $i^\textrm{th}$ remnant of every tuple for which $x_i\in P(S(\ast))$. 
\end{example}

\subsection{Structure of the construction}\label{proofstructureSec}

It is convenient to single out a particular sub-group of the $S_4$ permutation sub-group. This consists of the four permutations which don't mix between the sets $\{1,2\}$ and $\{3,4\}$: (1)(2)(3)(4) (the identity), (12)(3)(4), (1)(2)(34) and (12)(34). We term these the \emph{phased} permutations. All other permutations are termed \emph{unphased}. We single out this sub-group because the relations corresponding to \textbf{Spek} diagrams generated from $\delta_{\textbf{Spek}}$, $\epsilon_{\textbf{Spek}}$, and the four phased permutations have a significantly simpler general form. We term such diagrams \emph{phased diagrams}. All other diagrams are termed \emph{unphased diagrams}. A morphism which corresponds to a phased diagram is termed a \emph{phased morphism}, all other morphisms being \emph{unphased morphisms}. 

It is straightforward to see that any unphased \textbf{Spek} diagram can be viewed as a collection of phased sub-diagrams linked together via unphased permutations. We refer to these sub-diagrams as \emph{zones}. Furthermore, note that any permutation in $S_4$ can be written as a product of phased permutations and the unphased permutation (1)(3)(24) (which we denote by $\Sigma$). Since a single phased permutation constitutes a phased zone with two legs, we can in fact view an unphased diagram as a collection of phased zones linked together by the permutation $\Sigma$:
\begin{equation}\label{generalunphasedEq}
\def\JPicScale{0.6}
\input{generalnonphased.pst}
\end{equation}
Here $D_i$ represents a phased sub-diagram and a square box represents an unphased permutations. Note that such a diagram is not necessarily planar, i.e.~it may involve crossing wires.  We distinguish between \emph{external zones} which have external legs (e.g. $D_1$, $D_2$ and $D_5$ in diagram \ref{generalunphasedEq}), and \emph{internal zones} all of whose legs are connected to other legs within the diagram (e.g. $D_3$ and $D_4$).
To the external legs we associate an enumeration, such that legs from the same external zone appear consecutively. 

In the first stage of the proof we determine the general form of the relation corresponding to a \emph{phased diagram}. This stage itself splits into two phases: first we determine the general form of the morphisms of a new category \textbf{HalfSpek}, and then we show how to use this result to prove our main result in \textbf{Spek}. In the second stage we draw on the results of the first to determine the general form of the relation corresponding to \emph{any} \textbf{Spek} diagram. 

\subsection{The general form of the morphisms of  HalfSpek}

We build up to the full theorem via a simplified case. For this we need a new category. 

\begin{definition}
 \label{defHalfSpek}
The category \textbf{HalfSpek} is a subcategory of \textbf{FRel}. It is defined inductively, as follows: 

\begin{itemize}

  \item The objects of \textbf{HalfSpek} are the single-element set $\one=\{\ast\}$, the two element set $\two := \{0,1\}$, and its $n$-fold Cartesian products $\two^n$. 

  \item The morphisms of \textbf{HalfSpek} are all those relations generated by composition, Cartesian product and relational converse from the following generating relations:
  \begin{enumerate}
    \item All permutations $\{\sigma_i : \two\rightarrow\two\}$ of the two element set.
    There are 2 such permutations, the identity and the operation $\sigma$ which swaps the elements of $\two$.  Together they form the group $Z_{2}$. 
    \item A relation $\delta_{\textbf{Half}}: \two \rightarrow \two\times \two$ defined by:
    \[\begin{array}{ll}
    0\sim \{(0,0), (1,1)\}&\quad 1\sim \{(0,1), (1,0)\}\,;
    \end{array}\]
    \item a relation $\epsilon_{\textbf{Half}}: \two \rightarrow \one::0\sim *$\
  \end{enumerate}
\end{itemize}
\end{definition} 

\begin{proposition}{\rm\cite{CE}}
$(\two,\delta_{\textbf{Half}},  \epsilon_{\textbf{Half}})$ is a basis structure in \textbf{FRel} and hence also in \textbf{HalfSpek}.
\end{proposition}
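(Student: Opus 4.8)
The plan is to check the axioms of Definition~\ref{BSdef} head-on, organised around the observation that $(\two,\delta_{\textbf{Half}},\epsilon_{\textbf{Half}})$ is exactly the comonoid that the group $Z_2$ determines inside \textbf{FRel} by taking relational converses of the group multiplication and unit. Concretely, writing the two-element set as $Z_2$ one has $\delta_{\textbf{Half}}(a)=\{(b,c)\mid b+c=a\}$ and $\epsilon_{\textbf{Half}}(a)=\ast \Leftrightarrow a=0$, so that $\delta_{\textbf{Half}}^\dag(b,c)=b+c$ and $\epsilon_{\textbf{Half}}^\dag(\ast)=0$; note that these daggers are themselves morphisms of \textbf{HalfSpek}, since the generators are closed under relational converse. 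From this description every axiom becomes a short computation with relational composition. First I would dispatch the \emph{comonoid} laws: both $(\delta_{\textbf{Half}}\otimes 1)\circ\delta_{\textbf{Half}}$ and $(1\otimes\delta_{\textbf{Half}})\circ\delta_{\textbf{Half}}$ send $a$ to $\{(b,c,d)\mid b+c+d=a\}$ (up to the associator) by associativity of $+$, while $(\epsilon_{\textbf{Half}}\otimes 1)\circ\delta_{\textbf{Half}}$ sends $a$ to $\{(\ast,c)\mid 0+c=a\}=\{(\ast,a)\}$, i.e.\ the left-unit isomorphism $\lambda_{\two}\colon\two\cong\one\times\two$ (the right counit law being the mirror image). \emph{Cocommutativity}, $\sigma_{\two,\two}\circ\delta_{\textbf{Half}}=\delta_{\textbf{Half}}$, follows from commutativity of $+$.

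What remains are the two conditions that are genuinely special to this situation. For the \emph{Frobenius law} I would compute that $(\delta_{\textbf{Half}}^\dag\otimes 1)\circ(1\otimes\delta_{\textbf{Half}})$ sends $(a,g)$ to $\{(a+b,c)\mid b+c=g\}$, which after the substitution $d=a+b$ becomes $\{(d,\,d+a+g)\mid d\in Z_2\}$; on the other hand $\delta_{\textbf{Half}}\circ\delta_{\textbf{Half}}^\dag$ sends $(a,g)$ to $\delta_{\textbf{Half}}(a+g)=\{(d,\,d+a+g)\mid d\in Z_2\}$ (using $-d=d$ in $Z_2$), so the two agree, and the mirror identity is the same computation. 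For \emph{isometry} (speciality), $\delta_{\textbf{Half}}^\dag\circ\delta_{\textbf{Half}}$ sends $a$ to $\{b+c\mid b+c=a\}=\{a\}$ — the crucial point being that every $a$ admits at least one decomposition $a=b+c$ — hence this composite is $1_{\two}$. Together these establish that $(\two,\delta_{\textbf{Half}},\epsilon_{\textbf{Half}})$ is a commutative isometric dagger Frobenius comonoid in \textbf{FRel}, i.e.\ a basis structure; the argument is a miniature of the one behind Proposition~\ref{Spek:basisstruc}.

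Finally I would transfer the statement to \textbf{HalfSpek}: all morphisms appearing in the axioms ($\delta_{\textbf{Half}}$, $\epsilon_{\textbf{Half}}$, their daggers, identities, associators, unitors, symmetries) lie in \textbf{HalfSpek}, and since \textbf{HalfSpek} is a $\dag$-SMC subcategory of \textbf{FRel} the equalities just verified in \textbf{FRel} continue to hold there. The only genuinely non-formal steps are the isometry and Frobenius identities; both rely on $Z_2$ being an abelian group and would fail for an arbitrary comonoid in \textbf{Rel}, so that is where I expect the real content to sit — everything else is bookkeeping with the coherence isomorphisms.
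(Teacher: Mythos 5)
Your verification is correct. The paper itself offers no in-line proof of this proposition --- it is simply cited to \cite{CE}, where the content is established by exactly the kind of direct check you perform --- so there is nothing to contrast step by step. Your computations all hold: coassociativity and cocommutativity from associativity and commutativity of addition in $Z_2$, the counit law giving $\{(\ast,a)\}$, the Frobenius identity reducing on both sides to $\{(d,\,d+a+g)\mid d\in Z_2\}$, and speciality from the fact that every $a$ decomposes as $b+c$. The one genuinely valuable thing you add is the organising observation that $(\two,\delta_{\textbf{Half}},\epsilon_{\textbf{Half}})$ is the relational converse of the group structure on $Z_2$; this is precisely the conceptual explanation of the ``surprise'' recorded in the remark immediately following the proposition, and it anticipates Pavlovic's classification \cite{Dusko} of basis structures in \textbf{FRel} as (disjoint unions of) abelian groups. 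Your closing point about why the result transfers to \textbf{HalfSpek} --- the equations are verified in \textbf{FRel} and the inclusion of a subcategory is faithful, so they persist --- is also the right justification for the ``hence also'' in the statement, though note that the paper's definition of \textbf{HalfSpek} omits an explicit clause including the structural isomorphisms (present in the definition of \textbf{Spek}); you are implicitly, and reasonably, reading that clause in.
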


\begin{remark}\em
The existence of this basis structure came as  a surprise to the authors.  Naively one might think (as many working in the area of categorical quantum mechanics initially did)  that on a set $X$ in \textbf{FRel} there is a single basis structure with $\delta$ given by $x \sim (x,x)$ for all $x\in X$.  The `basis vectors' (or copyable points in the language of \cite{RedGreen}) are then the elements of this set.  But this is not the case.  There are many `non-well-pointed' basis structures such as $(\two,\delta_{\textbf{Half}},  \epsilon_{\textbf{Half}})$ for which the number of copyable points is less than the number of elements of the set.  In related work, Pavlovic has classified all basis structures in \textbf{FRel} and Evans et al.~have identified the pairs of complementary basis structures (in the sense of \cite{RedGreen}) among these \cite{Dusko,Evans}.
\end{remark}

Next we determine the general form of the relations which constitute the morphisms of \textbf{HalfSpek}, to which the considerations made at the beginning of Section \ref{aboutdiagsandstates} also apply.

We say that an element of $\two^n$ has \emph{odd} parity if it has an odd number of `1' elements, and that it has \emph{even} parity if it has an even number of `1' elements.  We will use $P$ to represent a particular parity, odd or even, and $P'$ will represent the opposite parity. Whether an odd-parity $n$-tuple has an odd or even number of `0' elements clearly depends on whether $n$ itself is odd or even. We could have chosen either 0 or 1 to play the role of labelling the parity; we have chosen 1 since it will turn out to be more convenient later on.  

\begin{theorem}\em\label{HalfSpekmorphisms}
The relation in \textbf{HalfSpek} corresponding to an $n$-legged \textbf{HalfSpek}-diagram is a subset of $\two^n$, consisting of all $2^{n-1}$ $n$-tuples of a certain parity, which depending on the diagram may either be even or odd: if the product of all the permutations appearing in the diagram is the identity, then parity is even, and if it is $\sigma$, then the parity is odd. 
\end{theorem}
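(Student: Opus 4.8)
The plan is to induct on the number of generator-boxes in the diagram, using Examples~\ref{relationtree}--\ref{relationcapping} to track how the corresponding relation changes as the diagram is assembled. I take the diagram connected with $n\geq 1$ legs, as the statement presupposes; a disconnected diagram simply imposes one such parity condition per connected component, and the argument below applies componentwise.

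The first step is structural: every connected \textbf{HalfSpek}-diagram can be built up from a single generator by repeatedly performing one of two moves --- (i) adjoin a fresh copy of a generator and join one of its legs to one leg of the current diagram (Example~\ref{relationtree}, or Example~\ref{relationcapping} when the adjoined generator is $\epsilon_{\textbf{Half}}$, with $P$ the identity); and (ii) join two legs of the current diagram to each other (Example~\ref{relationloop} with $P$ the identity). One sees this by taking a spanning tree of the graph whose vertices are the boxes and whose edges are the internal wires: adjoin the boxes one at a time along this tree by move~(i), then close up the remaining internal wires by move~(ii); wire-crossings (the symmetry isomorphism) carry no constraints and may be ignored. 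The theorem then reduces to the base case and the claim that each move preserves the property ``the relation is precisely the set of all tuples of parity $c$'', with $c$ the parity of the number of $\sigma$-boxes (equivalently, $c$ is even iff the product of the permutation-boxes is the identity).

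For the base case, as states the identity-box is $\{(0,0),(1,1)\}$, the box $\sigma$ is $\{(0,1),(1,0)\}$, $\delta_{\textbf{Half}}$ is $\{(0,0,0),(0,1,1),(1,0,1),(1,1,0)\}$, and $\epsilon_{\textbf{Half}}$ is $\{0\}$: respectively all even $2$-tuples, all odd $2$-tuples, all even $3$-tuples, and the even $1$-tuple, as claimed. For move~(i), if the current relation is all $n'$-tuples of parity $c'$ and the adjoined generator is all $m$-tuples of parity $p$, joined at the diagram's $i$-th and the generator's $j$-th legs, then Example~\ref{relationtree} says the new relation consists of the composites of the $i$-th remnant of an old tuple with the $j$-th remnant of a generator tuple that agree at the joined legs. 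Inserting a common value $t$ back into the two remnants, the old tuple lies in the old relation iff $t$ equals $c'$ plus the sum of the kept old entries, and the generator tuple lies in its relation iff $t$ equals $p$ plus the sum of the kept generator entries; a suitable $t$ exists iff the total parity of the candidate output tuple is $c'\oplus p$, and then it is unique. So the new relation is exactly all tuples of parity $c'\oplus p$, which is again the parity of the total number of $\sigma$-boxes. For move~(ii), joining legs $i,j$ of a relation equal to all $n$-tuples of parity $c$ yields (Example~\ref{relationloop}) all $i,j$-remnants of tuples in the relation whose $i$-th and $j$-th entries agree; refilling the two deleted slots with a common value $t$ changes the parity by $t\oplus t=0$, so such a preimage exists iff the remaining entries sum to $c$. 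Hence the new relation is all $(n-2)$-tuples of parity $c$, with $c$ unchanged. This closes the induction.

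I expect the main obstacle to be the structural step: showing precisely that every connected diagram is reachable from a single generator by moves~(i) and~(ii), which needs care about which legs remain free at each stage and a check that wire-crossings really are semantically inert. The inductive bookkeeping is otherwise routine, and its content is simply that the parity of the external legs is a conserved quantity of the construction: summing, over $\mathbb{F}_2$, the linear constraints imposed by the boxes ($\delta_{\textbf{Half}}$: the three legs sum to $0$; $\epsilon_{\textbf{Half}}$: the leg is $0$; $\sigma$: the two legs sum to $1$; identity-box: the two legs are equal) makes every internal wire cancel and leaves exactly $\sum_k x_k \equiv \#\{\sigma\text{-boxes}\}$, while connectedness is precisely what guarantees there is no further independent constraint, so this single one determines the whole relation.
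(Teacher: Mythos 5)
Your proof is correct and follows essentially the same route as the paper's: an induction over the assembly of the diagram from generators, with the parity of the external legs as the conserved quantity. (The paper phrases the inductive step as composing the current state with one further generator $\epsilon_{\textbf{Half}}$, $\delta_{\textbf{Half}}$, $\delta_{\textbf{Half}}^\dagger$ or $\sigma$; this subsumes your moves (i) and (ii), since a bare loop-closing wire is the cap $\epsilon_{\textbf{Half}}\circ\delta_{\textbf{Half}}^\dagger$.) Your spanning-tree organisation of the induction and the closing $\mathbb{F}_2$ remark make explicit what the paper dismisses with ``it is easily seen'', but the content is the same. One point where you genuinely improve on the paper: the restriction to connected diagrams is needed. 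As stated, the theorem fails for disconnected diagrams --- e.g.\ $\epsilon_{\textbf{Half}}^\dagger\times\epsilon_{\textbf{Half}}^\dagger$ has relation $\{(0,0)\}$, not both even $2$-tuples --- and the paper's own inductive clause for ``laying $\epsilon_{\textbf{Half}}^\dagger$ alongside $D$'', which claims all $2^{n}$ $(n+1)$-tuples of parity $P$, is incorrect: the Cartesian product has only $2^{n-1}$ elements, namely those ending in $0$. Your componentwise statement (one parity constraint per connected component, hence $2^{n-c}$ tuples for $c$ components) is the right fix, and is in fact what the zone/block bookkeeping of the later theorems implicitly relies on.
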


\begin{proof}
We prove this result by induction on the number of generators $k$ required to construct the diagram. 
Remember that we need only consider those diagrams whose corresponding relations are states. There is just one possible base case ($k=1$), a diagram composed purely of the generator $\epsilon_{\textbf{Half}}^\dag$ for which $n=1$: the corresponding state consists of the single 1-tuple (0), which is indeed the unique 1-tuple of even parity. Now consider a diagram $D$ built from $k$ generators with a corresponding state $\psi$ consisting all $2^{n-1}$ $n$-tuples of parity $P$. It is easily seen that composing $D$ with either $\epsilon_{\textbf{Half}}^\dagger$, $\delta_{\textbf{Half}}$ or $\delta_{\textbf{Half}}^\dagger$ respectively yields a diagram whose corresponding state consists of all $2^{n-2}$ $n-1$-tuples, all $2^{n}$ $n+1$-tuples, and all $2^{n-2}$ $n-1$-tuples of parity $P$; and that composing with $\sigma$ yields all $2^{n-1}$ $n$-tuples  of parity $P'$. Finally consider producing a disconnected diagram by laying the $\epsilon_{\textbf{Half}}^\dag$ diagram alongside $D$: it is easily seen that the corresponding state consists of $2^{n}$ $n+1$-tuples of parity $P$. 
\end{proof}

\subsection{The general form of phased morphisms in \textbf{Spek}}

We want to apply our results on \textbf{HalfSpek} to the category of real interest, \textbf{Spek}. To do this we first need to digress to discuss some structural features of relations.  The category \textbf{Rel} has another symmetric monoidal structure, namely the \em disjoint union \em or \em direct sum\em, denoted by $\sqcup$. 
Concretely, if we can partition a set $A$ into $m$ subsets $A_i$, then we have $A=\sqcup_i A_i$, and recalling that a relation $R:A\rightarrow B$ is a subset of $A\times B$, we can decompose $R$ into $mn$ components of the form $R_{i,j}:A_i\rightarrow B_j$, such that $R = \bigsqcup_{i,j} R_{i,j}$.  The relations $R_{i,j}:A_i\rightarrow B_j$ are termed the \emph{components} of $R$ \emph{with respect to partitions} $A=\sqcup_i A_i$, $B=\sqcup_j B_j$. In category theoretic terms, this is \em biproduct\em, and there is a \em distributive law \em with respect to the Cartesian product:
\begin{equation}
A\times (\sqcup_i B_i) =  \sqcup_i (A\times B_i) \qquad\quad R\times (\sqcup_i T_i) =  \sqcup_i (R\times T_i)
\end{equation}
for sets $A, B_i$ and relations $R, T_i$. For $A=\sqcup_i A_i$, $B=\sqcup_j B_j$, $C=\sqcup_k C_k$ and $D=\sqcup_l D_l$, and relations $R:A\rightarrow B$, $S:B\rightarrow C$ and $T:C\rightarrow D$, we then have:
\begin{equation}
(S\circ R)_{i,k} = \bigsqcup_j S_{j,k} \circ R_{i,j}\quad
(R\times T)_{i,j,k,l} = T_{k,l}\times R_{i,j}\quad
R^c_{j,i} = (R_{i,j})^c
\end{equation}
For  $A=A_1\sqcup A_2$ by distributivity we have: 
\begin{equation}
A_1^m\sqcup (A_1^{m-1}\times A_2)\sqcup (A_1^{m-2}\times A_2\times A_1)\sqcup\dots\sqcup (A_1^{m-2}\times A_2^2)\sqcup\dots\sqcup A_2^m
\end{equation}
If for $A = A_1\sqcup A_2$ and $R: A^m \rightarrow A^n$ the only non-empty components are $R_1: A_1^m \rightarrow A_1^n$ and $R_2: A_2^m \rightarrow A_2^n$ we call it \emph{parallel}.
Given parallel relations $R: A^m\rightarrow A^n$, $S: A^n\rightarrow A^p$ and $T: A^p\rightarrow A^q$ with respect to the partition $A=A_1\sqcup A_2$, the relations: 
\begin{equation}
S\circ R: A^m\rightarrow A^p\quad\ \ 
T\times R: A^{m+p} \rightarrow A^{n+q}\quad\ \ 
R^c: A^n\rightarrow A^m
\end{equation}
are all also easily seen to be parallel with respect to the same partition of $A$. 

We can use these insights to make a connection between \textbf{HalfSpek} and \textbf{Spek}. 

\begin{proposition}
The generators of the phased morphisms of \textbf{Spek}, i.e. $\delta_\textbf{Spek}$, $\epsilon_\textbf{Spek}$ and the phased permutations on $\four$, are all parallel with respect to the following partition of $\four = \{1,2\}\sqcup\{3,4\}$. We conclude that all phased morphisms of \textbf{Spek} are also parallel with respect to this partition. We refer to the two components of a phased \textbf{Spek} morphism as its $\{1,2\}$-component and $\{3,4\}$-component.  
\end{proposition}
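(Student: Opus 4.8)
The plan is to prove the generator-level claim by a short finite inspection, and then to obtain the statement for arbitrary phased morphisms as a purely formal consequence of the closure properties of parallel relations recorded in the paragraph immediately above the proposition.

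First I would check, generator by generator, that $\delta_\textbf{Spek}$, $\epsilon_\textbf{Spek}$ and the four phased permutations (together with the unit, associativity and symmetry isomorphisms, which also occur in phased diagrams) are parallel with respect to $\four=\{1,2\}\sqcup\{3,4\}$. For $\delta_\textbf{Spek}$ this is read off its defining table: $1$ and $2$ are related only to pairs in $\{1,2\}\times\{1,2\}$, and $3$ and $4$ only to pairs in $\{3,4\}\times\{3,4\}$, so with respect to the partitions of $\four$ and of $\four^2$ induced by $\four=\{1,2\}\sqcup\{3,4\}$ the only non-empty components are $(\delta_\textbf{Spek})_{\{1,2\}} : \{1,2\}\to\{1,2\}^2$ and $(\delta_\textbf{Spek})_{\{3,4\}} : \{3,4\}\to\{3,4\}^2$; under the evident bijections $\{1,2\}\cong\two\cong\{3,4\}$ each of these is literally a copy of $\delta_\textbf{Half}$. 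Likewise the single related pair of $\epsilon_\textbf{Spek}$ puts $1\in\{1,2\}$ into the $\{1,2\}$-component and $3\in\{3,4\}$ into the $\{3,4\}$-component, each a copy of $\epsilon_\textbf{Half}$, with no cross-component. The phased permutations are by definition precisely the elements of $S_4$ that fix the blocks $\{1,2\}$ and $\{3,4\}$ setwise, so each restricts to a permutation of each block and has no cross-component. Finally the structural isomorphisms only permute and reassociate tensor slots, so they carry $\{1,2\}^{k}$ to $\{1,2\}^{k}$ and $\{3,4\}^{k}$ to $\{3,4\}^{k}$ and are parallel trivially.

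For arbitrary phased morphisms I would then induct on the number of generators used to build the corresponding phased diagram, equivalently on the number of applications of composition, Cartesian product and relational converse. The base case is one of the parallel generators above. For the inductive step, the paragraph above the proposition already records that if $R$, $S$, $T$ are parallel with respect to a fixed partition $A=A_1\sqcup A_2$ then so are $S\circ R$, $T\times R$ and $R^{c}$ --- a consequence of the distributive law between $\sqcup$ and $\times$ and of the componentwise identities $(S\circ R)_{i,k}=\bigsqcup_j S_{j,k}\circ R_{i,j}$, $(R\times T)_{i,j,k,l}=T_{k,l}\times R_{i,j}$ and $R^{c}_{j,i}=(R_{i,j})^{c}$. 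Instantiating with $A=\four$, $A_1=\{1,2\}$, $A_2=\{3,4\}$ closes the induction, so every phased \textbf{Spek} morphism is parallel, and we may unambiguously name its (at most) two non-empty components the $\{1,2\}$-component and the $\{3,4\}$-component. (One could instead argue at the level of diagrams, tracking tuples through the manipulations of Examples \ref{relationtree}, \ref{relationloop} and \ref{relationcapping}, but the algebraic route is cleaner.)

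I do not expect a genuine obstacle: the proposition is essentially bookkeeping once the biproduct and distributivity machinery of the previous paragraph is available, and the only substantive work is the finite verification that the three families of generators are block-diagonal with respect to $\{1,2\}\sqcup\{3,4\}$; of the three closure properties, it is closure under $\times$ that carries the content, and that has already been dispatched. The one place that calls for slight care is the degenerate treatment of the monoidal unit $\one$ as the empty tensor power, where the two blocks both collapse to $\one$; one simply checks that the componentwise identities still behave as stated in that boundary case. The payoff that motivates the proposition is that, via the same bijections $\{1,2\}\cong\two\cong\{3,4\}$, each component of a phased \textbf{Spek} morphism is a \textbf{HalfSpek} morphism, so the general form of phased \textbf{Spek} morphisms can then be read off blockwise from Theorem \ref{HalfSpekmorphisms}.
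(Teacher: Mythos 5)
Your argument is correct and is precisely the one the paper intends: the paper states this proposition without an explicit proof, relying on the reader to inspect the generators and invoke the closure of parallel relations under composition, Cartesian product and converse recorded in the immediately preceding paragraph, which is exactly what you do (your remark about the degenerate case $\one=A^0$ is a point the paper silently glosses over). No gap.
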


\begin{proposition}\label{HalfSpekgenareSpekgen}
The $\{1,2\}$-components of the generators of the phased morphisms of \textbf{Spek} are simply the generators of \textbf{HalfSpek} with the elements of $\two=\{0,1\}$ re-labelled according to $0\mapsto 1$, $1\mapsto 2$. Similarly the $\{3,4\}$-components of the generators of the phased morphisms of \textbf{Spek} are simply the generators of \textbf{HalfSpek} with the elements of $\two=\{0,1\}$ re-labelled according to $0\mapsto 3$, $1\mapsto 4$.
\end{proposition}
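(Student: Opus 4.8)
The plan is to prove Proposition~\ref{HalfSpekgenareSpekgen} by a direct, finite verification carried out generator by generator. By the preceding proposition each generator of the phased fragment of \textbf{Spek} is parallel with respect to $\four = \{1,2\}\sqcup\{3,4\}$, so its $\{1,2\}$-component is a genuine relation built from the set $\{1,2\}$ and its Cartesian powers, and likewise for the $\{3,4\}$-component. It therefore makes sense to transport these components along the relabelling bijections $r_{12}\colon\two\to\{1,2\}$, $0\mapsto 1$, $1\mapsto 2$, and $r_{34}\colon\two\to\{3,4\}$, $0\mapsto 3$, $1\mapsto 4$, and to compare the result with the generators of \textbf{HalfSpek}. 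It suffices to run this comparison on $\epsilon_{\textbf{Spek}}$, on $\delta_{\textbf{Spek}}$, and on the four phased permutations.

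First I would handle $\epsilon_{\textbf{Spek}}$: since $\{1,3\}\sim\ast$, restricting the domain to $\{1,2\}$ leaves only $1\sim\ast$, and transporting along $r_{12}$ gives $0\sim\ast$, which is exactly $\epsilon_{\textbf{Half}}$; the $\{3,4\}$-case is identical via $r_{34}$. Next, for $\delta_{\textbf{Spek}}$ the $\{1,2\}$-component (domain restricted to $\{1,2\}$, codomain to $\{1,2\}\times\{1,2\}$) is $1\sim\{(1,1),(2,2)\}$, $2\sim\{(1,2),(2,1)\}$, and transporting along $r_{12}$ on both output legs yields $0\sim\{(0,0),(1,1)\}$, $1\sim\{(0,1),(1,0)\}$, i.e.\ $\delta_{\textbf{Half}}$; the $\{3,4\}$-component $3\sim\{(3,3),(4,4)\}$, $4\sim\{(3,4),(4,3)\}$ transports via $r_{34}$ to the same relation. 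Finally, among the four phased permutations, the identity and $(3\,4)$ restrict on $\{1,2\}$ to the identity of $\{1,2\}$, while $(1\,2)$ and $(1\,2)(3\,4)$ restrict to the transposition; under $r_{12}$ these give the identity of $\two$ and $\sigma$, which are precisely the two permutation generators of \textbf{HalfSpek}. The situation on $\{3,4\}$ is symmetric, with the roles of the two pairs of permutations exchanged. Hence the $\{1,2\}$-components (resp.\ $\{3,4\}$-components) of the phased generators of \textbf{Spek} are, up to $r_{12}$ (resp.\ $r_{34}$), exactly the generators of \textbf{HalfSpek}.

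I do not anticipate a genuine obstacle here: the proposition is essentially a bookkeeping statement whose only real content---that the whole phased fragment, and not merely its generators, decomposes into parallel components---is already supplied by the preceding proposition. The two minor points to keep an eye on are that the ambient unit, associativity and symmetry isomorphisms of \textbf{Spek} are the canonical ones inherited from \textbf{Rel} and restrict along $r_{12}$, $r_{34}$ to the corresponding canonical isomorphisms of \textbf{HalfSpek} (so they contribute nothing new), and that the map from phased generators to \textbf{HalfSpek} generators is many-to-one (two phased permutations hit each permutation generator), which is harmless since only surjectivity onto the generating set is needed.
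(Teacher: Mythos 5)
Your verification is correct and matches the paper's intent: the paper states this proposition without proof, treating it as an immediate case-by-case check of the generators, which is exactly what you carry out (and your computations for $\epsilon_{\textbf{Spek}}$, $\delta_{\textbf{Spek}}$ and the four phased permutations are all accurate). Nothing further is needed.
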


\begin{proposition}\label{SpekDis2HalfSpekD}
A state $\psi\subset\four^n$ corresponding to a phased \textbf{Spek} diagram $D$ is equal to the union of two states $\psi^{12}\subset\{1,2\}$ and $\psi^{34}\subset\{3,4\}$. $\psi^{12}$ and $\psi^{34}$ are obtained by the following procedure. Form a \textbf{HalfSpek} diagram $D^{12}$ by replacing every occurence of $\delta_\textbf{Spek}$ and $\epsilon_\textbf{Spek}$ in $D$ with $\delta_\textbf{Half}$ and $\epsilon_\textbf{Half}$, and replacing every occurence of a permutation with its $\{1,2\}$ component, re-labelled as a $\textbf{HalfSpek}$ permutation as described in proposition \ref{HalfSpekgenareSpekgen}. Form a second \textbf{HalfSpek} diagram $D^{34}$ in the obvious analogous fashion using $\{3,4\}$ components of permutations. $\psi^{12}$ and $\psi^{34}$ are the states corresponding to $D^{12}$ and $D^{34}$, once again under the re-labelling described in proposition \ref{HalfSpekgenareSpekgen}. 
\end{proposition}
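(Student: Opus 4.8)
\emph{Proof proposal.} The plan is to realise the "take the $\{1,2\}$-component, relabel" operation as a functor from the phased fragment of \textbf{Spek} to \textbf{HalfSpek}, and then simply apply this functor to the expression for $\psi$ read off the diagram $D$.

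First I would dispense with the first sentence. By the proposition that all phased morphisms of \textbf{Spek} are parallel with respect to $\four=\{1,2\}\sqcup\{3,4\}$, the state $\psi$, viewed as a morphism $\one\rightarrow\four^n$ with $\one$ carrying the trivial partition $\one\sqcup\emptyset$, is parallel. Using the distributive law $A\times(\sqcup_iB_i)=\sqcup_i(A\times B_i)$ to propagate the partition of the single factor $\four$ to $\four^n=\bigsqcup(S_1\times\cdots\times S_n)$ over choices $S_k\in\{\{1,2\},\{3,4\}\}$, parallelism forces all components of $\psi$ to vanish except the two "pure" ones, living in $\{1,2\}^n$ and $\{3,4\}^n$; call them $\psi^{12}$ and $\psi^{34}$. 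Then $\psi=\psi^{12}\sqcup\psi^{34}$ as subsets of $\four^n$, which is the claimed decomposition.

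Next I would set up the functor. Let $r\colon\{1,2\}\rightarrow\two$ be the bijection $1\mapsto 0,\ 2\mapsto 1$, extended componentwise and applied to relations; for a parallel $R\colon\four^m\rightarrow\four^n$ with $\{1,2\}$-component $R^{12}$, put $F(R):=r\circ R^{12}\circ r^{-1}\colon\two^m\rightarrow\two^n$. I claim $F$ is a monoidal, dagger-preserving functor that sends each phased \textbf{Spek} generator (and each structural isomorphism) to the corresponding \textbf{HalfSpek} generator. The action on generators is exactly Proposition \ref{HalfSpekgenareSpekgen} (the unit, associator and symmetry are "diagonal" for the partition, so map to their \textbf{HalfSpek} analogues). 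Functoriality and monoidality follow from the biproduct calculus recalled just before that proposition: for parallel $R,S$ the mixed components $R_{1,2}$ and $S_{2,1}$ are empty, so $(S\circ R)_{1,1}=\bigsqcup_j S_{j,1}\circ R_{1,j}$ collapses to $S^{12}\circ R^{12}$; likewise $(R\times T)^{12}=R^{12}\times T^{12}$ from $(R\times T)_{i,j,k,l}=T_{k,l}\times R_{i,j}$, and $(R^c)^{12}=(R^{12})^c$ from $R^c_{j,i}=(R_{i,j})^c$; conjugation by the bijection $r$ preserves all of this.

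Finally I would conclude. The diagram $D$ presents $\psi$ as a composite, Cartesian product and relational converse of phased generators and structural isomorphisms; applying the functor $F$ to this presentation and using functoriality gives $F(\psi)=r(\psi^{12})$ as the same combination of the $F$-images of those pieces, which by the previous paragraph is precisely the \textbf{HalfSpek} diagram $D^{12}$ defined by the replacement rule. Hence $r(\psi^{12})$ is the state of $D^{12}$, i.e. $\psi^{12}$ is that state under the relabelling $0\mapsto 1,\ 1\mapsto 2$; the argument for $\psi^{34}$ is identical with $3\mapsto 0,\ 4\mapsto 1$. An equivalent route, closer to the style of the paper, is a direct induction on the number of generators used to build $D$, mirroring the proof of Theorem \ref{HalfSpekmorphisms}: the base case is Proposition \ref{HalfSpekgenareSpekgen}, and the inductive step uses Examples \ref{relationtree}, \ref{relationloop} and \ref{relationcapping} together with the observation that, within the phased fragment, every tuple that occurs has all its entries in $\{1,2\}$ or all in $\{3,4\}$, so these tuple manipulations act separately on the two components and coincide with the corresponding \textbf{HalfSpek} manipulations.

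The main obstacle is the well-behavedness of the $\{1,2\}$-component under composition, namely the vanishing of the cross terms $R_{1,2}$, $S_{2,1}$ in the biproduct expansion of $S\circ R$; this is precisely where parallelism (hence the preceding proposition) is essential and must be invoked carefully. Everything else—handling the structural isomorphisms, tracking the relabelling bijection through products and converses—is routine bookkeeping with the distributive law.
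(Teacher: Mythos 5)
Your proposal is correct and follows essentially the same route as the paper, which leaves this proposition without an explicit proof precisely because it is meant to follow from the preceding biproduct calculus (components of composites, products and converses of parallel relations) together with Proposition \ref{HalfSpekgenareSpekgen}; your functor $F$ is just a clean packaging of that argument. The only cosmetic difference is that you make the induction over the diagram explicit via functoriality, which the paper treats as immediate.
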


Note that $D^{12}$ and $D^{34}$ will appear identical as graphs, both to each other and to $D$, but the labels on some of their permutations will differ. 

From proposition \ref{SpekDis2HalfSpekD} and theorem \ref{HalfSpekmorphisms} now follows:

\begin{theorem}\label{phasedmorphisms}
A phased morphism in \textbf{Spek} of type $\one\rightarrow\four^n$ is a subset of $\four^n$, consisting of $2^n$ $n$-tuples, divided into two classes of equal number: 
\begin{itemize}
  \item The first class consists of tuples of 1s and 2s, all of either odd or even parity. 
  \item The second class consists of tuples of 3s and 4s, again all of either odd or even parity. 
\end{itemize}
\end{theorem}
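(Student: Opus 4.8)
The plan is to derive Theorem \ref{phasedmorphisms} as an immediate consequence of the structural results already established, namely Proposition \ref{SpekDis2HalfSpekD} together with Theorem \ref{HalfSpekmorphisms}, so the proof should be short. First I would fix a phased \textbf{Spek} diagram $D$ of type $\one\rightarrow\four^n$ and invoke Proposition \ref{SpekDis2HalfSpekD} to write the corresponding state $\psi\subset\four^n$ as a disjoint union $\psi = \psi^{12}\cup\psi^{34}$, where $\psi^{12}\subset\{1,2\}^n$ and $\psi^{34}\subset\{3,4\}^n$ are the relabelled images of the \textbf{HalfSpek} states associated to the diagrams $D^{12}$ and $D^{34}$. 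Since $\{1,2\}^n$ and $\{3,4\}^n$ are disjoint subsets of $\four^n$, the two classes of tuples in the statement are automatically disjoint, so no overlap needs to be ruled out.

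Next I would apply Theorem \ref{HalfSpekmorphisms} to each of the two \textbf{HalfSpek} diagrams $D^{12}$ and $D^{34}$ separately. That theorem tells us that the state of an $n$-legged \textbf{HalfSpek} diagram consists of all $2^{n-1}$ $n$-tuples of one fixed parity. Transporting this back through the relabelling of Proposition \ref{HalfSpekgenareSpekgen} ($0\mapsto 1$, $1\mapsto 2$ for the first, $0\mapsto 3$, $1\mapsto 4$ for the second), I get that $\psi^{12}$ consists of exactly $2^{n-1}$ tuples of $1$s and $2$s, all of a single parity (reading the parity off the number of $2$s), and similarly $\psi^{34}$ consists of exactly $2^{n-1}$ tuples of $3$s and $4$s, all of a single parity. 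Adding the two cardinalities gives $|\psi| = 2^{n-1} + 2^{n-1} = 2^n$, with the two classes of equal size $2^{n-1}$, which is precisely the assertion of the theorem. One small point worth spelling out is that Theorem \ref{HalfSpekmorphisms} as stated also keeps track of which parity occurs (via the product of the permutations in the diagram); the parities of $\psi^{12}$ and $\psi^{34}$ are in general different, since $D^{12}$ and $D^{34}$ carry different permutation labels, but the present statement only claims existence of \emph{some} parity in each class, so this refinement is not needed here — though I would remark on it, since it is useful downstream.

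There is essentially no obstacle: all the work has been front-loaded into Proposition \ref{SpekDis2HalfSpekD} (which reduces a phased \textbf{Spek} diagram to a pair of \textbf{HalfSpek} diagrams) and Theorem \ref{HalfSpekmorphisms} (which solves \textbf{HalfSpek} completely). The only thing to be careful about is bookkeeping: making sure the relabelling is applied consistently so that a $2$ in a $\{1,2\}$-tuple corresponds to a $1$ in the \textbf{HalfSpek} picture (hence a parity carried by the $2$s), and likewise a $4$ corresponds to a $1$ (parity carried by the $4$s). I would therefore write the proof as: ``By Proposition \ref{SpekDis2HalfSpekD}, $\psi = \psi^{12}\cup\psi^{34}$ with $\psi^{12},\psi^{34}$ obtained from \textbf{HalfSpek} states by the relabellings of Proposition \ref{HalfSpekgenareSpekgen}; by Theorem \ref{HalfSpekmorphisms} each of these \textbf{HalfSpek} states consists of all $2^{n-1}$ $n$-tuples of a fixed parity; transporting along the relabellings and noting $\{1,2\}^n\cap\{3,4\}^n=\emptyset$ gives the two disjoint classes of equal size $2^{n-1}$, hence $2^n$ tuples in total.''
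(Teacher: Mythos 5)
Your proposal is correct and follows exactly the route the paper takes: the paper states Theorem \ref{phasedmorphisms} as an immediate consequence of Proposition \ref{SpekDis2HalfSpekD} and Theorem \ref{HalfSpekmorphisms} (``From proposition \ref{SpekDis2HalfSpekD} and theorem \ref{HalfSpekmorphisms} now follows''), and your write-up simply spells out the intended deduction, including the disjointness of $\{1,2\}^n$ and $\{3,4\}^n$ and the count $2^{n-1}+2^{n-1}=2^n$. Your side remark that the two parities may differ (and that the parity convention is read off the number of $2$s, respectively $4$s) matches the convention the paper records immediately after the theorem.
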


Note that we are adopting the convention that tuples of the first class have odd parity if they have an odd number of 2s, even parity if they have an even number of 2s. Tuples of the first class have odd parity if they have an odd number of 4s, even parity if they have an even number of 4s. 

\subsection{The general form of arbitrary morphisms in \textbf{Spek}}

Recall from section \ref{proofstructureSec} that any unphased diagram can be viewed as a collection of phased zones linked together by the permutation $\Sigma$, (1)(3)(24).  
We also enumerated the external legs.

\begin{theorem}\label{externalnonphasedTh}
The relation in \textbf{Spek} corresponding to an $n$-legged \textbf{Spek}-diagram with $m$ zones, none of which is internal, is a subset $\psi$ of $\four^n$ with the following form: 
\begin{enumerate}
\item It contains $2^n$ $n$-tuples; each entry corresponding to an external leg.
\item All those entries corresponding to a given zone of the diagram are termed a \emph{zone} of the tuple (whether we are referring to a zone of a diagram, or zone of a tuple should be clear from the context). Each zone of the tuple has a well-defined \emph{type} (components either all 1 or 2, or all 3 or 4) and \emph{parity} (as defined for phased relations). 
\item A \emph{block} $B$ is a subset of $\psi$ such that the $i^\textrm{th}$ zone of all $n$-tuples in $B$ has the same parity and type. The sequence of types and parities of each zone is called the \emph{signature} of the block. The $2^n$ tuples of $\psi$ are partitioned into $2^m$ equally sized blocks each with a unique signature. 
\item Each of the $2^m$ blocks has a different type signature - these exhaust all possible type signatures. 
\item The parity signature of a block is the following simple function of its type signature: 
\begin{equation}\label{typeparityEq}
P_i = \Psi_i(T_i) + \sum_{j\in \textrm{adj}(i)} (T_i + T_j)
\end{equation} 
Here $P_i$ and $T_i$ are Boolean variables representing the parity and type of the $i^\textrm{th}$ zone. We adopt the convention that an odd parity is represented by 0 and an even parity by 1, whilst the type $\{1,2\}$ is represented by 0 and the type $\{3,4\}$ by 1. The set adj($i$) consists of the zones directly adjacent to the $i^\textrm{th}$ zone. $\Psi_i(T_i)$ denotes the parity of the type $T_i$ tuples in the relation corresponding to the $i^\textrm{th}$ zone seen as an independent phased diagram (recall theorem \ref{phasedmorphisms}). 
\end{enumerate}
\end{theorem}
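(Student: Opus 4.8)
The plan is to work directly from the ``zones $+$ $\Sigma$-connections'' description of an unphased diagram recalled in Section~\ref{proofstructureSec}, combining Theorem~\ref{phasedmorphisms} applied to each zone with the gluing recipes of Examples~\ref{relationtree} and \ref{relationloop}. The first step is an elementary ``local lemma'' about the connecting permutation $\Sigma=(1)(3)(24)$: if a leg of a zone of type $T_A$ (all entries in $\{1,2\}$ when $T_A=0$, all in $\{3,4\}$ when $T_A=1$) is linked via $\Sigma$ to a leg of a zone of type $T_B$, then the constraint $x=\Sigma(y)$ on the two linked entries has a \emph{unique} solution, and the ``parity-carrying'' part of each solution entry (a $2$ when the type is $\{1,2\}$, a $4$ when it is $\{3,4\}$) is present precisely when $T_A\neq T_B$. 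This is a four-line check over the four type-combinations. Two consequences are immediate: (i) a connection imposes \emph{no} restriction on the types of the two zones it joins -- so all $2^m$ type signatures will survive -- and (ii) performing the connection deletes the two linked legs and shifts the required parity of each affected zone by exactly $T_A+T_B\pmod 2$.

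With the local lemma in hand I would establish membership directly. Realise the whole diagram as the product $\bigotimes_A R_A$ of the zone relations -- each $R_A$ being a phased morphism, hence of the shape dictated by Theorem~\ref{phasedmorphisms} -- with the wire-constraints $x=\Sigma(y)$ imposed and the linked legs projected away. Take a candidate tuple $u\in\four^n$ indexed by the external legs. If $u$ lies in the relation, then each restriction $u|_A$ to the external legs of zone $A$ (non-empty, since by hypothesis no zone is internal) is a restriction of some $t_A\in R_A$, hence type-consistent, fixing a type $T_A$; by the local lemma the value on every connection leg of every zone is then forced, so the full zone-tuple $t_A$ is determined, and $t_A\in R_A$ iff its parity is $\Psi_A(T_A)$, i.e.\ iff $\mathrm{parity}(u|_A)=\Psi_A(T_A)+\sum_{j\in\mathrm{adj}(A)}(T_A+T_j)$, which is exactly (\ref{typeparityEq}). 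Conversely, any $u$ all of whose restrictions are type-consistent and satisfy this parity condition yields, by the same recipe, legitimate zone-tuples whose forced boundary values automatically satisfy every wire-constraint (that is how the forced values were defined). Hence $u$ is in the relation iff every $u|_A$ is type-consistent and has parity $P_A$ given by (\ref{typeparityEq}); this is precisely the block description of items (2)--(5), with the blocks indexed by the $2^m$ type signatures, all of which occur. (One could instead induct on the number of $\Sigma$-connections, peeling them off one at a time and using consequence (ii) at each step; the direct argument seems cleaner.)

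Finally I would carry out the count for item (1). Inside a fixed block the constraints decouple zone by zone, since the wires touch only legs that have been fixed and deleted, so the block is a product over zones; zone $A$ with $\ell_A$ legs, of which $e_A$ are external, contributes a single parity class on its $\ell_A$ entries cut down by fixing the $\ell_A-e_A$ connection entries, leaving $2^{e_A-1}$ tuples on its external legs. Multiplying over zones gives $2^{\sum_A(e_A-1)}=2^{n-m}$ tuples per block (using $\sum_A e_A=n$), and $2^m$ blocks give $2^n$ in total; in particular all blocks have equal size. I expect the only real subtlety to be bookkeeping rather than mathematics: one must thread the parity conventions of Theorem~\ref{phasedmorphisms} (odd $\leftrightarrow 0$, even $\leftrightarrow 1$, with $2$'s and $4$'s carrying the parity) consistently through the deletion of linked legs, and one must cover the degenerate configurations -- a zone linked to itself, or two zones joined by more than one wire. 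In each of these the local lemma still applies wire by wire: a self-link forces both its entries to the parity-neutral value ($1$ or $3$), consistent with the $T_A+T_A=0$ contribution, and a double edge contributes $T_A+T_B$ twice, i.e.\ $0$, so no new phenomena arise.
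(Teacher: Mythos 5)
Your proof is correct, but it follows a genuinely different route from the paper's. The paper proceeds by induction over the construction of the diagram: the base case is a single phased zone (Theorem \ref{phasedmorphisms}), and two lemmas (Lemmas \ref{unphasedtreelemma} and \ref{unphasedlooplemma}) show that conditions (1)--(5) are preserved when a new phased zone is attached by a single $\Sigma$-wire and when an internal loop is closed, each time tracking how ``parent'' blocks split into or map onto ``progeny'' blocks. You instead give a direct, global membership criterion: your local lemma about $\Sigma=(1)(3)(24)$ (for each pair of zone types the wire constraint has a unique solution, parity-carrying precisely when the types differ) shows that all connection-leg values are forced by the external data once no zone is internal, so a tuple $u$ lies in the relation iff each restriction $u|_A$ is type-consistent and satisfies equation (\ref{typeparityEq}); conditions (2)--(5) are then read off, and condition (1) follows from the product structure of each block, $\prod_A 2^{e_A-1}=2^{n-m}$ tuples per block and $2^m$ blocks. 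Your version makes it immediately transparent why no type signature is excluded and why all blocks have equal size, and it handles multiple edges and self-loops uniformly --- indeed you correctly observe that the sum in (\ref{typeparityEq}) must be taken over wires with multiplicity rather than over the set of adjacent zones, a point the theorem's statement leaves implicit but which the paper's inductive proof (which adds one term $T_k+T_l$ per closed wire) also effectively assumes. What the paper's formulation buys in exchange is that its two lemmas, phrased as closure of conditions (1)--(5) under elementary diagram-building steps, feed directly into the subsequent treatment of internal zones in Theorem \ref{internalnonphasedTh}, whereas your direct argument relies essentially on every zone having at least one external leg to determine its type and would need the paper's capping analysis to be extended to the internal case anyway.
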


\begin{theorem}\label{internalnonphasedTh}
The relation in \textbf{Spek} corresponding to an $n$-legged \textbf{Spek}-diagram with $m$ zones, of which $m'$ are external is either: 
\begin{itemize}
\item A subset, $\psi$ of $\four^n$ which satisfies conditions (1) and (2) of theorem \ref{externalnonphasedTh} and which is partitioned into $2^{m'}$ blocks. The signatures of these blocks are determined as follows: 
\begin{enumerate}
\item Begin with the state corresponding to the diagram obtained by adding an external leg to every internal zone (i.e. we will have $2^m$ blocks, each with $m$ zones, exhausting all possible type signatures). 
\item Eliminate all blocks whose type signatures do not satisfy the following constraints, one for each internal zone: 
\begin{equation}\label{typeconstraintEq}
\Psi_i(T_i) + \sum_{j\in \textrm{adj}(i)} (T_i + T_j)=0
\end{equation} 
where $i$ is the label of the internal zone, and $j$ labels its adjacent zones. 
\item Finally from each block delete the zones corresponding to internal zones. 
\end{enumerate}
\item \emph{or} it is equal to the empty set, $\emptyset$. This second possibility occurs iff the constraints in equation \ref{typeconstraintEq} are inconsistent. 
\end{itemize}
\end{theorem}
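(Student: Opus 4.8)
The plan is to reduce Theorem \ref{internalnonphasedTh} to Theorem \ref{externalnonphasedTh} by temporarily promoting every internal zone to an external one and then cancelling the extra legs using the comonoid structure of the basis structure $(\four,\delta_\textbf{Spek},\epsilon_\textbf{Spek})$. Given a \textbf{Spek} diagram $D$ with $m$ zones, of which $m'$ are external, I would, for each internal zone $D_i$ (a phased sub-diagram with $k_i$ legs, all of them wired), replace $D_i$ by $\tilde D_i:=(1_\four^{\otimes(k_i-1)}\otimes\delta_\textbf{Spek})\circ D_i$, i.e. adjoin one fresh leg to $D_i$ with a copy of $\delta_\textbf{Spek}$. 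Since $\delta_\textbf{Spek}$ is a phased generator, $\tilde D_i$ is again a single phased zone, now carrying one free leg, so the resulting diagram $\tilde D$ has the same $m$ zones, all of them external, and $\tilde n=n+(m-m')$ legs. By the counit law of the comonoid $(\four,\delta_\textbf{Spek},\epsilon_\textbf{Spek})$ (Proposition \ref{Spek:basisstruc}), capping the fresh leg of each $\tilde D_i$ with $\epsilon_\textbf{Spek}$ returns $D_i$; hence $D$ is recovered from $\tilde D$ by capping the $m-m'$ new legs with $\epsilon_\textbf{Spek}$. This is exactly the move hidden inside the phrase ``the diagram obtained by adding an external leg to every internal zone'' in the statement.

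Next I would apply Theorem \ref{externalnonphasedTh} to $\tilde D$: its relation $\tilde\psi$ is a union of $2^m$ blocks, one per type signature, in which the parity $P_i$ of the $i$-th zone is given by formula \ref{typeparityEq}. The key point is that although a formerly-internal zone $D_i$ now contributes only its single new coordinate to $\tilde\psi$, its other $k_i$ legs are still wired exactly as in $D$, so the adjacency set $\mathrm{adj}(i)$ — and hence $P_i$ — is literally the one that figures in the constraint \ref{typeconstraintEq}; one also checks that adjoining a leg via $\delta_\textbf{Spek}$ leaves the parity function $\Psi_i$ unchanged, so nothing in \ref{typeparityEq} shifts. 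I would then compute the effect of the cap via Example \ref{relationcapping}: capping the new leg of $D_i$ retains exactly those tuples whose $i$-th new coordinate lies in the support of $\epsilon_\textbf{Spek}$, and deletes that coordinate. Within one block the type and parity of every zone are fixed, so in particular that single new coordinate is a fixed value; thus the block either survives in full (when $P_i$ matches the support of $\epsilon_\textbf{Spek}$, which is precisely the constraint \ref{typeconstraintEq} for zone $i$) or is deleted in full. Running this over all $m-m'$ internal zones yields ``eliminate the blocks violating \ref{typeconstraintEq}, then delete the internal-zone coordinates'', with the proviso that blocks which coincide after the deletion have their tuple-sets merged. The final clause is then read off: the family \ref{typeconstraintEq} is an affine system over $\mathbb{F}_2$ in the type variables $T_i$; if it is consistent at least one block survives, the surviving projected blocks reassemble into a set satisfying conditions (1)--(2) of Theorem \ref{externalnonphasedTh} with $2^{m'}$ blocks, and if it is inconsistent no block survives, so $\psi=\emptyset$.

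The step I expect to be the main obstacle is the middle bookkeeping: carefully checking that Theorem \ref{externalnonphasedTh} applied to $\tilde D$ delivers $P_i$ via exactly \ref{typeparityEq} with the \emph{original} adjacency data, that capping with $\epsilon_\textbf{Spek}$ interacts with the block decomposition as a clean ``keep-block-and-project or delete-block'' operation — this is where one genuinely uses that a block has a uniform parity in every zone — and that, after projecting out the $m-m'$ internal coordinates and merging coincident blocks, the count of surviving blocks really collapses to $2^{m'}$ (or the relation collapses to $\emptyset$) rather than something larger; this last point needs a small rank argument about the linear system \ref{typeconstraintEq}. A secondary subtlety is purely the Boolean-encoding arithmetic: one must track the conventions ``odd $=0$, even $=1$'' and ``a $\{1,2\}$-tuple carries parity via its number of $2$s'' with care so as to land on the exact equation \ref{typeconstraintEq}. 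A self-contained alternative that avoids the comonoid trick is to induct on the number of $\Sigma$-wires incident to internal zones: cut such a wire to expose two external legs, apply the theorem inductively, and re-close the wire using Examples \ref{relationtree} and \ref{relationloop}, noting that re-closing a wire between zones $a$ and $b$ adds $(T_a+T_b)$ to both $P_a$ and $P_b$, and that when it consumes the last leg of an internal zone it imposes precisely the consistency condition \ref{typeconstraintEq}, on pain of that block — or, if the accumulated constraints are jointly unsatisfiable, the whole relation — collapsing to $\emptyset$.
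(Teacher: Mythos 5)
Your overall strategy is the same as the paper's: promote each internal zone to an external one by adjoining a fresh leg with $\delta_\textbf{Spek}$, apply Theorem \ref{externalnonphasedTh} to the enlarged diagram, and then cap the fresh legs with $\epsilon_\textbf{Spek}$, using the counit law to recover the original diagram. Your observations that each block either survives wholesale or is killed wholesale by a cap (because the added coordinate is constant on a block), and that survival of a block under the cap on zone $i$ is precisely the condition obtained by forcing $P_i$ in equation \ref{typeparityEq} to the parity selected by $\epsilon_\textbf{Spek}$, i.e. constraint \ref{typeconstraintEq}, are also exactly the paper's.

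The gap is the step you defer as ``a small rank argument about the linear system''. The rank argument alone only tells you that, when the $p=m-m'$ constraints are consistent with $p'$ of them independent, exactly $2^{m-p'}$ of the $2^m$ parent blocks survive the caps. If $p'<p$ this exceeds the advertised $2^{m'}=2^{m-p}$, and the total tuple count would exceed $2^n$, so the theorem can only hold because distinct surviving blocks project to \emph{identical} progeny blocks, in groups of exactly $2^{p-p'}$. Establishing this is not linear algebra on the constraint system alone: one must determine combinatorially which pairs of parent blocks yield the same progeny (the paper: blocks mirrored on an ``adjacency closure set'' of internalised zones, Proposition \ref{dupmirrorProp}), show that both members of such a mirror pair survive the caps iff the corresponding constraints satisfy a parity condition (Proposition \ref{mirrorconstraintProp}), and then match the number of independent such mirror symmetries to the corank $p-p'$ of the constraint system (Proposition \ref{constraintlindepProp} and Lemma \ref{duplicationLemma}). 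This three-way correspondence between diagram combinatorics, block duplication, and linear dependence among the constraints is the real content of the internal-zone case, and your sketch asserts its conclusion without supplying it. The rest of your bookkeeping --- invariance of $\Psi_i$ and of $\textrm{adj}(i)$ under the $\delta_\textbf{Spek}$ insertion, and the survive-or-die behaviour of blocks under capping --- is correct and matches the paper.
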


A simple counting argument shows that within each block, every tuple with the correct type and parity signature occurs. Theorems \ref{externalnonphasedTh} and \ref{internalnonphasedTh} thus completely characterise the state corresponding to any \textbf{Spek}-diagram. The input data is the shape of the diagram, which determines \emph{adj(i)}, and the `intrinsic parities' $\Psi_i(T_i)$ of each zone. Futhermore, note from theorem \ref{phasedmorphisms} that, as we would expect, the general form of phased morphisms is a special case of the form described above, with $m=m'=1$. 

\begin{example}\em
We now give an example of theorem \ref{externalnonphasedTh} in action. Consider the following schematic\textbf{Spek}-diagram (circles simply denote zones), which has three zones, all external ($m=3$) and five external legs ($n=5$): 
\begin{equation}
\def\JPicScale{0.4}
\input{example.pst}
\end{equation}
where the labels by each zone denote the intrinsic parities of that zone (i.e. the parity of the type-12 tuples and the parity of the type-34 tuples, recall theorem \ref{phasedmorphisms}). 
We conclude from theorem \ref{externalnonphasedTh} that the state corresponding to this diagram will consist of $2^5 = 32$ 5-tuples each with 3 zones, and that these will be partitioned into $2^3=8$ equally sized blocks. Every combination of types appears exactly once amongst these blocks, and the parity signatures are easily determined from equation \ref{typeparityEq} as follows. First we note, from the intrinsic parities that $\Psi_1(T_1) = 0$, $\Psi_2(T_2)=T_2$ and $\Psi_3(T_3)= 1+T_3$. We then see that equation \ref{typeparityEq} here is essentially three equations: 
\begin{equation}
\left\{\begin{array}{ll}
P_1 = & T_2 + T_3\\
P_2 = & T_1 + T_2 + T_3\\
P_3 = & 1 + T_1 + T_2 + T_3
\end{array}\right.\end{equation}
From this we can establish the signatures of the eight blocks: 
\begin{equation}\begin{array}{ll}
(\textrm{Odd},12; \textrm{Odd},12; \textrm{Even},12) & (\textrm{Even},12; \textrm{Even},12; \textrm{Odd},34)\\ (\textrm{Even},12; \textrm{Even},34; \textrm{Odd},12) & (\textrm{Odd},12; \textrm{Odd},34; \textrm{Even},34)\\
(\textrm{Odd},34; \textrm{Even},12; \textrm{Odd},12) & (\textrm{Even},34; \textrm{Odd},12; \textrm{Even},34)\\ (\textrm{Even},34; \textrm{Odd},34; \textrm{Even},12) & (\textrm{Odd},34; \textrm{Even},34; \textrm{Odd},34)
\end{array}\end{equation}
The first zone will have two elements, the second zone one element and the third zone two elements. Within each block every possible collection of tuples consistent with the signature will appear meaning that each block will consist of four tuples. 
\end{example}

\begin{example}\em
We go on to give an example of theorem \ref{internalnonphasedTh}. Consider the following diagram: it is identical to the diagram in the previous example except that the second zone has been internalised. 
\begin{equation}
\def\JPicScale{0.4}
\input{internalexample.pst}
\end{equation}
This diagram has two external zones $m'=2$ and four external legs $n=4$; thus we expect the corresponding state to consist of $2^4 = 16$ 4-tuples, each with 2 zones, partitioned into $2^2 = 4$ blocks. We now determine the signatures of the blocks applying theorem \ref{internalnonphasedTh}. According to step (1) we begin with the blocks from the previous example. Step (2) requires that we eliminate all blocks whose types do not satisfy the constraint $T_1 + T_2 + T_3 = 0$. This leaves:
\begin{equation}\begin{array}{ll}
(\textrm{Odd},12; \textrm{Odd},12; \textrm{Even},12) & \\ 
 & (\textrm{Odd},12; \textrm{Odd},34; \textrm{Even},34)\\
 & (\textrm{Even},34; \textrm{Odd},12; \textrm{Even},34)\\ 
(\textrm{Even},34; \textrm{Odd},34; \textrm{Even},12) & 
\end{array}\end{equation}
Finally we delete the second zone from each block, leaving: 
\begin{equation}\begin{array}{l}
(\textrm{Odd},12; \textrm{Even},12)\\ 
(\textrm{Odd},12; \textrm{Even},34)\\
(\textrm{Even},34; \textrm{Even},34)\\ 
(\textrm{Even},34; \textrm{Even},12)
\end{array}\end{equation}
\end{example}

The proofs of theorems \ref{externalnonphasedTh} and \ref{internalnonphasedTh} proceed as inductions over the process of building up a diagram by linking together phased zones via the permutation $\Sigma$. In the course of this process it is clearly possible for an external zone to become an internal zone, as its last external leg is linked to some other zone - we refer to this as \emph{internalising} a zone. It turns out that internalisation of a zone complicates the inductive proof. To get around this we do the induction in two stages. In the first stage we build up a diagram identical to the one we are aiming for, \emph{except} that every zone that should be internal is given a single external leg in the following fashion. Suppose we need to link together two zones via a permutation, and this will result in the internalisation of the left hand zone: 
\begin{equation}
\def\JPicScale{0.6}
\input{internalisation1.pst}
\end{equation}
We instead link the left hand zone to the permutation via a $\delta_\textbf{Spek}$ morphism. The $\delta_\textbf{Spek}$ morphism becomes part of the original zone, and provides it with an external leg: 
\begin{equation}
\def\JPicScale{0.6}
\input{internalisation2.pst}
\end{equation}
In the second stage we cap off all these extra external legs with the $\epsilon_\textbf{Spek}$ morphism. Since $\delta_\textbf{Spek}$ and $\epsilon_\textbf{Spek}$ constitute a basis structure the result is the diagram which we are aiming for: 
\begin{equation}
\def\JPicScale{0.6}
\input{internalisation3.pst}
\end{equation}

\subsubsection{Diagrams without internal zones}

We now prove theorem \ref{externalnonphasedTh}. The proof uses an induction over the process of building up a diagram. This building up can be split into two phases: firstly we connect all of the zones which will appear in the final diagram into a `tree-like' structure with no closed loops, secondly we close up any loops necessary to yield the desired diagram. Henceforth we will represent the signature of a tuple with $m$ zones as $(P_1,T_1; \dots ;P_m,T_m)$ where $P_i$ is the parity of the $i^\textrm{th}$ zone, and $T_i$ is its type. Again, if $P$ is a parity, $P'$ indicates the opposite parity, and likewise if $T$ is a type, $T'$ represents the other type. 

\begin{lemma}\label{unphasedtreelemma}
Consider an $n$-legged non-phased diagram $D_1$ with $m$ zones. Suppose the corresponding state $\psi_1\subset\four^n$ satisfies the conditions in theorem \ref{externalnonphasedTh}. Now consider linking the $i^\textrm{th}$ leg of $D_1$ to the $j^\textrm{th}$ leg of an $n'$-legged phased diagram $D_2$ (with corresponding state $\psi_2$), via $\Sigma$, to create an $(n+n'-2)$-legged diagram $D_3$ with $m+1$ external zones. We will assume that the $i^\textrm{th}$ leg of $D_1$ lies within its $k^\textrm{th}$ external zone. The state $\psi_3\subset\four^{n+n'-2}$ corresponding to $D_3$ also satisfies the conditions in theorem \ref{externalnonphasedTh}.
\end{lemma}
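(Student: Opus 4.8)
The plan is to compute $\psi_3$ explicitly via Example~\ref{relationtree} and then verify the five conditions of Theorem~\ref{externalnonphasedTh} in turn, with essentially all of the real content in the parity condition~\ref{typeparityEq}. Applying Example~\ref{relationtree} with $R=\psi_1$, $S=\psi_2$, $P=\Sigma$, the state $\psi_3$ consists of the composites of the $i^\textrm{th}$-remnant of a tuple $\vec x\in\psi_1$ with the $j^\textrm{th}$-remnant of a tuple $\vec y\in\psi_2$, ranging over all pairs satisfying $x_i=\Sigma(y_j)$. A composite tuple has as its zones the $m$ zones inherited from $\vec x$ (the $k^\textrm{th}$ one now missing its $i^\textrm{th}$ entry) together with one new zone, the $(m{+}1)^\textrm{th}$, coming from the $j^\textrm{th}$-remnant of $\vec y$ — this is a single zone because $D_2$ is phased. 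The adjacency graph of $D_3$ is that of $D_1$ with one extra edge joining zone $k$ to the new zone $m{+}1$; every other $\textrm{adj}$ set is unchanged.

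The combinatorial heart of the argument is how $\Sigma=(1)(3)(24)$ constrains types. Since $\Sigma$ acts as $1\mapsto 1$, $2\mapsto 4$, $3\mapsto 3$, $4\mapsto 2$, for each of the four choices of (type of zone $k$ in $\vec x$, type of zone $m{+}1$ in $\vec y$) there is exactly one pair $(x_i,y_j)$ with $x_i=\Sigma(y_j)$, namely $(1,1)$, $(4,2)$, $(2,4)$, $(3,3)$ respectively. Consequently, pairing a block of $\psi_1$ (of which there are $2^m$, by hypothesis) with a block of $\psi_2$ (of which there are $2$, by Theorem~\ref{phasedmorphisms}) yields, for each of the $2^{m+1}$ combinations of type signatures, a single block of $\psi_3$, and these exhaust all type signatures of $\four^{n+n'-2}$ partitioned into $m{+}1$ zones. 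Since the hypothesis that $D_3$ has $m{+}1$ external zones forces zone $k$ to have at least two legs and $n'\ge 2$, a short count shows that in each such block the two linked entries are pinned to their forced values, so distinct composites come from distinct pairs $(\vec x,\vec y)$, and the block has exactly $2^{(n-m-1)+(n'-2)}=2^{(n+n'-2)-(m+1)}$ tuples; comparing with the number of tuples carrying a prescribed type-and-parity signature then gives conditions (1), (3), (4) and the ``every admissible tuple occurs'' statement.

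For condition~(5) I track the parity of each zone of a composite. Zones $r\neq k$ keep the parity $P_r$ they had in $\vec x$, which by hypothesis equals $\Psi_r(T_r)+\sum_{j\in\textrm{adj}_{D_1}(r)}(T_r+T_j)$, and this is exactly~\ref{typeparityEq} for $D_3$ since $\textrm{adj}_{D_3}(r)=\textrm{adj}_{D_1}(r)$. For zone $k$, deleting the entry $x_i$ flips its parity precisely when $x_i\in\{2,4\}$; running through the four admissible pairs above, this occurs exactly when the types of zones $k$ and $m{+}1$ differ, i.e.\ it adds $T_k+T_{m+1}$, so the parity of zone $k$ in $\psi_3$ is $\Psi_k(T_k)+\sum_{j\in\textrm{adj}_{D_1}(k)}(T_k+T_j)+(T_k+T_{m+1})=\Psi_k(T_k)+\sum_{j\in\textrm{adj}_{D_3}(k)}(T_k+T_j)$. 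For the new zone $m{+}1$, Theorem~\ref{phasedmorphisms} identifies its parity in $\vec y$ with the intrinsic parity $\Psi_{m+1}(T_{m+1})$ of $D_2$, and deleting $y_j$ flips it by the same quantity $T_k+T_{m+1}$, giving $\Psi_{m+1}(T_{m+1})+\sum_{j\in\textrm{adj}_{D_3}(m+1)}(T_{m+1}+T_j)$ since $\textrm{adj}_{D_3}(m+1)=\{k\}$.

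The one step needing genuine care rather than bookkeeping is this synchronisation between the remnant operation and the linking permutation: one must check that deleting the two linked entries shifts the parities of zones $k$ and $m{+}1$ by precisely $T_k+T_{m+1}$, since it is exactly this shift that converts the new edge of $D_3$ into the additional summand of equation~\ref{typeparityEq}. Once that case analysis on $\Sigma$ is in hand, everything else follows from the explicit form of the linked relation in Example~\ref{relationtree} together with elementary counting.
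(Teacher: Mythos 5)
Your proof is correct and follows essentially the same route as the paper's: compute $\psi_3$ as composites of $x$-$i^\textrm{th}$-remnants of $\psi_1$ with $\Sigma(x)$-$j^\textrm{th}$-remnants of $\psi_2$, observe that each parent block spawns progeny blocks with the types of zones $k$ and $m{+}1$ determining a unique admissible value pair, and check that the parity of both linked zones shifts by exactly $T_k+T_{m+1}$. Your explicit enumeration of the four pairs $(1,1)$, $(4,2)$, $(2,4)$, $(3,3)$ and the remark that $m{+}1$ external zones forces zone $k$ and $D_2$ to retain legs are just more detailed versions of steps the paper treats implicitly.
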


\begin{proof}
By the 1-$i^\textrm{th}$-remnants of $\psi_1$ we mean the $i^\textrm{th}$-remnants of those tuples in $\psi_1$ with a 1 in the $i^\textrm{th}$ position. We define the 2-, 3-, and 4-$i^\textrm{th}$-remnants similarly. By proposition \ref{relationtree} the elements of $\psi_3$ comprise all the possible composites of the $x$-$i^\textrm{th}$-remnants of $\psi_1$ and the $\Sigma(x)$-$j^\textrm{th}$-remnants of $\psi_2$, where $x = 1,\dots,4$. It is clear that the zone structure of the tuples of $\psi_1$ is inherited by these composites, and that the $\Sigma(x)$-$j^\textrm{th}$-remnants of $\psi_2$ constitute an additional zone within the composites - we conventionally consider this to be the $(m+1)^\textrm{th}$ zone. Thus the tuples of $\psi_3$ satisfy condition (2). 

Consider a block $B\subset \psi_1$, with signature $(P_1,T_1;\dots;P_k,T_k;\dots;P_{m},T_{m})$, in which for definiteness we assume that $T_k=0$ (the argument runs entirely analogously if $T_k=1$). The composites of the 1-$i^\textrm{th}$-remnants of $B$ and the $1$-$j^\textrm{th}$-remnants of $\psi_2$ all have the same signature, $(P_1,T_1;\dots;P_k,T_k;\dots;P_{m},T_{m};\Psi, T_k)$, and constitute a block $B_1\subset \psi_3$. Likewise the composites of the 2-$i^\textrm{th}$-remnants of $B$ and the $4$-$j^\textrm{th}$-remnants of $\psi_2$ constitute a block $B_2\subset \psi_3$ of signature $(P_1,T_1;\dots;P'_k,T_k;\dots;$ $P_{m},T_{m};\Psi', T'_k)$. Thus each `parent' block in $\psi_1$ gives rise to two `progeny' blocks in $\psi_3$. By hypothesis, each block $B\subset\psi_1$ has a unique type signature, thus the progeny blocks derived from different parent blocks are distinct. Thus $\psi_3$ is partitioned into $2^{m+1}$ blocks, thus satisfying condition (3). It is also clear that if all possible type signatures are represented by the $2^m$ blocks of $\psi_1$ then this is also true for the $2^{m+1}$ blocks of $\psi_3$, and thus that $\psi_3$ satisfies condition (1). 

Note that $B$ consists of $2^{n-m}$ $n$-tuples, and will have $2^{n-m-1}$ 1-$i^\textrm{th}$-remnants and a similar number of 2-$i^\textrm{th}$-remnants, all of which will be distinct. Similarly $\psi_2$ will have $2^{n'-2}$ $1$-$i^\textrm{th}$-remnants and a similar number of 4-$i^\textrm{th}$-remnants, again all distinct. Thus, both $B_1$ and $B_2$ will consist of $2^{n-m-1}.2^{n'-2} = 2^{(n+n'-2)-(m+1)}$ tuples. This holds for all blocks $B\subset \psi_1$, of which there are $2^m$. Thus in total $\psi_3$ consists of $2^{n+n'-2}$ tuples, and so satisfies condition (4). 

Finally we turn to condition (5). Recall from above that a parent block of signature $(P_1,T_1;$ $\dots;P_k,T_k;\dots;P_{m},T_{m})$ yields two progeny blocks, of signatures $(P_1,T_1;\dots;P_k,T_k;\dots;P_{m},T_{m};$ $\Psi, T_k)$ and $(P_1,T_1;\dots;P'_k,T_k;\dots;P_{m},T_{m};\Psi', T'_k)$. Note that those progeny blocks for which the $k^\textrm{th}$ zone and its new adjacent zone have different types exhibit a parity flip on the $k^\textrm{th}$ zone, relative to the parent block, and on the new adjacent zone, relative to its `intrinsic parity'. No such flip occurs if the zones have the same type. Note that the term $T_i + T_j$ is equal to 0 if $T_i=T_j$ and 1 if $T_i\neq T_j$. If condition (2) holds for $\psi_1$ then the correct parity signature for the blocks of $\psi_3$ can be obtained simply by adding the term $T_k + T_{m+1}$ to the $P_k$ and $P_{m+1}$ equations (equation \ref{typeparityEq}). Thus we conclude that condition (5) will hold for $\psi_3$ as well. 
\end{proof}

\begin{lemma}\label{unphasedlooplemma}
Consider an $n$-legged diagram $D$ with $m$ external zones. Suppose the corresponding state $\psi$ satisfies the conditions in theorem \ref{externalnonphasedTh}. Now consider forming a new $(n-2)$-legged diagram $D'$, with corresponding state $\psi'$, by linking the $i^\textrm{th}$ leg of $D$ (in the $k^\textrm{th}$ zone of $D$), to the $j^\textrm{th}$ leg (in the $l^\textrm{th}$ zone), via $\Sigma$. $\psi'$ also satisfies the conditions in theorem \ref{externalnonphasedTh}. 
\end{lemma}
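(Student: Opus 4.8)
The plan is to read $\psi'$ off from Example~\ref{relationloop}: it is the set of $i,j^{\textrm{th}}$-remnants of those tuples $t\in\psi$ that satisfy $t_i=\Sigma(t_j)$, where $\Sigma=(1)(3)(24)$. One then checks conditions (1)--(5) of theorem~\ref{externalnonphasedTh} for $\psi'$ by tracking how this ``select-then-delete'' operation acts on the blocks of $\psi$, in close parallel with the proof of lemma~\ref{unphasedtreelemma}.

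First I would record the combinatorics of the constraint $t_i=\Sigma(t_j)$. Encoding each element of $\four$ as a pair (type, bit) by $1\leftrightarrow(0,0)$, $2\leftrightarrow(0,1)$, $3\leftrightarrow(1,0)$, $4\leftrightarrow(1,1)$, one checks that $\Sigma$ acts as $(T,\beta)\mapsto(T\oplus\beta,\beta)$. Hence inside a fixed block $B\subseteq\psi$ --- where the $i^{\textrm{th}}$ entry has the type $T_k$ of zone $k$ and the $j^{\textrm{th}}$ entry the type $T_l$ of zone $l$ --- the constraint is equivalent to: if $T_k=T_l$ then $t_i=t_j$ equals the bit-$0$ element of that common type; if $T_k\neq T_l$ then $t_i,t_j$ are the bit-$1$ elements of types $T_k$ and $T_l$ respectively (when $k=l$ only the first alternative occurs, since then $T_k=T_l$). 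In both cases the constraint pins a particular value at each of positions $i$ and $j$ while the parity of each affected zone is already fixed within $B$, so exactly a quarter of $B$ survives, and deleting positions $i,j$ turns the survivors into a single block $B'\subseteq\psi'$. The zone structure required by condition (2) is inherited by the remnants --- for $k\neq l$ zones $k,l$ are each shortened by one entry, for $k=l$ zone $k$ loses two entries --- and $D'$ has the same number $m$ of external zones as $D$. Since the blocks of $\psi$ have pairwise distinct type signatures and $B\mapsto B'$ leaves every zone's type unchanged, the $2^m$ blocks of $\psi$ yield $2^m$ distinct, equally sized blocks of $\psi'$ exhausting all type signatures; a count then gives $|B'|=2^{(n-2)-m}$ and $|\psi'|=2^{n-2}$, establishing conditions (1), (3) and (4).

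The delicate point --- and the step I expect to be the main obstacle --- is condition (5). The only zones whose parity can change are $k$ and $l$, and by the combinatorics above a bit-$1$ entry is removed from zone $k$ (and one from zone $l$) precisely when $T_k\neq T_l$; so the parity of zone $k$, and that of zone $l$, each change by exactly $T_k+T_l$ relative to their values in $\psi$, while every intrinsic parity $\Psi_i(T_i)$ is untouched because $\Sigma$ sits outside all zones. This is exactly the change that equation~\ref{typeparityEq} undergoes when zones $k$ and $l$ become adjacent in $D'$: one adds a summand $T_k+T_l$ to the $P_k$- and $P_l$-equations (and nothing when $k=l$, consistently with $T_k+T_k=0$). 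The care needed here is: checking that $\Psi_k$ and $\Psi_l$ really are invariant; the book-keeping when $k$ and $l$ are already adjacent in $D$, so that closing the loop creates a second $\Sigma$-link and the remnant analysis flips the two parities once more --- which is consistent with equation~\ref{typeparityEq} only if it is read with one summand per $\Sigma$-link; and the degenerate case $k=l$. Throughout we rely on the standing assumption of this subsection that closing the loop does not internalise a zone (otherwise theorem~\ref{externalnonphasedTh} is the wrong target), the internalising case being reduced to the present one by the $\delta_{\textbf{Spek}}$/$\epsilon_{\textbf{Spek}}$ device described above.
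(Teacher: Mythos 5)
Your proposal is correct and follows essentially the same route as the paper's proof: describe $\psi'$ as the $x,\Sigma(x)$-$i,j^{\textrm{th}}$-remnants of $\psi$ via Example~\ref{relationloop}, show each parent block yields one progeny block a quarter its size with the same type signature, and observe that the parities of zones $k$ and $l$ flip precisely when $T_k\neq T_l$, matching the new $T_k+T_l$ adjacency term in equation~\ref{typeparityEq}. Your (type, bit) encoding of $\Sigma$ merely makes explicit what the paper dismisses as ``straightforward to see,'' and your edge-case remarks (multiple $\Sigma$-links, $k=l$, non-internalisation) are sensible refinements rather than a different argument.
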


\begin{proof}
By the $x,y$-$i,j^\textrm{th}$\emph{-remnants} of a set of tuples we mean the $i,j^\textrm{th}$-remnants of all those tuples with $x$ in the $i^\textrm{th}$ position and $y$ in the $j^\textrm{th}$ position. From proposition \ref{relationloop}, $\psi'$ consists of the $x,\Sigma(x)$-$i,j^\textrm{th}$-remnants of $\psi$. The zone structure of $\psi$ is clearly inherited by these remnants, thus the tuples of $\psi'$ satisfy condition (2). 

Consider a block $B\subset\psi$ with signature $(P_1,T_1;\dots;P_k,T_k;\dots;P_l,T_l;\dots;P_m,$ $T_m)$. It is straightforward to see that only one quarter of the tuples in this block have $i,j^\textrm{th}$-remnants which are $x,\Sigma(x)$-$i,j^\textrm{th}$-remnants. All the $x,\Sigma(x)$-$i,j^\textrm{th}$-remnants have the same signature: $(P_1,T_1;\dots;P_k,T_k;\dots;P_l,T_l;\dots;P_m,T_m)$ if $T_k = T_l$ and $(P_1,T_1;\dots;P'_k,T_k;\dots;P'_l,T_l;\dots;$ $P_m,T_m)$ if $T_k \neq T_l$. Thus each `parent' block in $\psi$ gives rise to one `progeny' block in $\psi'$, with one quarter as many tuples. The type signatures of the progeny blocks are identical to those of the parent blocks; by hypothesis each parent block had a different type signature and so the progeny blocks deriving from different parent blocks are all distinct. Thus $\psi'$ is partitioned into $2^m$ blocks, each containing $2^{n-m}/4 = 2^{(n-2)-m}$ tuples, and so satisfies conditions (1) and (3). 

Since the type signatures of progeny and parent blocks are identical, if $\psi$ satisfies condition (4), so will $\psi'$. We now turn to condition (5). Closing a loop between the $k^\textrm{th}$ and $l^\textrm{th}$ zones means that they now become adjacent to each other. In the previous paragraph we saw that a progeny block for which the $k^\textrm{th}$ and $l^\textrm{th}$ zones have different types exhibits a parity flip on both these zones. Using similar reasoning as in the previous lemma we conclude that, if condition (2) holds for $\psi$, the correct parity signature for the blocks of $\psi'$ can be obtained simply by adding the term $T_k + T_l$ to the $P_k$ and $P_l$ equations (equation \ref{typeparityEq}). Thus we conclude that condition (5) will hold for $\psi'$ as well. 
\end{proof}

We can now prove \textbf{Theorem \ref{externalnonphasedTh}}:

\begin{proof}
By induction. The base case is a diagram consisting of a single zone, and it is clear from theorem \ref{phasedmorphisms} that this satisfies all the conditions. 
Any other diagram is built up via two inductive steps: linking a new phased zone onto the existing diagram and closing up internal loops within the diagram. Lemmas \ref{unphasedtreelemma} and \ref{unphasedlooplemma} respectively show that if a diagram satisfied the conditions prior to either of these steps, the resulting new diagram will also satisfy the conditions. 
\end{proof}

\subsubsection{Diagrams with internal zones}

We now address the issue of internalising zones. Recall that this step involves capping off external legs with the $\epsilon_\textbf{Spek}$ relation. Throughout this section $D$ will denote a diagram with no internal zones, and $D'$ will denote the diagram obtained by internalising some of $D$'s zones. The corresponding states will be $\psi$ and $\psi'$. 

\begin{proposition}
Suppose we obtain $D'$ by capping the $k^\textrm{th}$ external leg of $D$ with the $\epsilon_\textbf{Spek}$ morphism. From lemma \ref{relationcapping} we conclude that $\psi'$ consists of the 1-$k^\textrm{th}$- and 3-$k^\textrm{th}$-remnants of $\psi$. 
\end{proposition}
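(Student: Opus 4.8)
The plan is to recognise this proposition as a direct instance of Example \ref{relationcapping}, so that almost all the work has already been done; the only thing left to check is that capping a leg with $\epsilon_\textbf{Spek}$ really does match the situation analysed there.

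First I would recall that, as a relation, $\epsilon_\textbf{Spek}:\four\rightarrow\one$ relates exactly $1\sim\ast$ and $3\sim\ast$, while $2$ and $4$ relate to no element; consequently its relational converse $\epsilon_\textbf{Spek}^\dagger:\one\rightarrow\four$ is the state $\{1,3\}\subseteq\four$. Next I would observe that, under the convention of Section \ref{aboutdiagsandstates} in which every leg of a \textbf{Spek}-diagram is treated as an output, plugging the $k^\textrm{th}$ leg of $D$ into $\epsilon_\textbf{Spek}$ produces the same diagram as linking the $k^\textrm{th}$ leg of $\psi$ to the state $\epsilon_\textbf{Spek}^\dagger$ through the identity permutation $1_{\four}$ (which is one of the phased permutations of $\four$): bending the cap around via the compact structure turns the effect $\epsilon_\textbf{Spek}$ into the state $\epsilon_\textbf{Spek}^\dagger$ without changing the relation it names.

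With this identification in place I would apply Example \ref{relationcapping} with $R=\psi$, $i=k$, $S=\epsilon_\textbf{Spek}^\dagger$ and $P=1_{\four}$. It gives that $\psi'$ consists precisely of the $k^\textrm{th}$-remnant of every tuple $(x_1,\dots,x_n)\in\psi$ for which $x_k\in 1_{\four}\bigl(\epsilon_\textbf{Spek}^\dagger(\ast)\bigr)=\{1,3\}$. Splitting this condition according to whether $x_k=1$ or $x_k=3$, this is exactly the union of the $1$-$k^\textrm{th}$-remnants of $\psi$ with the $3$-$k^\textrm{th}$-remnants of $\psi$, which is the claim.

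Alternatively — and this bypasses Example \ref{relationcapping} entirely — I would simply compute the relational composite $(1_{\four}^{k-1}\times\epsilon_\textbf{Spek}\times 1_{\four}^{n-k})\circ\psi$ directly in \textbf{Rel}: a tuple $(x_1,\dots,x_{k-1},x_{k+1},\dots,x_n)$ lies in the composite iff there is some $x_k$ with $(x_1,\dots,x_n)\in\psi$ and $x_k\mathrel{\epsilon_\textbf{Spek}}\ast$, i.e. with $x_k\in\{1,3\}$. Either route makes the proof a one-line unpacking; there is no genuine obstacle here, the only point that needs a moment's care being the passage (through the dagger/compact structure) between the \emph{effect} $\epsilon_\textbf{Spek}$ used to cap the leg and the \emph{state} $\epsilon_\textbf{Spek}^\dagger=\{1,3\}$ that appears in Example \ref{relationcapping}.
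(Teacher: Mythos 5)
Your proposal is correct and follows essentially the same route as the paper, which proves this proposition simply by invoking Example \ref{relationcapping} (with $S=\epsilon_\textbf{Spek}^\dagger=\{1,3\}$ and $P$ the identity); your write-up merely makes explicit the passage between the effect $\epsilon_\textbf{Spek}$ and the state $\epsilon_\textbf{Spek}^\dagger$ that the paper leaves implicit. The alternative direct computation in \textbf{Rel} is a harmless restatement of the same one-line argument.
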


Suppose that in going from $D$ to $D'$ we internalised the  $i^\textrm{th}$ zone of $D$. From the proposition above we deduce that each block $B\subset\psi$ for which $P_i=0$ will give rise to one progeny block $B'\subset\psi'$ with the same number of tuples as $B$, while all those blocks for which $P_i=1$ will give rise to no progeny blocks. From this we can conclude that those blocks which do give rise to progeny blocks satisfy a constraint on their type signatures, derived from setting $P_i=0$ in equation \ref{typeparityEq}. Depending on the form of $\Psi_i(T_i)$, and whether the number of zones adjacent to the $i^\textrm{th}$ is odd or even, this constraint takes one of four forms: 
\begin{equation}\label{type0Eq}
\sum_{j\in\textrm{adj}(i)} T_j = \left\{\begin{array}{c} 0 \\ 1 \end{array}\right.
\end{equation}
\begin{equation}\label{type1Eq}
T_i + \sum_{j\in\textrm{adj}(i)} T_j = \left\{\begin{array}{c} 0 \\ 1 \end{array}\right.
\end{equation}
We describe the constraints in (\ref{type0Eq}) as \emph{type-0 constraints}, and those in (\ref{type1Eq}) as \emph{type-1 constraints}. 

Suppose $D$ has $n$ external legs and $m$ external zones. Suppose that in going to $D'$ we internalise $p$ of its zones. \emph{Each internalisation gives rise to a corresponding constraint.} There are now two possibilities: 
\begin{enumerate} 
\item Not all of the constraints are consistent. In this case none of the blocks in $\psi$ satisfy all of the constraints, and none of them will give rise to progeny blocks. Thus $\psi' = \emptyset$. 
\item All $p$ constraints are consistent, and of these $p'$ are linearly independent (this essentially means that $p-p'$ of the constraints can be derived from the remaining $p'$). Each independent constraint reduces the number of blocks which can give rise to progeny by one half. Thus only $2^{m-p'}$ of the blocks in $\psi$ give rise to progeny blocks in $\psi'$, and $\psi'$ can have at most $2^{m-p'}$ blocks - this maximum is attained if all of the progeny blocks are distinct.
\end{enumerate}

\begin{lemma}\label{duplicationLemma}
The following are equivalent: 
\begin{enumerate}
\item The constraints are consistent and there are $p'$ linearly independent constraints. 
\item The $2^{m-p'}$ blocks in $\psi$ which can give rise to progeny blocks in $\psi'$ are partitioned into $2^{m-p}$ sets, each consisting of $2^{p-p'}$ blocks which all yield identical progeny blocks. Thus in total there are $2^{m-p}$ distinct progeny blocks. For brevity we will describe this as $(p-p')$-\emph{fold duplication} of progeny blocks. 
\end{enumerate}
\end{lemma}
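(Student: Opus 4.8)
The plan is to linearise the whole situation over $\mathbb{F}_2$ and then read the combinatorics off from the dimensions of a couple of subspaces. Write $E$ for the set of the $m-p$ external zones, $I$ for the set of the $p$ internalised ones, identify a type signature with its type vector $T=(T_E,T_I)\in\mathbb{F}_2^m$, and recall two facts. First, by condition (5) of theorem \ref{externalnonphasedTh} the parity vector of the block with type vector $T$ is an affine-linear function $P=MT+v$ of $T$; moreover, since for $i\ne j$ the entry $M_{ij}$ equals $1$ exactly when zones $i$ and $j$ are adjacent, while the intrinsic parities $\Psi_i(T_i)$ only affect diagonal entries, the matrix $M$ is \emph{symmetric}. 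Second, the constraint (\ref{typeconstraintEq}) attached to an internal zone $i$ is exactly the equation ``$P_i=0$'', so the constraint system reads $M_I\,T=v_I$, where $M_I$ is the submatrix of the rows indexed by $I$; its coefficient matrix is $A:=M_I$, and --- since producing a progeny requires $P_i=0$ for every internal $i$ --- a block of $\psi$ can give rise to a progeny block precisely when its type vector lies in the solution set $S$ of this system.

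For the implication (1)$\Rightarrow$(2) I would argue as follows. Consistency together with ``$p'$ linearly independent constraints'' means $\mathrm{rank}(A)=p'$, so $S$ is an affine subspace of $\mathbb{F}_2^m$ of dimension $m-p'$; this gives the stated $2^{m-p'}$ surviving blocks. Each internalised zone carries a single external leg whose entry is forced by $T_I$ and by ``$P_i=0$'', and capping that leg with $\epsilon_{\textbf{Spek}}$ leaves the parities of the external zones unchanged; so, using the counting remark after theorem \ref{internalnonphasedTh} that each block is the \emph{full} set of tuples of its signature, the progeny block of a surviving block is determined by, and determines, the external part of its signature, i.e.\ by $\Phi(T):=\bigl(T_E,(MT+v)_E\bigr)$. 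The map $\Phi\colon\mathbb{F}_2^m\to\mathbb{F}_2^{2(m-p)}$ is affine-linear and $S$ is a coset of $\vec S:=\ker A$, so two surviving type vectors have the same $\Phi$-image iff their difference lies in $\vec S\cap\ker\vec\Phi$; hence every fibre of $\Phi|_S$ is a coset of the fixed subspace $\vec S\cap\ker\vec\Phi$, all fibres share the size $2^{d}$ with $d:=\dim(\vec S\cap\ker\vec\Phi)$, and there are $2^{(m-p')-d}$ distinct progeny blocks. The whole statement thus reduces to $d=p-p'$.

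Computing $d$ is the main obstacle. The linear part of $\Phi$ sends $\delta$ to $(\delta_E,(M\delta)_E)$, so $\delta\in\ker\vec\Phi$ iff $\delta_E=0$ and $(M\delta)_E=0$; intersecting with $\vec S=\ker M_I$ collapses this to
\[
\vec S\cap\ker\vec\Phi=\{\,\delta\in\mathbb{F}_2^m : \delta_E=0,\ M\delta=0\,\}=\ker M\cap\{\delta:\delta_E=0\}\,,
\]
which, by rank--nullity applied to the $p$-dimensional space $\{\delta:\delta_E=0\}$, has dimension $p-\mathrm{rank}(M_{\bullet,I})$, where $M_{\bullet,I}$ is the submatrix of the columns indexed by $I$. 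Now the symmetry of $M$ does the work: $M_{\bullet,I}=(M_I)^{\mathsf{T}}=A^{\mathsf{T}}$, so $\mathrm{rank}(M_{\bullet,I})=\mathrm{rank}(A)=p'$ and hence $d=p-p'$. Feeding this back, $\Phi|_S$ has all fibres of size $2^{p-p'}$ and its image has $2^{(m-p')-(p-p')}=2^{m-p}$ elements --- exactly the asserted $(p-p')$-fold duplication of progeny blocks.

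The converse (2)$\Rightarrow$(1) is then pure bookkeeping: if the surviving blocks split into $2^{m-p}$ groups of $2^{p-p'}$ blocks each with a common progeny, there are $2^{m-p}\cdot 2^{p-p'}=2^{m-p'}$ surviving blocks, hence at least one, so the system is consistent; and since a consistent $\mathbb{F}_2$-linear system in $m$ unknowns has $2^{m-\mathrm{rank}(A)}$ solutions, $\mathrm{rank}(A)=p'$, which is ``$p'$ linearly independent constraints''. I expect the genuine difficulty to lie entirely in the third paragraph --- noticing that $M$ is symmetric and that $M_{\bullet,I}=A^{\mathsf{T}}$, so that the column rank which fixes the duplication factor coincides with the number $p'$ of independent constraints; the rest is manipulation of cosets in $\mathbb{F}_2^m$.
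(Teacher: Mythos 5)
Your proof is correct, but it takes a genuinely different route from the one in the paper. The paper proceeds combinatorially: it introduces adjacency closure sets (ACSs) and the notion of the $R$-mirror of a block, and then establishes three propositions saying, in turn, that two surviving blocks yield identical progeny iff they are mirrored with respect to an ACS, that both members of a mirror pair survive iff the constraints attached to the zones of that ACS satisfy a parity condition, and that this condition is equivalent to a linear dependence among the constraints; the lemma then follows by matching the count of ACSs against the count of dependent sets. Your argument packages all of this into one piece of $\mathbb{F}_2$-linear algebra: the affine map $P=MT+v$ with $M$ symmetric, the observation that survival is the affine system $M_IT=v_I$ and that a progeny block is determined by $(T_E,P_E)$, and the identity $\mathrm{rank}(M_{\bullet,I})=\mathrm{rank}(M_{I,\bullet})=p'$, which is exactly where the paper's mirror/ACS machinery is doing its work (an ACS satisfying the parity condition is precisely a nonzero element of $\ker M\cap\{\delta:\delta_E=0\}$). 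Your version buys two things the paper handles only informally: the fibres of the progeny map are cosets of a fixed subspace, so the asserted equal-size partition into $2^{m-p}$ sets of $2^{p-p'}$ blocks is automatic rather than inferred from a count of ACSs, and the duplication degree $p-p'$ drops out of row rank equals column rank rather than from a somewhat loose enumeration of "linearly dependent sets." What the paper's approach buys is an explicit diagrammatic interpretation of each duplication as an adjacency closure set, which connects the algebra back to the shape of the \textbf{Spek} diagram. One small caveat: your reduction of the progeny block to the data $(T_E,P_E)$ leans on the counting remark that each block is the full set of tuples of its signature and on the convention that every internalised zone carries exactly one capped leg; both are indeed supplied by the paper (the remark after theorem \ref{internalnonphasedTh} and the two-stage $\delta_{\textbf{Spek}}/\epsilon_{\textbf{Spek}}$ construction), so this is a legitimate citation rather than a gap.
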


The proof of this lemma requires a number of preliminary definitions. 

\begin{definition}
The \emph{IZ-set} is the set of zones which are internalised in going from $D$ to $D'$. 
The \emph{non-internalised adjacent zones (nIAZs)} of an internalised zone are all the zones adjacent to it which are not themselves members of the IZ-set. 
An \emph{adjacency closure set (ACS)} is a subset of the IZ-set with the minimal number of elements such that the disjoint union of the nIAZs of each element contains each nIAZ an even number of times. 
\end{definition}

\begin{example}\em
Consider the following seven zone diagram (external legs are suppressed for clarity). The filled-in zones are those which we internalise. 
\begin{equation}
\def\JPicScale{0.6}
\input{ACSexample.pst}
\end{equation}
The IZ-set is $\{1,3,6,7\}$. The nIAZs for 1 are $\{2\}$ for 3 are $\{4,5\}$, for 6 are $\{4\}$ and for 7 are $\{5\}$. Zones 3, 6 and 7 together constitute an ACS. Zone 1 is not part of any ACS. 
\end{example}

\begin{definition}
Given a set $S$ of zones in $D$, and a block $B\subset\psi$, the \emph{$S$-mirror of $B$}, $B_S$ is the block with the same type signature as $B$ except on the zones in $S$, where the types are opposite. 
\end{definition}

\begin{proposition}\label{dupmirrorProp}
Suppose $D$ has an ACS $R$. Now, so long as the blocks $B,B_R\subset\psi$ (i.e. a block and its $R$-mirror) both yield progeny blocks in $\psi'$, these progeny blocks will be identical. Conversely, if any two blocks $B,B'\subset\psi$ yield identical progeny blocks in $\psi'$, they must be mirrored with respect to some ACS in $D$. 
\end{proposition}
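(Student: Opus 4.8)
The plan is to reduce the whole statement to a single $\mathbb{F}_2$-linear computation built from equation~\ref{typeparityEq}. Recall that in the auxiliary diagram $D$ --- where every soon-to-be-internalised zone has been given one spare external leg --- Theorem~\ref{externalnonphasedTh} applies: $\psi$ splits into $2^m$ blocks indexed by the type signatures $(T_1,\dots,T_m)$ over all $m$ zones, the parity $P_i$ of the $i^\textrm{th}$ zone is the affine function of the type signature given in~\ref{typeparityEq}, and a block is completely determined by its signature (the counting argument following Theorem~\ref{internalnonphasedTh}). By the proposition preceding Lemma~\ref{duplicationLemma}, passing to $\psi'$ retains exactly the blocks whose type signature satisfies constraint~\ref{typeconstraintEq} for every internalised zone, and the progeny block of a surviving $B$ is obtained by deleting the IZ-entries from $B$'s signature. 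The first step is therefore the reduction: two surviving blocks $B,B'$ yield the \emph{same} progeny block if and only if they agree, on every non-internalised zone, both on type and on parity.

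The key step is to translate this pair of conditions. Agreement on the non-internalised types says exactly that $B$ and $B'$ differ only on the types of some subset $S\subseteq\mathrm{IZ}$, i.e.\ $B'=B_S$ in the sense of the preceding definition. Granting that, plug $B$ and $B_S$ into~\ref{typeparityEq} at a non-internalised zone $k$: each term $\Psi_k(T_k)$ and $T_k+T_j$ with $j\notin\mathrm{IZ}$ is untouched, so
\begin{equation}
P_k(B)-P_k(B_S)=\sum_{j\in\mathrm{adj}(k)\cap S}1=\bigl|\mathrm{adj}(k)\cap S\bigr|\pmod{2}.
\end{equation}
Hence agreement on the non-internalised parities is equivalent to $|\mathrm{adj}(k)\cap S|$ being even for every non-internalised $k$, that is, to every nIAZ occurring an even number of times in the disjoint union $\bigsqcup_{j\in S}\mathrm{nIAZ}(j)$ --- which is precisely the even-occurrence clause in the definition of an ACS (an ACS being additionally required to be minimal).

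With this equivalence both directions drop out. Forward: if $R$ is an ACS it satisfies the even-occurrence clause, so whenever $B$ and $B_R$ both survive they agree on all non-internalised types (automatically) and, by the equivalence, on all non-internalised parities; since blocks are determined by their signatures, their progeny blocks coincide. Converse: if $B\neq B'$ have identical progeny then $B'=B_S$ for some non-empty $S\subseteq\mathrm{IZ}$, and the equivalence forces $S$ to satisfy the even-occurrence clause, so $S$ is an ACS when it is minimal, and in general a symmetric difference of ACSs --- i.e.\ $B$ and $B'$ are related by ACS-mirrorings.

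I expect the delicate point to be exactly this last line of the converse: the even-occurrence sets form an $\mathbb{F}_2$-subspace of $\mathbb{F}_2^{\mathrm{IZ}}$, and one has to check that its minimal non-zero vectors --- the ACSs --- generate it, so that the informal phrase ``mirrored with respect to some ACS'' is to be read as ``connected by a sequence of ACS-mirrorings''; this bookkeeping is also what will later give the count $2^{p-p'}$ of mutually identical progeny, hence the $(p-p')$-fold duplication of Lemma~\ref{duplicationLemma}. Two smaller points need care as well: the adjacency $\mathrm{adj}(\cdot)$ appearing in~\ref{typeparityEq} must be the one in $D$, where the IZ-zones are still present; and the hypothesis that $B$ and $B_R$ \emph{both} survive is exactly what lets us sidestep the independent question of whether $B_R$ happens to satisfy the internalisation constraints.
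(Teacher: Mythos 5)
Your proposal is correct and follows essentially the same route as the paper's proof: both reduce the claim to the observation that identical progeny requires identical type and parity on all non-internalised zones, and both use equation \ref{typeparityEq} to show that mirroring on a set $S\subseteq\mathrm{IZ}$ flips the parity of a visible zone $k$ exactly when $|\mathrm{adj}(k)\cap S|$ is odd, so that the flips cancel precisely when $S$ satisfies the even-occurrence clause of the ACS definition. Your treatment is in fact more careful than the paper's on the converse, where you correctly note that the set on which two blocks with identical progeny differ need only lie in the $\mathbb{F}_2$-span of the ACSs rather than be a single ACS, a point the paper's terse proof glosses over.
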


\begin{proof}
For two blocks in $\psi$ to give identical progeny blocks in $\psi'$ they must have identical type and parity on every zone which is not internalised. Note from equation \ref{typeparityEq} that two blocks in $\psi$ which are type mirrored on a single zone will otherwise differ only in parities on all the zones adjacent to this zone. Now consider two blocks which are type-mirrored on a set of zones $R$: in the case where $R$ constitutes an ACS all the parity flips predicted by equation \ref{typeparityEq} cancel one another out on the zones which will still be visible in $\psi'$.
\end{proof}

\begin{proposition}\label{mirrorconstraintProp}
Suppose $D$ has an ACS $R$ whose member zones have corresponding constraints which satisfy the following condition: those zones with an odd number of adjacent zones within the IZ-set have type-1 constraints, while those zones with an even number of adjacent zones within the IZ-set have type-0 constraints. Then if a block $B\subset\psi$ satisfies the constraints and gives rise to progeny blocks in $\psi'$, so does its $R$-mirror. Conversely, if a block $B\subset\psi$ and its mirror with respect to some ACS $R$ in $D$ both give rise to progeny blocks, the constraints corresponding to the zones of $R$ must satisfy the condition above. 
\end{proposition}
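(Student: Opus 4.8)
The plan is to recast the whole statement as a fact about affine $\mathbb{F}_2$-linear equations in the type variables and then to compute, once and for all, the constant by which $R$-mirroring perturbs each constraint.

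I would begin by noting that each of the constraints of equation~\ref{typeconstraintEq} --- one for every internalised zone --- is an affine equation over $\mathbb{F}_2$ in the Boolean type variables $\{T_k\}$. Writing $\Psi_i(T_i)=\Psi_i(0)+a_iT_i$, where $a_i\in\{0,1\}$ records whether the intrinsic parity of the $i^{\textrm{th}}$ zone depends on its type, the constraint $c_i$ becomes $(a_i+|\textrm{adj}(i)|)\,T_i+\sum_{j\in\textrm{adj}(i)}T_j=\Psi_i(0)$, so its $T_i$-coefficient $d_i:=a_i+|\textrm{adj}(i)|\bmod 2$ equals $1$ exactly when $c_i$ is a type-$1$ constraint and $0$ exactly when it is type-$0$ (compare equations~\ref{type0Eq} and~\ref{type1Eq}). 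In this notation the type-$0$/type-$1$ hypothesis on $R$ says precisely that $d_i\equiv|\textrm{adj}(i)\cap\textrm{IZ}|\pmod 2$ for every $i\in R$, writing $\textrm{IZ}$ for the IZ-set.

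Next I would compute the effect on the left-hand side of a constraint $c_j$ of the substitution $T_k\mapsto T_k+[k\in R]$ that sends a block $B$ to its $R$-mirror $B_R$. Since $c_j$ is affine, this left-hand side changes by a constant $\Delta_j(R)\in\mathbb{F}_2$ independent of $B$ --- namely the sum over $k\in R$ of the $T_k$-coefficients of $c_j$ --- and a short calculation gives $\Delta_j(R)=d_j\,[j\in R]+|\textrm{adj}(j)\cap R|\bmod 2$. Because the block $B$ in the statement already satisfies every constraint (it yields a progeny block), $B_R$ satisfies $c_j$ if and only if $\Delta_j(R)=0$; hence $B_R$ gives rise to a progeny block iff $\Delta_j(R)=0$ for every internalised $j$, and, running this backwards, if $B$ and $B_R$ both yield progeny blocks then every $\Delta_j(R)$ vanishes. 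Both directions of the proposition are thereby reduced to the single combinatorial equivalence: $\Delta_j(R)=0$ for all internalised zones $j$ if and only if the zones of $R$ satisfy the stated type condition.

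To prove this equivalence I would split the internalised zones into those lying in $R$ and those lying in $\textrm{IZ}\setminus R$, and use the defining property of an ACS --- that $\bigsqcup_{i\in R}\textrm{nIAZ}(i)$ meets every non-internalised zone an even number of times, i.e.\ $|\textrm{adj}(k)\cap R|$ is even for all $k\notin\textrm{IZ}$ --- together with the handshake identity for edges running between $R$ and its complement. For $j\in\textrm{IZ}\setminus R$ one finds $\Delta_j(R)=|\textrm{adj}(j)\cap R|$, while for $j\in R$, after inserting the hypothesised value of $d_j$, one finds $\Delta_j(R)=|\textrm{adj}(j)\cap(\textrm{IZ}\setminus R)|$; reading the implications the other way recovers the type condition from ``all $\Delta_j(R)=0$''. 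The real work --- and the main obstacle --- is exactly this last step: the careful bookkeeping of the three kinds of adjacency of a zone (to $R$, to $\textrm{IZ}\setminus R$, and to the exterior) and, above all, showing that the ACS condition, which a priori controls only adjacencies to non-internalised zones, combines with the minimality built into the definition of an ACS to force $\Delta_j(R)=0$ for internalised zones outside $R$ as well. One should also check that the degenerate configurations (a zone $\Sigma$-linked to itself, or two zones joined by several $\Sigma$-links) merely turn the relevant cardinalities into multiset cardinalities taken mod $2$ and otherwise leave the argument untouched.
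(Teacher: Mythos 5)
Your reduction is sound and is in fact the same one the paper makes: the $R$-mirror changes the left-hand side of each constraint $c_j$ by the constant $\Delta_j(R)=d_j[j\in R]+|\textrm{adj}(j)\cap R|\bmod 2$, i.e.\ by the parity of the number of terms of $c_j$ indexed by zones of $R$, so a block and its $R$-mirror either both satisfy $c_j$ or both violate it exactly when this parity is even. The paper's own proof consists of precisely this observation and nothing more. Your intermediate formulas, $\Delta_j(R)=|\textrm{adj}(j)\cap R|$ for $j\in\textrm{IZ}\setminus R$ and $\Delta_j(R)=|\textrm{adj}(j)\cap(\textrm{IZ}\setminus R)|$ for $j\in R$ after inserting the type hypothesis $d_j\equiv|\textrm{adj}(j)\cap\textrm{IZ}|$, are also correct.

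The gap is the step you yourself flag as ``the real work'': showing that these quantities vanish for every internalised $j$. This does not follow from the stated definition of an ACS, which constrains only $|\textrm{adj}(k)\cap R|$ for \emph{non-internalised} $k$, nor from minimality, nor from the handshake identity, which yields only the single global parity $\sum_{j\in R}|\textrm{adj}(j)\cap(\textrm{IZ}\setminus R)|\equiv\sum_{j\in\textrm{IZ}\setminus R}|\textrm{adj}(j)\cap R|$ rather than the zone-by-zone vanishing you need. A concrete obstruction: take three zones in a chain $1$--$2$--$3$ with $1$ and $2$ internalised and $3$ external; then $\textrm{nIAZ}(1)=\emptyset$, so $R=\{1\}$ is a minimal nonempty set with the required evenness property, yet $\Delta_1(R)=|\textrm{adj}(1)\cap(\textrm{IZ}\setminus R)|=1$ and $\Delta_2(R)=|\textrm{adj}(2)\cap R|=1$, so mirroring on $R$ breaks both constraints even when the type condition holds. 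What actually makes the proposition work is an additional property that the paper's worked example happens to satisfy and that the name ``adjacency closure set'' suggests but the printed definition does not state: $R$ must be closed under adjacency within the IZ-set, so that $\textrm{adj}(j)\cap\textrm{IZ}\subseteq R$ for $j\in R$ and $\textrm{adj}(j)\cap R=\emptyset$ for $j\in\textrm{IZ}\setminus R$. With that reading both your argument and the paper's two-sentence proof close immediately; without it neither does, and your proposal leaves exactly that step unproven.
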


\begin{proof}
If a constraint contains an even number of terms relating to zones from a set $R$ then given a block $B\subset\psi$ either (i) both $B$ and $B_R$ satisfy the constraint (ii) neither $B$ nor $B_R$ satisfy the constraint. Conversely, if both $B$ and $B_R$ satisfy a constraint, it must contain an even number of terms relating to zones from $R$. 
\end{proof}


\begin{proposition}\label{constraintlindepProp}
Suppose $D$ has an ACS $R$ whose member zones have corresponding constraints which satisfy the condition in the previous proposition. Then the constraints together form a linearly dependent set. The converse is also true. 
\end{proposition}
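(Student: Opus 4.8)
The plan is to translate everything into linear algebra over $\mathbb{F}_2$ and reduce the claim to a single mod-$2$ counting identity. To each internalised zone $i$ attach the vector $v_i$, indexed by the zones of the diagram, which records the linear part of the constraint obtained by setting $P_i=0$ in (\ref{typeparityEq}): by (\ref{type0Eq}) and (\ref{type1Eq}) we have $v_i=\sum_{j\in\textrm{adj}(i)}e_j$ when the constraint at $i$ is of type-$0$, and $v_i=e_i+\sum_{j\in\textrm{adj}(i)}e_j$ when it is of type-$1$. Saying that ``the constraints form a linearly dependent set'' means exactly that the family $\{v_i\}_{i\in R}$ is $\mathbb{F}_2$-linearly dependent; since $R$ is non-empty this follows once we show $\sum_{i\in R}v_i=0$, and conversely any non-trivial relation $\sum_{i\in S}v_i=0$ with $S\subseteq R$ must in fact have $S=R$: restricting the relation to the coordinates indexed by non-internalised zones kills every $e_i$ and every $e_j$ with $j$ internalised, leaving $\sum_{i\in S}\chi_{\textrm{nIAZ}(i)}=0$, so $S$ itself has the evenness property from the ACS definition, whence $|S|\ge|R|$ by minimality.

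For the forward direction I would compute $\sum_{i\in R}v_i$ coordinate by coordinate. The coefficient of $e_k$ is $\tau_k\,[\,k\in R\,]+\#\{\,i\in R : i\in\textrm{adj}(k)\,\}\pmod 2$, where $\tau_k=1$ precisely when the constraint at $k$ is of type-$1$. There are then two independent sources of cancellation. First, the defining property of the ACS: the multiset of all adjacency-terms of the zones of $R$ pointing outside $R$ hits every zone an even number of times --- this is the literal evenness condition on the non-internalised neighbours, together with the ``closure'' aspect of an adjacency closure set, which handles the terms pointing to internalised zones that are not themselves in $R$. This makes the coefficient of $e_k$ vanish for every $k\notin R$. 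Second, for $k\in R$ the hypothesis imported from Proposition~\ref{mirrorconstraintProp} says $\tau_k$ equals the parity of the number of neighbours of $k$ lying in the IZ-set, and for a zone of an ACS this number is exactly $\#\{\,i\in R:i\in\textrm{adj}(k)\,\}$; hence the two contributions to the coefficient of $e_k$ coincide and cancel. Therefore $\sum_{i\in R}v_i=0$, giving linear dependence.

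For the converse I would run the same coordinate computation backwards. Starting from $\sum_{i\in R}v_i=0$ (obtained from linear dependence via the minimality argument above), reading the $e_k$-coefficient for $k$ an internalised zone outside $R$ forces $\#\{\,i\in R:i\in\textrm{adj}(k)\,\}$ to be even, i.e. $R$ is adjacency-closed within the IZ-set; reading it for $k\in R$ then forces $\tau_k$ to equal the parity of $|\textrm{adj}(k)\cap\textrm{IZ}|$, which is exactly the type-$0$/type-$1$ dichotomy demanded in Proposition~\ref{mirrorconstraintProp}. The two directions together give the asserted equivalence, which is the last ingredient feeding (with Propositions~\ref{dupmirrorProp}--\ref{mirrorconstraintProp}) into Lemma~\ref{duplicationLemma}.

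The step I expect to be the crux is the first source of cancellation in the forward direction: arguing carefully, straight from the definition of an adjacency closure set, that \emph{all} adjacency-terms of the zones of $R$ that leave $R$ cancel mod $2$ --- not only those landing on non-internalised zones (which is the condition stated in the definition) but also those landing on other internalised zones. This is exactly the point at which one must exploit that an ACS is genuinely ``closed'' under internal adjacency, equivalently that no zone outside $R$ is adjacent to an odd number of zones of $R$ and no zone of $R$ is adjacent to an odd number of internalised zones outside $R$; once that bookkeeping is pinned down, everything else is routine $\mathbb{F}_2$ arithmetic.
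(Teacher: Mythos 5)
Your proof is correct and follows essentially the same route as the paper's: the paper likewise sums the constraints attached to the zones of $R$ and observes that the ACS property together with the type-$0$/type-$1$ condition makes every term appear an even number of times, so the sum collapses to $0=0$, which it takes as necessary and sufficient for linear dependence. You are in fact more careful than the paper on precisely the two points you flag --- the cancellation of adjacency terms landing on internalised zones outside $R$, and the minimality argument needed to get the converse from a dependence relation --- both of which the paper's three-line proof passes over in silence.
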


\begin{proof} 
If $R$ is an ACS and the condition on constraints is satisfied then each term appears in the constraints an even number of times altogether. Summing all the constraints together then results in all the terms cancelling out, yielding the single equation 0=0. This is a necessary and sufficient condition for the constraints to be linearly dependent. 
\end{proof}

We can now prove lemma \ref{duplicationLemma}.

\begin{proof}
From propositions \ref{dupmirrorProp} and \ref{mirrorconstraintProp} we conclude that for $n$-fold duplication to take place the IZ-set must contain $n$ ACSs and that the constraints must satisfy the condition in proposition \ref{mirrorconstraintProp}. The converse is also clearly true. 
Every linearly dependent set amongst the constraints corresponding to the internalised zones reduces the total number of linearly independent constraints by one. From proposition \ref{constraintlindepProp} we conclude that for there to be $n$ linearly dependent sets the IZ-set must contain $n$ ACSs and that the constraints must satisfy the condition in proposition \ref{mirrorconstraintProp}. The converse is also clearly true. 
Thus we conclude that both statements in the lemma are equivalent to a third statement: the IZ-set contains  $p-p'$ ACSs, and the constraints corresponding to the internalised zones satisfy the condition of proposition \ref{mirrorconstraintProp}. 
\end{proof}

\begin{corollary}
Given a diagram $D$ without internal zones which satisfies the conditions in theorem \ref{externalnonphasedTh}, a diagram $D'$ with internal zones formed by capping off external legs of $D$ with $\epsilon_\textbf{Spek}$ morphisms will satisfy the conditions in theorem \ref{internalnonphasedTh}. 
\end{corollary}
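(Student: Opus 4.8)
The plan is to assemble ingredients that are already in place rather than to re-derive anything. Write $D$ for a diagram with no internal zones satisfying theorem \ref{externalnonphasedTh}, with $m$ (all external) zones and $n$ external legs, and $\psi$ for its state; following the two-stage construction above, $D'$ is obtained from $D$ by capping, with an $\epsilon_\textbf{Spek}$ morphism, the single extra external leg attached to each of the $p$ zones that are to be internalized, so that $D'$ has $m'=m-p$ external zones. The key structural remark is that each such zone carries exactly one external leg in $D$, so its ``zone'' inside a tuple of $\psi$ is a single entry, and capping that leg is precisely the passage to the corresponding remnant.

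First I would record the effect of a single cap. By the proposition deduced from lemma \ref{relationcapping}, $\psi'$ is the union over the capped legs $k$ of the $1$-$k^\textrm{th}$- and $3$-$k^\textrm{th}$-remnants of $\psi$. Since the internalized zone containing leg $k$ has a single entry whose value is pinned down by that zone's parity and type, a block of $\psi$ either keeps all of its tuples under this operation (when the zone's parity forces the entry into $\{1,3\}$, i.e.\ when the signature satisfies $P_i=0$) or loses all of them. Hence the surviving blocks are exactly those whose signatures meet the condition $P_i=0$, which by equation \ref{typeparityEq} is the type-signature constraint \ref{typeconstraintEq}; doing this for all $p$ internalized zones imposes the full system of $p$ constraints. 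If this system is inconsistent no block survives and $\psi'=\emptyset$, and inconsistency is the only way $\emptyset$ can arise; this gives the dichotomy of theorem \ref{internalnonphasedTh} and disposes of the empty case.

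In the consistent case, suppose $p'$ of the $p$ constraints are linearly independent, so that $2^{m-p'}$ blocks of $\psi$ survive. Lemma \ref{duplicationLemma} tells us these are partitioned into $2^{m-p}=2^{m'}$ families of $2^{p-p'}$ blocks, the blocks in each family being mirrored across ACSs and hence, by proposition \ref{dupmirrorProp} together with the fact that every tuple of a given signature occurs in its block, producing literally the same set of remnant tuples. Deleting the single-entry internalized zones from each surviving block is exactly the remnant operation, so the blocks of $\psi'$ are precisely the restrictions to the external zones of the constraint-satisfying blocks of $\psi$ --- which is exactly steps (1)--(3) of theorem \ref{internalnonphasedTh} --- and there are $2^{m'}$ distinct such blocks. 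Condition (2) of theorem \ref{externalnonphasedTh} passes to $\psi'$ because remnant-taking deletes only entries of internalized zones and disturbs nothing else, and condition (1) follows by counting: each surviving block of $\psi$ has $2^{(n+p)-m}=2^{n-m'}$ tuples (theorem \ref{externalnonphasedTh} applied to $D$, whose external-leg count is $n+p$), capping removes $p$ entries without identifying any two tuples of a block, and mirrored parents contribute the same tuples, so each of the $2^{m'}$ blocks of $\psi'$ has $2^{n-m'}$ tuples, for $2^n$ in all, with every signature-consistent tuple present.

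The one genuinely delicate point --- and the step I expect to be the main obstacle --- is the counting in the presence of duplication: one must be certain that the $2^{p-p'}$ mirrored parent blocks feeding a single progeny block contribute identical tuple-sets, not merely blocks of the same signature, so that no tuples are lost and none is double-counted. This is exactly where the ACS bookkeeping of propositions \ref{dupmirrorProp}--\ref{constraintlindepProp} earns its keep, reconciling ``$p-p'$ linearly dependent constraints'' with ``$(p-p')$-fold duplication'' through lemma \ref{duplicationLemma}; once that is granted, the remainder of the corollary is a matter of unwinding the definitions of block, signature, and remnant.
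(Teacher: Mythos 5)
Your proposal is correct and follows essentially the same route as the paper: the corollary is assembled from the capping proposition (surviving blocks are those whose signatures satisfy $P_i=0$, i.e.\ the constraints of equation \ref{typeconstraintEq}), the consistent/inconsistent dichotomy, and lemma \ref{duplicationLemma} reconciling the $2^{m-p'}$ surviving parent blocks with $2^{m'}$ distinct progeny blocks; you have merely made explicit the assembly the paper leaves implicit. The only blemish is a notational slip --- you first give $D$ itself $n$ external legs but later count it as having $n+p$ --- which does not affect the argument.
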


\textbf{Theorem \ref{internalnonphasedTh}} follows as a straightforward corollary. 

\subsection{The general form of the morphisms of  MSpek}

\begin{theorem}\em\label{MSpeknumbers}
All \textbf{MSpek} morphisms of type $\one\rightarrow\four^n$ are subsets of $\four^n$ containing $2^n, 2^{n+1},\dots,2^{2n-1}$ or $2^{2n}$ $n$-tuples. 
\end{theorem}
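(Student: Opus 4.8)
The plan is to reduce the statement to the structure theorems for \textbf{Spek} already established (Theorems~\ref{externalnonphasedTh} and~\ref{internalnonphasedTh}) and then carry out a combinatorial count. First I would observe that an \textbf{MSpek} diagram representing a state $\one\rightarrow\four^n$ which uses $k$ copies of the extra generator $\bot_\textbf{MSpek}$ can be rewritten as $\psi = S\circ\bigl(\bot_\textbf{MSpek}^{\otimes k}\bigr)$, where $S$ is the diagram obtained by deleting the $k$ occurrences of $\bot_\textbf{MSpek}$ and treating the $k$ wires thereby freed as inputs. Since $\bot_\textbf{MSpek}$ is the only generator not already present in \textbf{Spek}, $S$ is a genuine \textbf{Spek} diagram, which we regard as a morphism of type $\four^k\rightarrow\four^n$, and since $\bot_\textbf{MSpek}^{\otimes k}(\ast)=\four^k$ this gives
\[
\psi\;=\;S(\four^k)\;=\;\bigl\{\,w\in\four^n\;\big|\;\exists\,v\in\four^k,\ (v,w)\in S\,\bigr\}\,,
\]
i.e.\ $\psi$ is the projection onto the last $n$ coordinates of the subset $\hat S\subseteq\four^{k+n}$ which represents $S$ as a $(k{+}n)$-legged \textbf{Spek} diagram. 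If $\hat S=\emptyset$ then $\psi=\emptyset$; otherwise Theorems~\ref{externalnonphasedTh}--\ref{internalnonphasedTh} tell us that $\hat S$ is a disjoint union of $2^{m'}$ equally sized blocks, one for each of the $2^{m'}$ type signatures over the $m'$ external zones of $S$, that inside each block the $i^\textrm{th}$ zone (with $n_i$ external legs) ranges over all $2^{n_i-1}$ words of its type-$T_i$ alphabet having the prescribed parity $P_i$, and that the parity signature $(P_i)_i$ is an affine $\mathbb{F}_2$-function of the type signature $(T_i)_i$ (equation~\ref{typeparityEq}).

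Next I would compute $|\psi|$ zone by zone and block by block. Classify each external zone of $S$ as \emph{pure-output} (none of its legs carries a $\bot_\textbf{MSpek}$), \emph{mixed} (some do and some do not) or \emph{pure-input} (all do); write $o$, $\mu$, $\iota$ for their numbers, so $o+\mu+\iota=m'$, and let $b_i$ be the number of non-$\bot$ legs of zone $i$. For a single block $B$ I would first check that $|\pi(B)|=2^{\,n-o}$: a pure-output zone keeps its parity constraint and contributes the factor $2^{b_i-1}$, whereas a zone with at least one $\bot$-leg contributes the full factor $2^{b_i}$, because one of its $\bot$-legs absorbs the parity constraint for any choice of the remaining entries. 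Then I would establish the dichotomy that for two distinct blocks the projections $\pi(B)$ and $\pi(B')$ are either equal or disjoint, and that they are equal precisely when $B$ and $B'$ agree on the types of all output-touching (pure-output and mixed) zones and on the parities of all pure-output zones. Finally I would count the number $D$ of distinct block-projections: blocks with different output-touching type signatures always have disjoint projections, contributing a factor $2^{o+\mu}$; and among the $2^{\iota}$ blocks sharing a given output-touching type signature, two have the same projection iff they induce the same parity vector on the pure-output zones, which by equation~\ref{typeparityEq} is an affine function of the pure-input types whose linear part is the $\mathbb{F}_2$-matrix $A$ recording which pure-output zones are linked to which pure-input zones, so there are $2^{\operatorname{rank}(A)}$ of them. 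Hence $D=2^{\,o+\mu+\operatorname{rank}(A)}$ and
\[
|\psi|\;=\;D\cdot 2^{\,n-o}\;=\;2^{\,n+\mu+\operatorname{rank}(A)}\,.
\]
Since $\operatorname{rank}(A)\le o$ and since every mixed zone and every pure-output zone owns at least one non-$\bot$ leg, whence $\mu+o\le n$, we conclude $0\le\mu+\operatorname{rank}(A)\le n$, i.e.\ $|\psi|\in\{2^n,2^{n+1},\dots,2^{2n}\}$ (apart from the degenerate case $\psi=\emptyset$).

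The step I expect to be the only genuine obstacle is the bookkeeping forced by the fact that the partition of the legs of $S$ into ``$\bot$-legs'' and ``free legs'' need not respect the zone decomposition of $S$, which is what obliges us to single out the mixed zones; combined with the fact that the parity signature of $\hat S$ in general depends --- still affinely, by equation~\ref{typeparityEq} applied before internal zones are deleted --- on the types of the \emph{internal} zones of $S$ as well, which is precisely what allows the linear part $A$ to have positive rank and hence produces the full range $2^n,\dots,2^{2n}$. Proving the equal-or-disjoint dichotomy and carrying out the rank count carefully is where the real work lies; the rest is mechanical given the already-established form of the morphisms of \textbf{Spek}.
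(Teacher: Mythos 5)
Your argument is correct (at the same level of rigour as the rest of the paper), but it takes a genuinely different route from the one in the text. The paper's proof adds the $\bot_\textbf{MSpek}$ caps \emph{one at a time}: it observes that capping a single leg of a state $\psi$ replaces $\psi$ by its set of $i^\textrm{th}$-remnants, asserts that either all these remnants are distinct or the tuples pair up into classes with identical remnants, and concludes that each cap either halves the number of tuples or leaves it unchanged; starting from the $2^{N}$ tuples of the underlying \textbf{Spek} state (and using the trivial bound $|\psi|\le 4^n$) this gives the claimed range. You instead perform all $k$ cappings at once, writing $\psi$ as the projection of the $(k{+}n)$-legged \textbf{Spek} state $\hat S$ and exploiting the block decomposition of Theorems~\ref{externalnonphasedTh} and \ref{internalnonphasedTh} to get a closed formula $|\psi|=2^{\,n+\mu+\operatorname{rank}(A)}$. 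Your route is heavier on bookkeeping but buys two things: an explicit formula for the cardinality rather than just its possible range, and it avoids a weak point of the paper's induction, namely that the ``halves or stays the same'' dichotomy is only argued (and only informally, as ``it is clear that'') for the first cap applied to a genuine \textbf{Spek} state, whereas subsequent caps act on states that are no longer \textbf{Spek} states. Two small caveats on your version: when $S$ has internal zones the linear part $A$ of the map from pure-input types to pure-output parities is not literally the direct adjacency matrix but the effective one obtained after eliminating the internal types via the constraints of equation~\ref{typeconstraintEq} --- this does not affect your conclusion, since all you use is that the map is affine into $\mathbb{F}_2^{\,o}$, whence $\operatorname{rank}(A)\le o$; and both your proof and the paper's quietly set aside the case $\hat S=\emptyset$ permitted by Theorem~\ref{internalnonphasedTh} (you at least flag it, the paper does not), which strictly speaking is an exception to the statement of Theorem~\ref{MSpeknumbers} as written.
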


\begin{proof}
Any \textbf{MSpek} diagram $D'$ can be obtained from a \textbf{Spek} diagram $D$ simply by capping one or more legs of $D$ with the morphism $\bot_\textbf{MSpek}$. 
Suppose we obtain $D'$ by capping a single external leg of $D$ (the $i^\textrm{th}$, say) with $\bot_\textbf{MSpek}$. The state $\psi'$ corresponding to $D'$ consists of the $i^\textrm{th}$-remnants of $\psi$, the state corresponding to $D$. Suppose $D$ has $n$ external legs: since $D$ is a \textbf{Spek}-diagram $\psi$ consists of $2^n$ tuples. Then, unless some of the $i^\textrm{th}$-remnants of $\psi$ are identical, $\psi'$ will also consist of $2^n$ tuples, despite only having $n' = n-1$ external legs. Furthermore it is
 clear that either all the $i^\textrm{th}$-remnants of $\psi$ are distinct, or $\psi$ is partitioned into pairs, the elements of which yield identical $i^\textrm{th}$-remnants, meaning that the addition of each $\bot_\textbf{MSpek}$ cap either halves the number of tuples or leaves it unchanged. 
\end{proof}

\section{MSpek is the process category for the toy theory}

We know that the epistemic states of the toy theory are subsets of the sets $\four^n$, and that the transformations on these states are relations between these sets. Thus we can see immediately that the toy theory's process category must be some sub-category of \textbf{FRel}, restricted to the objects $\four^n$. Furthermore we know that it cannot be the full sub-category restricted to these objects, since some subsets of $\four^n$ clearly violate the knowledge balance principle. We will show now that  (a strong candidate for)\footnote{Given that the toy theory is in fact not unambiguously defined for more than three systems, there may be other extensions too.  Ours is the minimal extension given compositional closure.}  the process category for the toy theory in its entirety is \textbf{MSpek}, while if we restrict the toy theory to states of maximal knowledge (consistent with the knowledge balance principle), the process category is \textbf{Spek}. 

\begin{proposition}
The morphisms of the process category of the toy theory are closed under composition, Cartesian product and relational converse. 
\end{proposition}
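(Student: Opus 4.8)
The plan is to establish the three closure properties directly from the operational definition of the toy theory's transformations --- the three principles of Spekkens's construction --- rather than from any explicit description of those transformations, since the latter is precisely the business of the later sections. The key preliminary observation I would record first is that the operational content of principle~(3) is slightly stronger than ``a valid transformation applied to a valid state yields a valid state'': because an experimenter may always have prepared an ancillary system elsewhere, a relation $R\colon\four^m\to\four^n$ should be called a valid transformation only when $(R\times 1_{\four^k})(\psi)$ is a valid state for \emph{every} $k\ge 0$ and \emph{every} valid state $\psi$ on $\four^{m+k}$. With this reformulation in hand the two ``parallel'' closures are routine. Closure under composition is immediate, since $(g\circ f)\times 1_{\four^k}=(g\times 1_{\four^k})\circ(f\times 1_{\four^k})$ and one applies validity of $f$ and then of $g$. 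Closure under Cartesian product reduces to closure under composition via $f\times g=(f\times 1)\circ(1\times g)$, modulo the symmetry isomorphisms of \textbf{Rel}; those isomorphisms merely permute the entries of tuples, hence are valid transformations and remain so when tensored with identities, so nothing is lost.

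The substantive step --- and the one I expect to carry essentially all of the content --- is closure under relational converse. The route I would take is to exploit that the basic operations of the toy theory come in a small number of flavours and to check that the relational converse of each is again an operation of the theory, then to lift this to arbitrary processes using the (contravariant) compatibility of $(-)^c$ with $\circ$ and its compatibility with $\times$. Concretely: the converse of a reversible transformation is reversible, hence valid; a measurement-induced disturbance conditioned on an outcome whose answer-set is $S$ is the re-randomisation relation $S\times S$, which as a subset of $\four^n\times\four^n$ is symmetric and therefore equal to its own converse (likewise the non-selective version $\bigcup_i S_i\times S_i$); and the converse of the preparation of a valid epistemic state $S$ (as a relation $\one\to\four^n$) is the effect ``is the ontic state consistent with $S$?'' ($\four^n\to\one$), which is a legitimate measurement outcome precisely because in the toy theory the valid epistemic states and the valid measurement-outcome sets are the \emph{same} subsets --- those of cardinality $2^n,2^{n+1},\dots,2^{2n}$ meeting the marginal condition --- and symmetrically the converse of an effect is a preparation.

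One point that needs care in executing the converse step is the status of the empty relation: conditioning on an outcome inconsistent with the current epistemic state produces $\emptyset$, and for the converse bookkeeping (and for compositions generally) to close up, $\emptyset$ must be admitted as a degenerate morphism, which is harmless and is in any case consistent with the explicit appearance of $\emptyset$ among the morphisms in Theorem~\ref{internalnonphasedTh}. Modulo this, and modulo making the claimed coincidence between valid states and valid effects precise, the proposition follows. I would therefore expect the write-up of the relational-converse case, rather than the composition or Cartesian-product cases, to require the real argument.
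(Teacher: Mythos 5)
Your proposal is correct, and for the parts that carry no real content (composition and Cartesian product) it matches the paper, which likewise treats these as operational axioms; your ancilla-strengthened reading of principle~(3) is a slightly more careful packaging of the same point. Where you genuinely diverge is in how closure under relational converse is lifted from basic operations to arbitrary processes. You argue generator-wise: the converse of each flavour of primitive (reversible map, disturbance $S\times S$, preparation/effect) is again a primitive, and then $(g\circ f)^{c}=f^{c}\circ g^{c}$ and $(f\times g)^{c}=f^{c}\times g^{c}$ propagate this to every composite. The paper instead isolates the single state $\Psi_{\textbf{Spek}}=\{(1,1),(2,2),(3,3),(4,4)\}\subset\four\times\four$, observes that it and its converse form a compact structure on $\four$, and uses the resulting map-state duality: any morphism is bent into a state, states are closed under converse because the disturbance for the outcome $\psi$ factors as $\psi\circ\psi^{\dag}$, and the wire is bent back. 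Both routes rest on the same key physical observation --- that valid epistemic states and valid measurement-outcome sets coincide, so the converse of a preparation is an effect of the theory --- and your version is arguably the more elementary one, needing only functoriality of $(-)^{c}$ rather than compact closure. What the paper's route buys is that the compact structure $\Psi_{\textbf{Spek}}$ is exhibited explicitly inside the toy theory, which is then reused immediately afterwards (it is composed with the GHZ-like state to produce the generator $\delta_{\textbf{Spek}}$), so identifying it here is doing double duty. Your remark about admitting the empty relation is a legitimate care point that the paper passes over silently.
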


\begin{proof}
There is no feature of the toy theory which would put any restrictions on which operations could be composed, so we expect the states and transformations to be closed under composition. Since the Cartesian product is used by the toy theory to represent composite systems we also expect the states and transformations to be closed under Cartesian product.

Every epistemic state corresponds to an outcome for at least one measurement (measurements correspond to asking as many questions as possible from canonical sets, epistemic states correspond to the answers). Recalling the discussion of measurement in section 2, we see that given a state $\psi\subset\four^n$ the disturbance resulting from the corresponding measurement outcome can be decomposed as $\psi\circ\psi^\dag$, where $\psi^\dag$ is the relational converse of $\psi$. Thus we expect the relational converse of each state also to feature in the physical category of the theory. The toy theory state corresponding to the subset $\Psi_\textbf{Spek} = \{(1,1),(2,2),(3,3),(4,4)\}\subset\four\times\four$ along with its relational converse are then easily seen to constitute a compact structure on $\four$. We thus have map-state duality, and it is straightforward then to show that if states are closed under relational converse, so is any morphism in the physical category. 
\end{proof}

Note that this point sharpens our discussion about the consistency of the toy theory, in  section 2. If the states and transformations which Spekkens has derived for up to three systems, under the operations of composition, Cartesian product and relational converse, yield states which violate the knowledge balance principle, then the theory as presented is inconsistent. 

\begin{proposition}
All of the generating morphisms of \textbf{MSpek} are states or transformations of the toy theory, or can be derived from them by composition, Cartesian product or relational converse. 
\end{proposition}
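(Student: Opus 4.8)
The plan is to go through the five kinds of generator of \textbf{MSpek} in turn, disposing of four of them immediately and concentrating the real work on $\delta_{\textbf{Spek}}$.

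First the easy cases. The permutations $\sigma_i:\four\rightarrow\four$ are, as recalled in Section 2, exactly the valid reversible transformations on a single elementary system, so each of the $24$ of them is a toy transformation outright. The generator $\bot_{\textbf{MSpek}}:\one\rightarrow\four::\ast\sim\{1,2,3,4\}$ is, viewed as a state, the subset $\{1,2,3,4\}\subseteq\four$, which is the $2^{2\cdot 1}$-element epistemic state of total ignorance of a single system --- a valid epistemic state. The relation $\epsilon_{\textbf{Spek}}:\four\rightarrow\one$ is the relational converse of the state $\{1,3\}\subseteq\four$, which is one of the six maximal-knowledge epistemic states of a single system; by the closure of the toy theory's process category under relational converse established in the previous proposition, $\epsilon_{\textbf{Spek}}$ lies in it. Finally the unit, associativity and symmetry natural isomorphisms are mere relabellings and reorderings of the ontic state spaces of compound systems, hence trivially valid reversible transformations.

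The substantial case is $\delta_{\textbf{Spek}}:\four\rightarrow\four\times\four$, and here I would exploit map-state duality. Since the toy theory's process category contains the compact structure $\Psi_{\textbf{Spek}}=\{(1,1),(2,2),(3,3),(4,4)\}$ (hence, by closure under converse, also $\Psi_{\textbf{Spek}}^\dag$) and is closed under composition and Cartesian product, it is closed under the bending of legs, so it suffices to exhibit just one member of the diagram equivalence class of $\delta_{\textbf{Spek}}$ inside it. Bending the input leg of $\delta_{\textbf{Spek}}$ around (using this compact structure, recall $\eta_\four=\delta_{\textbf{Spek}}\circ\epsilon_{\textbf{Spek}}^\dag=\Psi_{\textbf{Spek}}$ from Proposition \ref{inducedCSprop}) yields the state
\[
\nu \;:=\; (1_\four\otimes\delta_{\textbf{Spek}})\circ\Psi_{\textbf{Spek}} \;:\; \one\rightarrow\four^3,
\]
which one computes to be the $8$-element subset $\{(1,1,1),(1,2,2),(2,1,2),(2,2,1),(3,3,3),(3,4,4),(4,3,4),(4,4,3)\}$ of $\four^3$. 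I would then check that $\nu$ is a bona fide epistemic state of a three-component toy system: it has $2^3$ elements, and a short finite inspection shows that each of its one- and two-system marginals again satisfies the knowledge balance principle; indeed $\nu$ is, up to single-system reversible transformations, the toy analogue of the GHZ state, which Spekkens exhibits explicitly among the valid three-system states. Given that $\nu$ and $\Psi_{\textbf{Spek}}^\dag$ both lie in the toy category and that category is closed under composition and Cartesian product, $\delta_{\textbf{Spek}}$ --- recovered from $\nu$ by composing with $\Psi_{\textbf{Spek}}^\dag$ along the appropriate pair of legs --- lies in it too.

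The main obstacle is precisely this last step: everything turns on $\nu$ genuinely being a valid toy state, and the toy theory's defining conditions are stated inductively and, beyond three systems, ambiguously. The saving grace is that $\nu$ involves only three systems, so Spekkens's explicit enumeration applies and the marginal verification is elementary; one need only be careful with the orientation conventions of the self-dual compact structure ($\four=\four^\ast$, $\eta_\four=\delta_{\textbf{Spek}}\circ\epsilon_{\textbf{Spek}}^\dag$) so that the leg-bending identities are used in the correct direction. With $\nu$ in hand the remaining four generators are, as indicated, immediate.
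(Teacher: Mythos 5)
Your proposal is correct and takes essentially the same route as the paper: the paper likewise dismisses all generators except $\delta_{\textbf{Spek}}$ as obvious and obtains $\delta_{\textbf{Spek}}$ by composing Spekkens's GHZ-like state with $\Psi_{\textbf{Spek}}^\dag$, which is exactly your leg-bending argument (your computed $\nu$ is that GHZ-like state). You merely spell out the details the paper leaves implicit.
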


\begin{proof}
The only generator for which this is less than obvious is $\delta_\textbf{Spek}$. This is formed by composing Spekkens's GHZ-like state (see section V of \cite{Spekkens}) with the relational converse of the state $\Psi_\textbf{Spek}$ defined in the proof above. 
\end{proof}

\begin{proposition}
All of the states and transformations derived by Spekkens in his original paper \cite{Spekkens} are morphisms of \textbf{MSpek}. When we restrict to states of maximal knowledge all of the states and transformations are morphisms of \textbf{Spek}. 
\end{proposition}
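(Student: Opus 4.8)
The plan is to follow Spekkens's own inductive construction, which produces the valid states and transformations only for systems of one, two and three elementary components, and to exhibit for each of them an explicit \textbf{Spek} (or, where ignorance is involved, \textbf{MSpek}) diagram realising it; the closed-form descriptions in Theorems \ref{phasedmorphisms}, \ref{externalnonphasedTh}, \ref{internalnonphasedTh} and \ref{MSpeknumbers} then serve as a consistency check. Together with the preceding propositions, which handle the reverse inclusion, this pins down \textbf{MSpek} as the process category.

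First I would settle the single-system case. The six epistemic states of maximal knowledge are the six two-element subsets of $\four$, and each is obtained from $\epsilon_\textbf{Spek}^\dag$ (the state $\ast\sim\{1,3\}$) by post-composition with a suitable permutation $\sigma_i\in S_4$, so all six lie in \textbf{Spek}. The unique state of non-maximal knowledge is $\four$ itself, which is exactly $\bot_\textbf{MSpek}$. Spekkens shows that the reversible single-system transformations are precisely the $24$ permutations of $\four$, which are generators of \textbf{Spek}, and that every measurement-update relation has the form $\psi\circ\psi^\dag$ with $\psi$ one of the six pure states; by closure under composition and relational converse these too are \textbf{Spek} morphisms.

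For two and three systems I would exploit that Spekkens does not merely posit his valid states and transformations but \emph{builds} them: starting from Cartesian products of single-system states, he acts with valid transformations and with the correlating operation behind his GHZ-type state. Since \textbf{Spek} and \textbf{MSpek} are closed under Cartesian product, composition and relational converse, it suffices to check that each primitive used in these constructions already lies inside. The only non-obvious one is the correlating operation, which --- exactly as in the proof that $\delta_\textbf{Spek}$ is a toy-theory operation --- is Spekkens's GHZ state composed with the converse of $\Psi_\textbf{Spek}$, i.e. the generator $\delta_\textbf{Spek}$ itself. Using compact closure of \textbf{MSpek} one may further reduce the transformation case to the state case via map-state duality. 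Throughout, I would match the resulting subsets of $\four^n$ against the block-and-signature form of Theorems \ref{externalnonphasedTh} and \ref{internalnonphasedTh} and against the cardinality list $2^n,2^{n+1},\dots,2^{2n}$ of Theorem \ref{MSpeknumbers}, and note that if $\bot_\textbf{MSpek}$ is never invoked --- which is the case exactly for the $2^n$-element, maximal-knowledge states --- everything stays within \textbf{Spek}, giving the second assertion.

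The main obstacle is that Spekkens's derivation from the three principles is not algorithmic (as already discussed in Section 2), so the argument is unavoidably an enumeration: one must read off his explicit tables and figures for $n\le 3$ and verify both that each listed state and transformation is assembled from the single-system data plus the correlating operation by products, composites and converses, and --- the more delicate half --- that no operation he lists escapes what those moves generate. This second point is cleanest to discharge not by hand but by comparison with the closed-form characterisation: any relation with the block, parity and cardinality structure forced by Theorems \ref{phasedmorphisms}, \ref{externalnonphasedTh}, \ref{internalnonphasedTh} and \ref{MSpeknumbers} is automatically a \textbf{Spek}/\textbf{MSpek} morphism, and Spekkens's states are readily seen to satisfy precisely these constraints.
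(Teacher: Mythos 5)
Your main line of argument is essentially the paper's own proof, which consists of the single sentence ``By inspection of \cite{Spekkens}'': one checks, case by case for $n\le 3$, that each of Spekkens's listed states and transformations is assembled from the single-system states, the permutations, the correlating (GHZ-derived) operation $\delta_\textbf{Spek}$ and, where ignorance enters, $\bot_\textbf{MSpek}$, using composition, Cartesian product and relational converse. Your constructive enumeration --- six pure single-system states as $\sigma_i\circ\epsilon_\textbf{Spek}^\dag$, the mixed state as $\bot_\textbf{MSpek}$, measurement updates as $\psi\circ\psi^\dag$, and the multipartite cases via closure plus map-state duality --- is exactly the content that ``inspection'' compresses, so on that score you are in agreement with the paper.

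One caveat on your final paragraph: you propose to discharge the verification by appealing to the claim that ``any relation with the block, parity and cardinality structure forced by Theorems \ref{externalnonphasedTh}, \ref{internalnonphasedTh} and \ref{MSpeknumbers} is automatically a \textbf{Spek}/\textbf{MSpek} morphism.'' Those theorems only establish the forward direction --- every \textbf{Spek} diagram yields a relation of that form --- and the paper nowhere proves the converse, i.e.\ that every relation of that form is realised by some diagram. As a sanity check on your hand enumeration the comparison is harmless, but as a substitute for exhibiting explicit diagrams it is not licensed by anything proved in the paper, so the burden of the proof still rests on the constructive half of your argument.
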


\begin{proof}
By inspection of \cite{Spekkens}.
\end{proof}

\begin{corollary}\label{Spekisclosure}
\textbf{MSpek} is the minimal closure under composition, Cartesian product and relational converse of the states and transformations described in \cite{Spekkens}. \textbf{Spek} is the minimal closure under these operations of the states of \cite{Spekkens} corresponding to maximal knowledge and the transformations which preserve them. 
\end{corollary}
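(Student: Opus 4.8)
The plan is to read the corollary off the three propositions that immediately precede it, so the work is mostly bookkeeping rather than new mathematics. First I would fix what ``minimal closure'' means: given any set $G$ of relations among the objects $\four^n$, there is a smallest subcategory of $\textbf{FRel}$ containing $G$ and closed under composition, Cartesian product and relational converse, since an arbitrary intersection of subcategories with these closure properties again has them and still contains $G$; call it $\langle G\rangle$. Write $G_{\mathrm{full}}$ for the collection of all states and transformations Spekkens lists in \cite{Spekkens}, and $G_{\mathrm{max}}\subseteq G_{\mathrm{full}}$ for the sub-collection of maximal-knowledge states together with the transformations that preserve them. The corollary then asserts $\textbf{MSpek}=\langle G_{\mathrm{full}}\rangle$ and $\textbf{Spek}=\langle G_{\mathrm{max}}\rangle$.

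I would prove $\textbf{MSpek}=\langle G_{\mathrm{full}}\rangle$ by two inclusions. For $\langle G_{\mathrm{full}}\rangle\subseteq\textbf{MSpek}$: by the third proposition above every element of $G_{\mathrm{full}}$ is already a morphism of $\textbf{MSpek}$, and $\textbf{MSpek}$ is by construction closed under the three operations (it is the dagger symmetric monoidal subcategory of $\textbf{Rel}$ generated by its generators, the dagger being relational converse), so the smallest closed collection containing $G_{\mathrm{full}}$ sits inside $\textbf{MSpek}$. For $\textbf{MSpek}\subseteq\langle G_{\mathrm{full}}\rangle$: $\textbf{MSpek}$ is generated under exactly these operations by the generators of Definition \ref{defSpek} together with $\bot_{\textbf{MSpek}}$, and the second proposition above says each such generator either lies in $G_{\mathrm{full}}$ or is obtained from elements of $G_{\mathrm{full}}$ by composition, Cartesian product and relational converse; hence every generator of $\textbf{MSpek}$ lies in $\langle G_{\mathrm{full}}\rangle$, and therefore so does everything they generate. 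Equality follows, and minimality is then automatic since $\langle G_{\mathrm{full}}\rangle$ is by definition the smallest closed collection containing $G_{\mathrm{full}}$.

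The \textbf{Spek} statement is the same argument with $G_{\mathrm{max}}$ for $G_{\mathrm{full}}$ and $\textbf{Spek}$ for $\textbf{MSpek}$, invoking the maximal-knowledge clauses of the second and third propositions. The only point requiring a genuine, if small, extra check --- and the place I expect the only real friction --- is the match between the generating data of $\textbf{Spek}$ and the maximal-knowledge data of the toy theory: one must confirm that the extra generator $\bot_{\textbf{MSpek}}:\one\rightarrow\four::\ast\sim\{1,2,3,4\}$ is precisely the total-ignorance state, so that dropping it is exactly right once we restrict to maximal knowledge, and conversely that the permutations of $S_4$, $\delta_{\textbf{Spek}}$ and $\epsilon_{\textbf{Spek}}$ are all built (via the construction in the proof of the second proposition, from Spekkens's GHZ-like state and $\Psi_{\textbf{Spek}}$) out of maximal-knowledge states and maximal-knowledge-preserving transformations. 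Since the first proposition already guarantees that the toy theory's process category is itself closed under the three operations, no further input is needed: the corollary is just the observation that this closed category is the syntactically generated one, namely $\textbf{MSpek}$ in general and $\textbf{Spek}$ when we restrict to maximal knowledge.
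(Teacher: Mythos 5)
Your proposal is correct and is precisely the argument the paper intends: the corollary is stated without an explicit proof because it is meant to follow immediately from the three preceding propositions via exactly the two inclusions you spell out (the propositions give containment of the Spekkens data in \textbf{MSpek} and of the \textbf{MSpek} generators in the closure of that data, and closure of both sides under the three operations does the rest). Your added care about the maximal-knowledge case --- checking that $\delta_{\textbf{Spek}}$ arises from the GHZ-like state and $\Psi_{\textbf{Spek}}$, both of which are maximal-knowledge data, and that $\bot_{\textbf{MSpek}}$ is exactly the total-ignorance state being dropped --- is the one detail the paper leaves implicit, and you have identified it correctly.
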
 

\begin{proposition} \label{SpekMSpekareKBP}
All states $\psi:\one\rightarrow\four^n$ of \textbf{MSpek} and \textbf{Spek} satisfy the knowledge balance principle on the system corresponding to $\four^n$ viewed as one complete system. All those of \textbf{Spek} satisfy the principle maximally. 
\end{proposition}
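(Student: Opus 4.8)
The plan is to derive this proposition as an immediate consequence of the cardinality counts already established in Theorems~\ref{MSpeknumbers}, \ref{phasedmorphisms}, \ref{externalnonphasedTh} and \ref{internalnonphasedTh}. Recall that, in its global form --- the first of Spekkens's three principles in Section~2, which is precisely what ``$\four^n$ viewed as one complete system'' refers to, as opposed to the marginal condition on subsystems --- the knowledge balance principle demands only that an epistemic state of the $n$-component system be a subset of $\four^n$ whose cardinality is one of $2^n, 2^{n+1}, \dots, 2^{2n-1}, 2^{2n}$, with the value $2^n$ singled out as the case of maximal knowledge. So the proposition is purely a counting statement about the number of $n$-tuples in a state, and no new combinatorics is required.

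For \textbf{MSpek} there is essentially nothing to prove: Theorem~\ref{MSpeknumbers} already states that every state $\psi : \one \rightarrow \four^n$ of \textbf{MSpek} is a subset of $\four^n$ of cardinality $2^n, 2^{n+1}, \dots, 2^{2n-1}$ or $2^{2n}$, which is verbatim the global knowledge balance principle. I would simply quote it.

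For \textbf{Spek} I would split into the three cases handled by the structure theorems, according to the shape of a defining diagram. If $\psi$ corresponds to a phased diagram, Theorem~\ref{phasedmorphisms} gives $|\psi| = 2^n$. If $\psi$ corresponds to an unphased diagram with no internal zones, condition~(1) of Theorem~\ref{externalnonphasedTh} again gives $|\psi| = 2^n$. If $\psi$ corresponds to an unphased diagram with internal zones, Theorem~\ref{internalnonphasedTh} says $\psi$ either has $|\psi| = 2^n$ or equals $\emptyset$. Hence in every non-empty case $|\psi| = 2^n$, the maximal-knowledge value, so every \textbf{Spek} state satisfies the knowledge balance principle and satisfies it maximally.

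The one point that genuinely needs care --- and the only place the argument is more than a citation --- is the empty relation $\emptyset \subseteq \four^n$, which by Theorem~\ref{internalnonphasedTh} really is a morphism of \textbf{Spek} (hence of \textbf{MSpek}), yet has cardinality $0$, outside the permitted range. I would dispose of this by observing that $\emptyset$ is not an epistemic state of the toy theory in the first place: Spekkens's epistemic states have cardinality at least $2^n$ by definition, and $\emptyset$ is the ``impossible'' preparation produced by mutually inconsistent post-selection or measurement constraints --- exactly analogous to the zero vector in \textbf{Hilb}, which is a perfectly good morphism but not a (normalised) state. Restricting, as one must, to bona fide epistemic states, the proposition holds as written; if preferred, one can instead state the conclusion for the states of \textbf{MSpek} and \textbf{Spek} of non-zero cardinality, for which the argument above is complete.
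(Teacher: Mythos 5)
Your proposal is correct and follows essentially the same route as the paper: the paper's proof likewise translates the knowledge balance principle into the cardinality condition $|\psi|\in\{2^n,2^{n+1},\dots,2^{2n}\}$ (with $2^n$ as the maximal-knowledge case) and then simply cites Theorem~\ref{MSpeknumbers} for \textbf{MSpek} and, implicitly, the $2^n$ count from the structure theorems for \textbf{Spek}. The one place you go beyond the paper is the empty relation: the paper's proof is silent about $\emptyset$, which Theorem~\ref{internalnonphasedTh} admits as a \textbf{Spek} morphism of type $\one\rightarrow\four^n$ yet whose cardinality lies outside the permitted range, so your explicit exclusion of it as a non-state (the analogue of the zero vector in \textbf{Hilb}) is a small but genuine improvement in rigour over the published argument.
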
 

\begin{proof}
Recall that the knowledge balance principle requires that we can know the answer to at most half of a canonical question set. A system with $n$ elementary components has $2^{2n}$ ontic states. A canonical set for such a system consists of $2n$ questions, each answer to a question halving the number of possibilities for the ontic state. Thus, we know the answer to $m$ such questions ($m = 0,\dots,n$), iff our epistemic state is a subset of $\four^n$ with $2^{2n-m}$ elements. We conclude from theorem \ref{MSpeknumbers} that all states of \textbf{MSpek} satisfy the knowledge balance principle on the system as a whole. We conclude  that all states of \textbf{Spek} correspond to the maximum knowledge about the system as a whole consistent with the knowledge balance principle. 
\end{proof}

\begin{proposition}\label{subsysareKBPtoo}
All states $\psi:\one\rightarrow\four^n$ of \textbf{MSpek} and \textbf{Spek} satisfy the knowledge balance principle on every subsystem of the system corresponding to $\four^n$. 
\end{proposition}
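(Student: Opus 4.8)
The plan is to reduce this subsystem statement to the whole-system statement already proved as Proposition~\ref{SpekMSpekareKBP}, by showing that marginalising a \textbf{MSpek} state onto any subsystem is itself a categorical operation that stays inside \textbf{MSpek}. So the first step is to recall the definition of the marginal: for a state $\psi\subseteq\four^n$ and a choice of $m$-component subsystem indexed by a set $S\subseteq\{1,\dots,n\}$, the marginal epistemic state on that subsystem is the projection $\{\,x|_S \mid x\in\psi\,\}\subseteq\four^m$ obtained by deleting from each tuple of $\psi$ the coordinates not belonging to $S$.

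Next I would observe that this projection is realised by post-composing $\psi:\one\rightarrow\four^n$ with the relation $\four^n\rightarrow\four^m$ which is the Cartesian product (suitably re-bracketed and permuted using the associativity and symmetry isomorphisms) of the identity $1_\four$ on each retained factor and the morphism $\bot_\textbf{MSpek}^\dag:\four\rightarrow\one$, $\{1,2,3,4\}\sim\{\ast\}$, on each deleted factor. A direct check of the relational composition shows the composite relation $\one\rightarrow\four^m$ is exactly the marginal: since every element of a deleted factor is related to $\ast$ by $\bot_\textbf{MSpek}^\dag$, the composite relates $\ast$ to $x|_S$ precisely when some $x\in\psi$ restricts to it. Because $\bot_\textbf{MSpek}$ is a generator of \textbf{MSpek}, and \textbf{MSpek} is closed under relational converse, composition, Cartesian product, identities and the structural isomorphisms, this composite morphism lies in \textbf{MSpek}; hence so does the marginal, which is therefore a \textbf{MSpek} state of type $\one\rightarrow\four^m$. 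The same argument applies verbatim to a state of \textbf{Spek}, using only \textbf{Spek}$\subseteq$\textbf{MSpek}: the marginal of a \textbf{Spek} state need not be a \textbf{Spek} state, but it is certainly a \textbf{MSpek} state, which is all we need here since the subsystem claim asks only for the (non-maximal) knowledge balance principle. Applying Proposition~\ref{SpekMSpekareKBP} to this marginal state — a \textbf{MSpek} state on $\four^m$ regarded as a complete system — then yields that it satisfies the knowledge balance principle on $\four^m$, which is exactly the assertion that $\psi$ satisfies the knowledge balance principle on the chosen subsystem; since $S$ was arbitrary, every subsystem is covered.

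The only place where care is needed — and the step I would regard as the main (minor) obstacle — is the identification in the middle paragraph: one must verify that ``delete these coordinates'' coincides with composing with $\bot_\textbf{MSpek}^\dag$ on the corresponding factors, rather than with $\epsilon_\textbf{Spek}$, which would instead select particular coordinate values, and that the re-association and re-ordering of factors needed to fit the deleted factors into tensor position are available as \textbf{MSpek} morphisms. Both are immediate from the definition of \textbf{MSpek} and of composition in \textbf{Rel}, but they are the points on which the reduction hinges; everything else is a direct appeal to closure of \textbf{MSpek} under its operations and to Proposition~\ref{SpekMSpekareKBP}.
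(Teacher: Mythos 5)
Your proposal is correct and follows essentially the same route as the paper: the paper likewise realises marginalisation by capping the discarded external legs with $\bot_\textbf{MSpek}$ (its converse, acting on outputs), notes the resulting morphism is in \textbf{MSpek}, and then invokes Proposition~\ref{SpekMSpekareKBP}. Your extra care about using $\bot_\textbf{MSpek}^\dag$ rather than $\epsilon_\textbf{Spek}$, and about the marginal of a \textbf{Spek} state landing only in \textbf{MSpek}, merely makes explicit what the paper leaves implicit.
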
 
 
\begin{proof}
Given an epistemic state $\psi\subset\four^n$ of a composite system with $n$ elementary components, the `marginal' state on some subsystem is obtained from $\psi$ by deleting from the tuples of $\psi$ the components corresponding to the elementary systems which are not part of the subsystem of interest. Suppose this epistemic state corresponds to a \textbf{Spek} or \textbf{MSpek} diagram, $D$. The elementary systems which are not part of the subsystem correspond to a certain collection of external legs of $D$, and, by lemma \ref{relationcapping}, if we cap these with the \textbf{MSpek} generator $\bot_\textbf{MSpek}$, the effect on the state $\psi$ is exactly as just described. Composing a \textbf{Spek} or \textbf{MSpek} morphism with $\bot_\textbf{MSpek}$ yields some morphism of \textbf{MSpek}, which by proposition \ref{SpekMSpekareKBP} satisfies the knowledge balance principle. 
\end{proof}

From corollary \ref{Spekisclosure} and propositions \ref{SpekMSpekareKBP} and \ref{subsysareKBPtoo}  we reach two key conclusions: 

\begin{itemize}
\item The states and transformations derived by Spekkens in \cite{Spekkens} for systems of up to three components are all consistent with the knowledge balance principle. 
\item The process category of the toy theory must, at least, contain all of the morphisms of \textbf{MSpek}. 
\end{itemize}

The second conclusion begs the question, could \textbf{MSpek} be a strict sub-category of the process category of the toy theory i.e. could the toy theory contain operations not contained in \textbf{MSpek}? It is
 difficult to answer this question, since, as discussed at the end of section 2 it is not clear what the rigorous definition of the toy theory is, or whether there is an unambiguous way to extend it beyond three systems. Certainly, \textbf{MSpek} is the process category of a theory which coincides with Spekkens's theory up to the case of three qubits, and whose states and transformations are bound to satisfy the three rules of section 2 (the first two rules by propositions \ref{SpekMSpekareKBP} and \ref{subsysareKBPtoo}, and the third simply by its definition as the closure under composition of a set of generators). It is in this sense that we earlier remarked that \textbf{MSpek} is a strong candidate for the process category of the toy theory. 

\section{Conclusion and outlook}

We achieved our goal stated in the abstract, that is, to provide a rigorous mathematical description of Spekkens' toy theory, which proves its consistency.  This was established both in terms of generators for a dagger symmetric monoidal subcategory of \textbf{FRel}, consisting of symmetries for the elementary system and a basis structure (and nothing more!), as well as in terms of an explicit description of these relations as in Theorems \ref{internalnonphasedTh}, \ref{externalnonphasedTh} and \ref{MSpeknumbers}.

This description meanwhile already has proved to be of great use, for example, in pinpointing what the essential structural difference is between the toy theory and the relevant fragment of quantum theory.  In joint work with Spekkens in \cite{CES} we showed that the key difference between the toy theory and relevant fragment of quantum theory is the \em phase group\em, a group that by pure abstract nonsense can be attributed to each basis structure.  In the case of the toy theory this phase group is $Z_2\times Z_2$ while in the case of the relevant fragment of quantum theory it is $Z_4$.  One can then show that it is this difference that causes the toy theory to be local, while the relevant fragment of quantum theory is non-local.  

In this context, one may wonder wether there is a general categorical construction which would turn a `local theory' like \textbf{Spek} into a non-local one.  We also expect that the construction in this paper can be fairly straightforwardly extended beyond qubit  theories, for example to qutrits \cite{SpekkensTris}. 

\bibliographystyle{amsplain}

\end{document}